\newif\ifarxiv\arxivtrue
\definecolor{ForestGreen}{RGB}{34,139,34}
\ifarxiv\bibliographystyle{abbrv}\fi
\def\input{./#.tex}1{\input{./#1.tex}}
\def\eat#1{}
\newtheorem{thm}{Theorem}[section]
\newtheorem{defi}[thm]{Definition}
\newtheorem{lem}[thm]{Lemma}
\newtheorem{prop}[thm]{Proposition}
\newtheorem{cor}[thm]{Corollary}
\def\Qref#1{\splitref#1@}
\def\splitref#1:#2@{\thing#1@\;\ref{#1:#2}}
\def\thing#1#2@{\ifx#1sSect.\else\ifx#1fFig.\else\ifx#1TTable\else\ifx#1tThm.\else\ifx#1dDef.\else\ifx#1lLem.\else\ifx#1cCor.\else\ifx#1pProp.\fi\fi\fi\fi\fi\fi\fi\fi}
\def\tr{\operatorname{tr}}
\def\idty{{\leavevmode\rm 1\mkern -5.4mu I}} %  unit operator
\def\id{{\rm id}}
\def\Rl{{\mathbb R}}\def\Cx{{\mathbb C}}
\def\Ir{{\mathbb Z}}\def\Nl{{\mathbb N}}
\let\eps\varepsilon
\def\norm #1{\Vert #1\Vert}
\def\mod{{\mathop{\rm mod}\nolimits}}
\def\ind{{\mathop{\rm ind}\nolimits}\,}
\def\braket#1#2{\langle #1,#2\rangle}
\def\brAket#1#2{\langle #1\vert#2\rangle}
\def\ket #1{\vert#1\rangle}
\def\ketbra #1#2{{\vert#1\rangle\langle#2\vert}}
\def\kettbra#1{\ketbra{#1}{#1}}
\def\tr{\mathop{\rm tr}\nolimits}
\def\abs#1{\vert#1\vert}
\def\rank{{\mathop{\rm rank}\nolimits}\,}
\let\veps\varepsilon
\def\inv{^{-1}}
\def\re{\Re e}
\def\im{\Im m}
\def\IM{\im}
\def\calk#1{\lceil#1\rceil}
\def\dff{\bf} %  for in-text definitions
\def\AA{{\mathcal A}}{{\def\HH{{\mathcal H}}\def\KK{{\mathcal K}}

\makeatletter
\newcommand{\raisemath}[1]{\mathpalette{\raisem@th{#1}}}
\newcommand{\raisem@th}[3]{\raisebox{#1}{$#2#3$}}
\makeatother

\def\ig{\mathbf I} % index group for representations
\def\six{\mathop{\mathrm s\mkern-1mu\mathrm i}\nolimits}
\DeclareRobustCommand{\sixnp}{\mathop{\mathrm s\mkern-1mu \textup \i}\nolimits}
\DeclareRobustCommand{\sixR}{\mathop{\hbox{\raisebox{.45em}{$\scriptstyle\rightharpoonup$}}\hspace{-.9em}\sixnp}}
\def\sixL{\mathop{\hbox{\raisebox{.45em}{$\scriptstyle\leftharpoonup$}}\hspace{-.9em}\sixnp}}
\def\tsix{{\widetilde\six}}
\def\sixrel#1:#2{\six(#1{:}#2)}
\def\sixmat#1{\left(\begin{array}{cc} \six_+(#1_L)&\six_+(#1_R)\\ \six_-(#1_L)&\six_-(#1_R)\end{array}\right)}%

% symmetry macros

% symmetry macros
\def\romkern{\kern-1pt}
\def\rom#1{\textnormal{I\if#11\else\romkern I\if#12\else\romkern I\fi\fi}}
\def\symA{\textnormal{\textsf A}\xspace}
\def\symD{\textnormal{\textsf D}\xspace}
\def\symC{\textnormal{\textsf C}\xspace}
\def\symAI{\textnormal{\textsf A}\rom1\xspace}
\def\symAII{\textnormal{\textsf A}\rom2\xspace}
\def\symAIII{\textnormal{\textsf A}\rom3\xspace}
\def\symBDI{\textnormal{\textsf{BD}}\rom1\xspace}
\def\symCI{\textnormal{\textsf C}\rom1\xspace}
\def\symCII{\textnormal{\textsf C}\rom2\xspace}
\def\symDIII{\textnormal{\textsf D}\rom3\xspace}
\def\symS{\textnormal{\textsf S}\xspace}
%\mbox{{\textsf S}}} % generic symmetry type
\def\igS{\ig(\symS)}   %%% MAKE GLOBAL REPLACE
\def\igtiS{\ig_{\rm ti}(\symS)}

%old macros:
%\def\romkern{\kern-1pt}
%\def\rom#1{\uppercase\expandafter{\romannumeral #1\relax}}
%%\def\rom#1{\rm{I\if#11\else\romkern I\if#12\else\romkern I\fi\fi}}
%\def\symA{\mbox{{\sf A}}}
%\def\symD{\mbox{{\sf D}}}
%\def\symC{\mbox{{\sf C}}}
%\def\symAI{\mbox{{\sf A}\rom1}}
%\def\symAII{\mbox{{\sf A}\rom2}}
%\def\symAIII{\mbox{{\sf A}\rom3}}
%\def\symBDI{\mbox{{\sf BD}\rom1}}
%\def\symCI{\mbox{{\sf C}\rom1}}
%\def\symCII{\mbox{{\sf C}\rom2}}
%\def\symDIII{\mbox{{\sf D}\rom3}}
%\def\symS{\mbox{{\sf S}}} % generic symmetry type
%\def\igS{\ig(\symS)}   %%% MAKE GLOBAL REPLACE

\def\indF{\ind_{\mathrm F}}
\def\indFR{\hbox{$\ind\hspace{-1.2em}\raisebox{.45em}{$\scriptstyle\rightharpoonup$}\hspace{.5em}$}}
\def\indFL{\hbox{$\ind\hspace{-1.2em}\raisebox{.45em}{$\scriptstyle\leftharpoonup$}\hspace{.5em}$}}
 % parity of dimension

\def\Pl#1{P_{<#1}}
\def\Pg#1{P_{\geq#1}}

%Notations for the basic symmetries
\ifarxiv\else\fi

\def\ph{\eta}\def\rv{\tau}\def\ch{\gamma}
\def\tph{{\widetilde\ph}}\def\trv{{\widetilde\rv}}\def\tch{{\widetilde\ch}}
\def\trh{{\widetilde\rho}}

\def\Wh{\widehat W} % Fourier transformed walk
 % continuation to disk for decay

 % Fourier transformed operator

\def\wind{{\mathop{\rm wind}}}

\def\Vcan{V_{\rm can}}
%upper band projection
\def\pfaff{{\mathop{\rm pf}\nolimits}}

\def\s{\strut\quad\strut}

\newcommand{\pf}{\operatorname{pf}}

% hide proofs if not needed, e.g. for structuring
%\usepackage{environ}
%\NewEnviron{killcontents}{}
%\let\proof\killcontents
%\let\endproof\endkillcontents

\usepackage{hyperref}
\begin{document}

% title etc.
\title{The topological classification of one-dimensional symmetric quantum walks
}
\nopagebreak

\author{C. Cedzich}
\affiliation{Institut f\"ur Theoretische Physik, Leibniz Universit\"at Hannover, Appelstr. 2, 30167 Hannover, Germany}
\author{T. Geib}
\affiliation{Institut f\"ur Theoretische Physik, Leibniz Universit\"at Hannover, Appelstr. 2, 30167 Hannover, Germany}
\author{F.~A. Gr\"unbaum}
\affiliation{Department of Mathematics, University of California, Berkeley CA 94720}
\author{C. Stahl}
\affiliation{Institut f\"ur Theoretische Physik, Leibniz Universit\"at Hannover, Appelstr. 2, 30167 Hannover, Germany}
\author{L. Vel\'azquez}
\affiliation{Departamento de Matem\'{a}tica Aplicada \& IUMA,  Universidad de Zaragoza,  Mar\'{\i}a de Luna 3, 50018 Zaragoza, Spain}
\author{A.~H. Werner}
\affiliation{{QMATH}, Department of Mathematical Sciences, University of Copenhagen, Universitetsparken 5, 2100 Copenhagen, Denmark,}
\affiliation{{NBIA}, Niels Bohr Institute, University of Copenhagen, Denmark}
\author{R.~F. Werner}
\affiliation{Institut f\"ur Theoretische Physik, Leibniz Universit\"at Hannover, Appelstr. 2, 30167 Hannover, Germany}

\begin{abstract}
We give a topological classification of quantum walks on an infinite 1D lattice, which obey one of the discrete symmetry groups of the tenfold way, have a gap around some eigenvalues at symmetry protected points, and satisfy a mild locality condition. No translation invariance is assumed. The classification is parameterized by three indices, taking values in a group, which is either trivial, the group of integers, or the group of integers modulo 2, depending on the type of symmetry. The classification is complete in the sense that two walks have the same indices if and only if they can be connected by a norm continuous path along which all the mentioned properties remain valid. Of the three indices, two are related to the asymptotic behaviour far to the right and far to the left, respectively. These are also stable under compact perturbations. The third index is sensitive to those compact perturbations which cannot be contracted to a trivial one.
The results apply to the Hamiltonian case as well. In this case all compact perturbations can be contracted, so the third index is not defined. Our classification extends the one known in the translation invariant case, where the asymptotic right and left indices add up to zero, and the third one vanishes, leaving effectively only one independent index. When two translation invariant bulks with distinct indices are joined, the left and right asymptotic indices of the joined walk are thereby fixed, and there must be eigenvalues at $1$ or $-1$ (bulk-boundary correspondence). Their location is governed by the third index.  We also discuss how the theory applies to finite lattices, with suitable homogeneity assumptions.
\end{abstract}

\pacs{03.65.Vf  %Topological phases (quantum mechanics)
, 03.65.Db  %	Functional analytical methods
}

\maketitle

%\ifarxiv
%{}
%\else
%%\nocite{supplement}
%\fi

% abstract
\section{Introduction}
The classification of quantum lattice systems according to ``topological phases'' is currently an area of intensive research \cite{kitaevPeriodic,KitaevLectureNotes,MPSphaseI,MPSphaseII,Schnyder1,Schnyder2,HasanKaneReview,KaneMeleQSH,KaneMeleTopOrder,ZhangTopologicalReview}. A basic observation, called bulk-boundary correspondence,  is that this classification becomes experimentally tangible when two regions are joined along an interface: when dynamical laws of the two regions belong to different classes, new ``topologically protected'' modes appear along the interface, which may be absent if the laws are different, but in the same class. The typical setting is that of free Fermions, characterized by their effective one-particle description,  with additional discrete symmetries, and a spectral gap condition. The groups whose elements are expected to label the classes are known in a large variety of dimensions and symmetry classes \cite{kitaevPeriodic}. In the translation invariant case this amounts to a classification task for vector bundles over the quasi-momentum space. However,  no clear picture of the general non-translation invariant setting is emerging from the heuristic literature. There is, however, a growing interest in the Mathematical Physics literature \cite{SchulzBaldesBook,SchulzZ2,Schulz2016index,SchulzNonTechnicalOverview,GawedzkIIndex,Thiang,Graf}, so the situation is improving.

Unfortunately, this literature does not cover a closely related kind of system known as quantum walks \cite{Grimmet,electric,QDapproach,TRcoin}. These are simply the discrete time analog of Hamiltonian systems, with the dynamics given by a unitary ``one-step'' evolution operator $W$. Physically, walks are realized by periodically driven systems \cite{Karski:2009,Gensketal,Schreiber:2010cl,TopSilberhorn,GawedzkIndex}, for which $W$ is the evolution operator after one driving period.  Hence these systems are also known as Floquet systems. The classification problem for these systems can be posed in analogy to the Hamiltonian case.  An indication that this might require more than a simple translation comes from the case without symmetries and gap conditions. For Hamiltonian systems this classification is trivial. For walks, however, there is an integer valued homotopy invariant \cite{OldIndex}. For a translation invariant system this is the total winding number of the energy bands on a torus whose coordinates are the quasi-energy and the quasi-momentum. Another observation is that in the walk case we have to be more careful in specifying what kind of perturbations we consider. The implicit claim of the phrase ``topologically protected mode'' is that such modes will appear independently of how the crossover between the two bulk phases is designed. This would suggest a stability of these modes under arbitrary local perturbations in the interface region. In the Hamiltonian case any two such crossover designs are continuously connected, so stability against local perturbations is implied by stability against continuous deformations. This is not true in the walk case, and indeed a main theme of our work is to explore the consequences of the existence of local perturbations which cannot be achieved by a sequence of norm-small modifications.

Throughout this work we will consider walks on a one-dimensional doubly infinite lattice. Within this confined setting we have attempted to go for the maximum generality in which the basic questions make sense and can be answered naturally. No translation invariance whatsoever is assumed. Indeed such an assumption would make it impossible to discuss the joining of two different bulk phases. In contrast to much of the literature, where only walks with strictly finite maximal jump length are considered, we allow matrix elements of the walk operator to decrease rather slowly with distance. In the translation invariant case, where decay of matrix elements translates to smoothness in momentum space, our condition turns out to be equivalent to mere continuity \cite{UsOnTI}. That is, continuous band structures with non-differentiable kinks are allowed. Throughout, we assume a spectral gap, but in order to discuss the protected eigenvalues appearing in the gap, we relax this condition to a gap only in the essential spectrum. For the discrete symmetries we followed the literature to restrict consideration to the so-called tenfold way \cite{Altland-Zirnbauer}. In many works in the literature this leads to a proliferation of case distinctions. We have tried to find the concepts which allow a uniform treatment of the ten symmetry types with as few case distinctions as possible. The main ingredient for this is an apparently new way to arrive at the classifying index groups. This is a completely elementary group theoretical construction, which does not require any  K-theory as used in \cite{kitaevPeriodic,Thiang,SchulzBaldesBook,SchulzZ2,Schulz2016index,SchulzNonTechnicalOverview}. This makes the paper self-contained and, hopefully, accessible to a wider audience. An announcement of some basic results was given in a letter \cite{letter}.

The main results of our paper are the following: for any walk $W$ in our setting we define three indices denoted \hbox{$\sixL(W)$}, \hbox{$\sixR(W)$}, and \hbox{$\six_-(W)$}, which are elements of the index group belonging to the symmetry type. These characterize the walk up to homotopy. That is, these indices coincide for two walks if and only if they can be connected by a norm continuous path of walks, each of which satisfies the assumptions of our setting. $\sixL(W)$ and $\sixR(W)$ are also stable under arbitrary compact (in particular, local) perturbations. $\sixL(W)$ can be computed if the behaviour of $W$ is only known far to the left (resp.\ $\sixR(W)$ far to the right). These indices can also be defined in the Hamiltonian case, which is covered by our theory as well. Indeed, we confirm and generalize the heuristic claims made for the Hamiltonian case. However, the third index, which is a special feature of the walk case,  seems to have been largely missed so far. There were some indications of additional invariants, because in the walk case one has two gaps rather than one \cite{Asbo1,Asbo2}. However, the homotopy-stable combination furnished by $\six_-(W)$ was not identified. This index gives a complete classification of general compact perturbations modulo contractible ones. That is, a compact perturbation can be contracted to the identity in the set of compact perturbations iff it leaves this index unchanged.  One of our tools, needed especially for establishing the completeness of the index invariants, is the statement that (apart from certain trivial cases) all walks can be decoupled gently, i.e., continuously deformed to a walk in which the left half and the right half of the system do not interact.

Our paper is organized as follows: in the remainder of this introductory chapter we provide a detailed description of the setting we choose to work in and give an overview of the results obtained. In \Qref{sec:groups}, after introducing the notion of ``symmetry'' and  ``symmetry type'' in our setting, we define the crucial assumption of ``essential gap'' around spectral points which are invariant under the symmetries. This assumption allows us to define the symmetry indices. The index groups are computed by elementary group theoretical methods and formulas for calculating the invariants are provided. In \Qref{sec:twosettings} we explain the important distinction between the symmetry indices ($\sixR$, $\sixL$) stable under both compact and contractible perturbations, treated in \Qref{sec:indices}, and the index which is invariant only under homotopy ($\six_-$), which is treated in \Qref{sec:homotopy-indices}. These two kinds of indices require rather different methods, and even different natural settings. The translation invariant examples fit into the first category and are treated in \Qref{sec:examples}. This is needed to discuss the bulk-boundary phenomenon (\Qref{sec:bulkboundary}).
In \Qref{sec:locpert} we provide the classification of compact non-contractible perturbations. The existence of decouplings, i.e., deformations of a walk into another one in which right and left half do not interact, is studied in \Qref{sec:decoup}. By an explicit construction we show that the indices given are complete, i.e., that walks with equal indices can be deformed into each other (\Qref{sec:complete}). Finally, in \Qref{sec:finite} we explain how, even though our theory assigns zero indices to finite systems, one may extract non-trivial results for this case of prime physical interest.

\subsection{Setting}\label{sec:firstsetting}
We state here the complete set of assumptions  of our theory.
Where the notions involved are defined only later in the paper we refer to the appropriate definition or section.

The  Hilbert space for the quantum systems under consideration has the form
\begin{equation}\label{Hcells}
\HH=\bigoplus_{x=-\infty}^\infty\HH_x,
\end{equation}
where each $\HH_x$ is of finite, non-vanishing dimension. The labels $x$ represent the position of a particle, and $\HH_x$ some internal degrees of freedom. We refer to $\HH_x$ as the {\dff cell} at $x$.
For every $a\in\Ir$, we denote by $\Pg a$ the projection onto the subspace
$\bigoplus_{x\geq a}\HH_x$ and, analogously, the complement $\Pl a=\idty-\Pg a$. We often abbreviate $P=\Pg0$.

We assume that certain (anti-)unitary operators on $\HH$ are given, which
represent a discrete set of {\dff symmetries}. The symmetry types considered
here are described in detail in
\Qref{sec:abstypes} \& \ref{sec:typeclassification}, and are combinations
of the so-called particle-hole, time-reversal, and chiral symmetries. We assume
that the symmetries act locally, i.e., each symmetry operator is the direct sum
of operators acting in each cell. The action in each cell is assumed to be
``balanced'' in the sense of  \Qref{def:balancedrep}.

We call a unitary operator $W$ {\bf admissible}, if it satisfies the following conditions:
\begin{itemize}
	\item[(1)]  $W$ satisfies a certain commutation relation with each of the
	symmetries, which are specified with the symmetry types in
	\Qref{sec:abstypes}.
	\item[(2)]  $W$ is {\dff essentially gapped}, i.e., in a small neighborhood of the points $+1$ and $-1$, $W$ has only discrete eigenvalues with finite total multiplicity (see \Qref{sec:gaps})
\end{itemize}

Note that so far we did not use the cell structure. This is brought in to formulate the locality condition which makes a unitary operator $W$ a {\bf walk}: the standard assumption in many earlier papers was that in each time step the system can only jump a finite distance $L$, in which case we call $W$ {\bf strictly local} for emphasis. In that case, for every $a$, the operator $\Pg a-W^*\Pg aW$ has non-zero matrix elements only between finitely many cells around $a$. However, our theory also works if this is only approximately true, and $\Pg a-W^*\Pg aW$ is merely assumed to be a compact operator for some $a$, in which case we call $W$ {\bf essentially local}. This will be the  standing assumption in the current paper. We note that $\Pg a-W^*\Pg aW$ is compact for {\it all} $a$ iff that is true for some $a$: for every $b$ the difference between $\Pg a-W^*\Pg aW$ and $\Pg b-W^*\Pg bW$ is the finite rank operator $\Pg a-\Pg b-W^*(\Pg a-\Pg b)W$.

We stress that in our general setting \emph{no translation invariance} is assumed. For translation invariance to make sense there must be a unitary operator $T$ with $T\HH_x=\HH_{x+1}$, which hence also serves to identify all the cells $\HH_x$, making the Hilbert space isomorphic to $\ell^2(\Ir)\otimes\HH_0$ with $T$ acting as the standard shift in the first factor. When discussing translation invariant systems we will assume that all symmetry operations and, of course, $W$ commute with $T$. In that case, condition (2) is equivalent to requiring that $W$ is strictly gapped, i.e., has $\pm1$ lie in the resolvent set. The translation invariant case is an important reference case, and has a well known classification, which will be described in \cite{UsOnTI}.

Our task is the classification of admissible walks, in such a way that the classes are closed under certain perturbations. We consider the following:

\begin{defi}\label{def:pertsorts}
	Let $W_1,W_2$ be admissible walks. Then we say that
	\begin{itemize}
		\item[(1)] $W_2$ is a {\dff gentle} perturbation of $W_1$ (or that $W_1$ and $W_2$ are {\dff homotopic}), if there is a norm-continuous function $t\mapsto W(t)$ on the unit interval with $W(0)=W_1$, and $W(1)=W_2$, such that each $W(t)$ is admissible and essentially local.
		\item[(2)] $W_2$ is a {\dff local} perturbation of $W_1$ if $W_2-W_1$ is non-zero only on finitely many of the spaces $\HH_x\subset\HH$.
		\item[(3)] $W_2$ is a {\dff finite rank} perturbation of $W_1$ if $W_2-W_1$ is an operator of finite rank.
		\item[(4)] $W_2$ is a {\dff compact} perturbation of $W_1$ if $W_2-W_1$ is a compact operator on $\HH$.
	\end{itemize}
\end{defi}

Clearly, a local perturbation is of finite rank, because the cells are finite dimensional. The converse is not true because a rank one perturbation might have components in infinitely many cells. Furthermore, finite rank perturbations are compact. It will be crucial that the implication (2)$\Rightarrow$(1) fails: there are non-gentle local perturbations. It is interesting to note that this distinction is not needed at all in the Hamiltonian case. All definitions in \Qref{def:pertsorts} directly make sense in the Hamiltonian case, too. But suppose that $H_2$ is a compact perturbation of $H_1$. Then since the admissibility conditions are $\Rl$-linear in $H$, all the Hermitian operators
$H_t=(1-t) H_1 +t H_2$ with $t\in\Rl$ are also admissible for the symmetry, and since they are all compact perturbations of $H_1$, the essential spectrum is the same for all $t$, so the essential gap remains open. Hence we have a continuous admissible connection from $H_1$ to $H_2$, and all compact perturbations are gentle. Since stability under gentle perturbations is easily achieved, the Hamiltonian case is much more straightforward than the case of walks.

\subsection{Overview of results}

In \Qref{sec:groups} we construct, for every symmetry type $\symS$, an abelian group denoted by
$\igS$ which is called the index group of the type. To each finite dimensional
representation $\rho$ of the symmetry type we associate an element
$\six(\rho)\in\igS$. The construction is completely elementary, yet the
groups match those obtained from the K-theoretical classification of vector
bundles in the translation invariant case \cite{kitaevPeriodic}.

For admissible $W$, the symmetry operators leave the eigenspaces at $\pm1$
invariant. Since these are finite dimensional by virtue of the essential gap condition, the symmetry indices of the symmetry representations in these subspaces, denoted by $\six_+(W)$ and $\six_-(W)$, are well defined.
They are invariant under gentle perturbations (\Qref{pro:homoto}), but not under general local ones. However, their sum $\six(W):=\six_-(W)+\six_+(W)$ is invariant even under all compact perturbations.
Therefore, under a non-gentle compact perturbation eigenvalues may be swapped between the eigenspaces at $+1$ and $-1$. This effect completely characterizes compact perturbations up to gentle ones, i.e., a compact perturbation is gentle if and only if it has the same $\six_-(W)$ (and consequently $\six_+(W)$). This theory works independently of the cell structure, and is described in \Qref{sec:homotopy-indices}.

Further invariants, which do depend on the cell-structure are described in \Qref{sec:indices}. It is here that the restriction to a one-dimensional lattice system enters. These invariants are obtained by splitting the system into two halves. It turns out that the most efficient way to do this is by temporarily suspending the unitarity condition, and admitting essentially unitary operators, which are defined by the property that $WW^*-\idty$ and $W^*W-\idty$ are both compact. Then there is a simple way to split the system, namely to consider the block-diagonal operator  $W'=PWP\oplus(\idty-P)W(\idty-P)=:W'_L\oplus W'_R$, which is essentially unitary by our essential locality condition. The index $\six(W)$ is easily seen to extend to essentially unitary operators, so we get two quantities
$\sixL(W):=\six(W'_L)$ and $\sixR(W):=\six(W'_R)$. Since $W'_R$ depends continuously on $W$, these are homotopy invariants (\Qref{thm:sixProps}). They are also stable under compact perturbations, and are tail properties in the sense that $\sixR(W)=\six(\Pg a W\Pg a)$ does not depend on $a$, and can be computed as far to the right as desired. The same statements hold for the continuous time, i.e., Hamiltonian case, with $\six(H)$, $\sixR(H)$ and $\sixL(H)$ defined via the $0$-eigenspaces of the respective Hermitian operators. In particular, the bulk-boundary principle holds. Since all compact perturbations are contractible, no further subtleties arise, and no further invariants need to be considered.

How the system is split does not matter in this definition. In an earlier version of the theory \cite{letter} we used instead a {\dff decoupling} of $W$, i.e., a compact strictly unitary perturbation  $W''$ of $W$ commuting with $P$. The existence of gentle decouplings is an interesting question in its own right, which is established in \Qref{sec:decoup}. It turns out that the decoupling process is {\it not} homotopy stable, i.e., it may happen that there is a continuous path of walks connecting $W_1$ and $W_2$, and gentle decouplings $W_1''$ and $W_2''$, which are not homotopic in the set of decoupled walks (see the discussion in \Qref{sec:twosettings}). This phenomenon is closely related to the instability of $\six_-(W)$ with respect to compact perturbations. The existence of gentle decouplings is also needed to establish the completeness of the index triple
$\bigl(\sixL(W),\sixR(W),\six_-(W)\bigr)$ for homotopy equivalence in \Qref{sec:complete}.

Since the indices $\sixL(W)$ and $\sixR(W)$ are tail properties, they have, strictly speaking, nothing to say  about finite systems. On the other hand, physical systems are finite, and so are the numerical simulations supporting the bulk-boundary correspondence. In the final \Qref{sec:finite} we describe a simple principle by which our theory nevertheless gives non-trivial results also in the finite case. Roughly speaking, this requires a notion of homogeneity for the ``bulk'', which comes with a typical length. Then when the bulk systems are large compared to this length the predicted eigenvalues, with eigenfunctions near the boundary, do occur, albeit only close to $\pm1$ and not exactly at these values.

\begin{table}\centering
	\begin{tabular}{|l|c|c|}
		\multicolumn{1}{|c|}{index}	& \s definition\s & \multicolumn{1}{|c|}{ \s page\s}  \\ \hline
		$\six(\rho)$	& \Qref{prop:six} & \pageref{prop:six}\\
		$\six_\pm(W)$, $\six(W)$\hspace{0.2cm} & \Qref{def:sixOp} & \pageref{def:sixOp}\\
		$\sixL(W)$, $\sixR(W)$ & \Qref{sec:indiceshamwalk} & \pageref{sec:indiceshamwalk}\\
		$\sixrel W':W$ & \Qref{def:relInd} & \pageref{def:relInd}\\
		$\ind(W)$ & \Qref{sec:decoup} & \pageref{sec:decoup}\\
	\end{tabular}
	\caption{\label{fig:sixtab}Index notations used in this paper, where $\rho$ stands for a finite dimensional representation of the symmetry type, and $W$ for a walk operator. }
\end{table}

% group theoretical definition of sym.
\section{Group theoretical definition of the symmetry index}\label{sec:groups}

\subsection{Symmetry types}\label{sec:abstypes}
In this section we provide the basic analysis of the symmetries, and their impact on the structure of unitary or
Hermitian operators satisfying such symmetries. Since we later want to avoid boring case distinctions we describe the structure perhaps a bit more abstractly than absolutely needed, thus providing a language to treat all symmetry types under consideration, and perhaps a few more, in a uniform way. We begin with a compact description of the basic structure.

In every instance we investigate, the symmetries will be given by unitary or antiunitary operators and we are given an (essentially) unitary operator $W$ or a Hermitian operator $H$, which ``satisfies the symmetry''. The symmetries, their commutation relations between each other, their (anti-)unitary character and the commutation relations between the symmetries and the operator under investigation constitute a {\dff symmetry type}.

Rather than building the most general abstract structure of this description, let us be more specific. Every symmetry $\sigma$ under consideration will be an {\bf involution}, i.e., its action on operators ($X\mapsto \sigma X\sigma^*$)
squares to the identity. Thus, by Wigner's theorem \cite{Wignerbook}, $\sigma^2$ is a phase factor times the identity. The abstract group implemented by all the symmetries will either consist of  just the identity,  the identity and a single involution, or the Klein four-group, so that two involutions multiply to a phase factor times the third. Moreover, for each symmetry $\sigma$ we specify what it means that an operator $W$ or $H$ ``satisfies it'' or is {\dff admissible} for $\sigma$, namely either $\sigma W\sigma^*=W$ or $\sigma W\sigma^*=W^*$, resp.\
$\sigma H\sigma^*=H$ or $\sigma H\sigma^*=-H$.
Specifically, we consider one or all three of the following:
\begin{itemize}
	\item[] {\dff particle-hole symmetry} $\ph$, which is antiunitary  satisfying
                  $\ph W\ph^*=W$, resp.\ $\ph H\ph^*=-H$,
	\item[] {\dff time reversal symmetry} $\rv$, which is antiunitary satisfying
                  $\rv W\rv^*=W^*$, resp.\ $\rv H\rv^*=H$,
	\item[] {\dff chiral symmetry} $\ch$, which is unitary satisfying
                  $\ch W\ch^*=W^*$ resp.\ $\ch H\ch^*=-H$.
\end{itemize}
It is clear that if any two of these are part of the symmetry type, their product will be a symmetry of the third kind.

We will call a {\dff representation} of a symmetry type any collection of Hilbert space operators satisfying the
specified multiplication table, and (anti-)unitarity conditions.
Writing down a representation is also supposed to retain the information about how each symmetry is to act on unitary resp.\ Hermitian operators. Hence it makes sense to call an operator $W$ or $H$ {\dff admissible for the representation}, if it is admissible for all symmetry operators of the given representation. We note that,  as introduced  in \Qref{sec:firstsetting}, we generally use ``admissibility'' as including the condition of an essential gap (see also \Qref{sec:gaps}).

\subsection{Classification of symmetry types}\label{sec:typeclassification}
While we try to cover many symmetry types with as few case distinctions as possible, i.e., to allow ``general'' symmetry types, we do follow the literature in its typical restrictions in this regard. That is, we consider only symmetry groups formed out of the three kinds of symmetries described in the previous section, with up to four elements (counting the identity). Moreover, in the cases with four-element group, we take the three non-identical involutions to be one of each of the three different kinds described.

Since all symmetries are defined by their action on observables, we consider $\sigma$ and $\zeta\sigma$ with a phase $\zeta\in\Cx$, $\abs\zeta=1$ to represent the same symmetry. Therefore, we are free to adjust such phases in order to simplify the relations between the (anti-)unitary symmetry operators.
For a single involution we must have that $\sigma^2$ is a multiple of the identity. Hence if $\sigma$ is unitary we can adjust the phase so that $\sigma^2=\idty$; in the antiunitary case, from equating $\sigma^2\sigma=\sigma\sigma^2$, we must have $\sigma^2=\pm\idty$. This leaves three kinds of involutive symmetries: unitary, antiunitary with square $+\idty$, and antiunitary with square $-\idty$. These are clearly
distinguished geometrically by their action on operators and on the state space. In the simplest case (a qubit with the Bloch sphere as the state space) these correspond, respectively, to reflections around an axis (equivalent to a rotation around the axis by $\pi$), reflections along a plane through the origin, and the reflection at the origin. Including the case of no symmetry, this accounts for the first six entry lines in \Qref{Tab:sym}.

Going on to the cases with all three symmetries $(\ph,\rv,\ch)$ present, we get four more cases, distinguished by the signs of the squares of the antiunitary elements. Altogether we get the so called \textit{tenfold way} \cite{Altland-Zirnbauer} shown in \Qref{Tab:sym} together with their customary identifiers, and further information to be explained below. For this classification to be complete within its scope, we need to verify that the signs of the squares of antiunitary elements determine the symmetry type up to a phase convention, i.e., that we do not have to distinguish further subcases. This verification is done in the following Lemma.

\begin{lem}\label{lem:phases}
	For symmetry types that contain all three operations $\ph,\rv,\ch$ there is a distinguished phase convention which makes the
	three operators $\ph,\rv,\ch$ commute, and satisfy the relation $\ph\rv=\ch$.  With this convention the signs of $\rv^2$ and $\ph^2$  determine the entire multiplication table.
\end{lem}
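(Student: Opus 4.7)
The plan is to fix arbitrary lifts $\ph_0,\rv_0,\ch_0$ of the three abstract involutions and construct rephasings that simultaneously realize $\ph\rv=\ch$ as an operator identity and pairwise commutativity; from these two relations the full multiplication table reads off mechanically. The first key observation is that for antiunitary $\ph_0$ the rephasing $\ph_0 \mapsto \alpha\ph_0$ satisfies $(\alpha\ph_0)^2 = \alpha\bar\alpha\,\ph_0^2 = \ph_0^2$, so the signs $\epsilon_\ph,\epsilon_\rv\in\{\pm1\}$ defined by $\ph_0^2=\epsilon_\ph\idty$ and $\rv_0^2=\epsilon_\rv\idty$ are intrinsic to the symmetry type (Wigner gives $\sigma^2\in\Cx\idty$; the relation $\sigma\sigma^2=\sigma^2\sigma$ for antiunitary $\sigma$ forces $\sigma^2$ real). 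Since the abstract group is the Klein four-group, there exists a phase $c$ with $\ph_0\rv_0 = c\,\ch_0$.

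Next, I would derive the single scalar equation that governs commutativity. Using antiunitarity of $\ph_0$ to pull the real scalar $\epsilon_\rv$ through without conjugation, $(\ph_0\rv_0)(\rv_0\ph_0) = \ph_0\rv_0^2\ph_0 = \epsilon_\rv\ph_0^2 = \epsilon_\ph\epsilon_\rv$. Hence $\rv_0\ph_0 = \epsilon_\ph\epsilon_\rv\,(\ph_0\rv_0)^{-1}$, and the operators $\ph_0$ and $\rv_0$ commute if and only if $(\ph_0\rv_0)^2 = \epsilon_\ph\epsilon_\rv$, i.e.\ $c^2\ch_0^2 = \epsilon_\ph\epsilon_\rv$.

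To satisfy this, pick any square root $\delta$ of $\epsilon_\ph\epsilon_\rv/\ch_0^2$ and set $\ch := \delta\ch_0$, $\ph := (\delta/c)\ph_0$, $\rv := \rv_0$. Then $\ph\rv = (\delta/c)(c\,\ch_0) = \ch$ and $\ch^2 = \delta^2\ch_0^2 = \epsilon_\ph\epsilon_\rv$, so the previous step gives $\rv\ph = \ph\rv$; rephasing of course preserves $\epsilon_\ph,\epsilon_\rv$. This establishes existence of the convention, with residual freedom only in the sign of $\delta$ and a common global phase $(\ph,\rv)\mapsto(\alpha\ph,\alpha\rv)$ that changes nothing.

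Finally, once $\ph\rv=\rv\ph=\ch$ together with $\ph^2=\epsilon_\ph$, $\rv^2=\epsilon_\rv$ are in force, the rest of the table is forced: $\ch^2 = \ph\rv\ph\rv = \ph^2\rv^2 = \epsilon_\ph\epsilon_\rv$, $\ph\ch = \ph(\ph\rv) = \epsilon_\ph\rv = \ch\ph$, and $\rv\ch = (\ph\rv)\rv^{-1}\rv\cdot \text{etc.}= \epsilon_\rv\ph = \ch\rv$ by the same substitutions. The only real obstacle throughout is the bookkeeping of complex conjugations when passing phases through antiunitary operators; once that is handled carefully the whole argument is a short computation.
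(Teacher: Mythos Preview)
Your proof is correct and follows essentially the same route as the paper's: rephase to force $\ph\rv=\ch$ and $\ph\rv=\rv\ph$, then read off the remaining commutation relations. The paper's version is more terse---it simply observes that $\ph\rv$ and $\rv\ph$ implement the same symmetry, hence differ by a phase that can be absorbed into $\ph$, after which $\ch$ is set equal to $\ph\rv$---whereas you solve explicitly for the rephasing factors via the identity $(\ph\rv)(\rv\ph)=\epsilon_\ph\epsilon_\rv$, but the underlying idea is identical.
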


\begin{proof}
	Observe that the operators $\ph\rv$ and $\rv\ph$ both implement the same symmetry and therefore can only differ by a phase. By choosing a phase for $\ph$ we can therefore achieve $\ph\rv=\rv\ph$, and we adjust the phase of $\ch$ so that $\ph\rv=\ch$. Then 	$\ph\ch=\ph^2\rv=\rv\ph^2=(\ph\rv)\ph=\ch\ph$, and similarly $\rv\ch=\ch\rv$.
\end{proof}

We note that with this convention we can have that $\ch^2=-\idty$ although, with a different, perhaps more widespread convention this could be made to be $+\idty$. As the Lemma shows, one then needs to memorize the signs in the (anti-)commutation table of the symmetries. In the sequel we will stick to the convention described in the Lemma.

\begin{table}\begin{centering}
		\begin{tabular}{|c||c|c|c||c||c|c|}
			$\symS$  &$\ph^2$  &$\rv^2$  &$\ch^2$  & irreps &$\igS$ &$\six$
			\\
			\hline
			\symA    &         &         &         &$1$   & $0$ & \\[3pt]
			\symD    &$\idty$  &         &         &$1$   &$\Ir_2$  &$d\,\mod2$  \\[3pt]
			\symC    &$-\idty$ &         &         &$2$   &$0$  &
			\\[3pt]
			\symAI   &         &$\idty$  &         &$1$   &$0$  &
			\\[3pt]
			\symAII  &         &$-\idty$ &         &$2$   &$0$  &
			\\[3pt]
			\symAIII &         &         &$\idty$  &$1^+,1^-$   &$\Ir$    &$\tr\ch$
			\\[3pt]
			\symBDI  &$\idty$  &$\idty$  &$\idty$  &$1^+,1^-$   &$\Ir$    &$\tr\ch$
			\\[3pt]
			\symCI   &$-\idty$ &$\idty$  &$-\idty$ &$2$   &$0$  &
			\\[3pt]
			\symCII  &$-\idty$ &$-\idty$ &$\idty$  &$2^+,2^-$   &$2\Ir$   &$\tr\ch$
			\\[3pt]
			\symDIII &$\idty$  &$-\idty$ &$-\idty$ &$2$   &$2\Ir_2$ &$d\,\mod4$
		\end{tabular}\hspace{10pt}
		\caption{\label{Tab:sym}Symmetry types considered in this paper, with their generators and relations for their squares.
			Absence of an entry in the respective column means that the generator is not part of the type.
			The first column gives the Cartan classification \cite{Altland-Zirnbauer}.
			Irreducible representations (see \Qref{sec:irreps}) are labelled by their dimension and the sign of $\ch$ as a superscript, where applicable.
			$\igS$ is the range of the symmetry index $\six$ defined in \Qref{prop:six}. Here ``$0$'' is a shorthand for the abelian group $\{0\}$.
			The last column gives an explicit expression for $\six(\rho)$, where $d$ denotes the dimension of the representation.
		}
\end{centering}\end{table}

\subsection{Gaps and essential gaps}\label{sec:gaps}
When a chiral symmetry $\ch$ or a particle-hole symmetry $\ph$ is present, with every eigenvector $\psi$ of an admissible unitary $W$, say $W\psi=\omega\psi$, $\ch\psi$ (resp.\ $\ph\psi$) is also an eigenvector, but for the complex conjugate eigenvalue. More generally, the spectrum is invariant under complex conjugation. The two real points $\pm1$ therefore play a special role in that the respective eigenspaces are invariant under the symmetry operators. Much of our analysis rests on the analysis of these symmetry representations, and we will make crucial use of the property that they are finite dimensional and that the eigenvalues $\pm1$ are isolated. This will be our standing assumption, and conveniently covers both the case of gapped translation invariant systems,  and combinations of two bulk phases, which do develop eigenvalues at $\pm1$. We express it by saying that $W$ has an {\dff essential gap} at each of the points $\pm1$, and include this condition when calling a walk ``admissible''.

This is equivalent to saying that $\pm1$ are not in the essential spectrum of $W$, which is defined as the set of complex numbers such that the spectral projection of any neighbourhood is infinite dimensional. Yet another useful way to state the essential gap condition is to consider the {\dff Calkin algebra}, i.e., the quotient of the algebra of bounded operators by the two-sided, closed ideal of compact operators. Denoting by $\calk W$ the image of
$W$ in the Calkin algebra, we can say that $W$ is essentially gapped iff $\pm1$
are not in the spectrum of $\calk W$. This provides the quickest way to see that compact perturbations do not destroy the essential gap condition: when $W'-W$ is compact,  $\calk{W'}=\calk W$, so clearly the two have the same essential spectrum.

Throughout this paper we will use the qualification ``essential'' in the sense of ``up to compact operators'', in particular for ``essentially local'' (\Qref{sec:firstsetting}) and ``essentially unitary''(\Qref{sec:compactsetting}), although we also follow common usage to call ``essentially invertible'' operators ``Fredholm''. In all these cases it is useful to view the corresponding property as defined in terms of $\calk W$.

Of course, the same ideas apply to Hamiltonians. In that case the spectrum has to be symmetric with respect to the operation $E\mapsto-E$, and the distinguished point in the spectrum is $0$ (see \Qref{fig:gaps}).
Admissibility for Hamiltonians is taken to imply an essential gap at $0$.

\begin{figure}
	%\documentclass{article}
%\usepackage{tikz}
%\usetikzlibrary{arrows}
%\usetikzlibrary{decorations.pathreplacing}
%\begin{document}
	\begin{tikzpicture}[
	scale=1.5,
	font=\footnotesize,
	cont/.style={line width=3,green!80!black}
	]
	\def\hamx{2.8}
	\tikzset{
		rot/.style={rotate around={0:(\hamx,0)}}
	}
	\def\thsq{.08}
	\def\thp{.03}
	\colorlet{colorp}{red}		
	\colorlet{colorgap}{blue!80!black}
	\draw ({-1.0-5*\thsq},1) node(a) {a)};
	\draw ({\hamx-1.0},1) node(b) {b)};

	\draw[gray,rot] (\hamx,-1)  -- (\hamx,1);
	\draw[cont,rot] (\hamx,-1)  -- (\hamx,-.5);
	\draw[cont,rot] (\hamx,1)  -- (\hamx,.5);
	\foreach \y in {.4,.25,-.25,-.4}
	\draw[colorp,fill,thick,rot] (\hamx,\y) circle (\thp);

	\draw[colorgap,line width=1.5,rot] ({\hamx-5*\thsq},0) to +(10*\thsq,0);
	\draw [<->,line width=2,rot] ({\hamx+.2},.8) .. controls ({\hamx+.8},.5) and ({\hamx+.8},-.5) .. ({\hamx+.2},-.8);
	\draw[gray] (0,0) circle (1);		
	\draw[cont,domain=40:160] plot ({cos(\x)}, {sin(\x)});
	\draw[cont,domain=200:320] plot ({cos(\x)}, {sin(\x)});
	
	\draw[colorgap,line width=1.5] ({-1.0-5*\thsq},0) to (1+5*\thsq,0);
	
	\foreach \ang in {12,30,168,-12,-30,-168,-180}
	\draw[colorp,fill,thick] ({cos(\ang)}, {sin(\ang)}) circle (\thp);

	\draw [<->,line width=2] ({cos(80)}, {sin(80)-.03}) -- ({cos(80)}, {-sin(80)+.03});
	\end{tikzpicture}
%\end{document}
	\caption{\label{fig:gaps}
		(a) schematic spectrum of unitary quantum walk with band spectrum (green) and discrete spectrum (red dots) and
		essential gaps at $\pm1$. The arrow symbolizes the action of the symmetry operators $\ph,\ch$ on quasi-energies.
		We are interested in the eigenvalues on the symmetry axis indicated, around which we assume an essential gap.
		(b) spectrum of self-adjoint Hamiltonian marked analogously.}
\end{figure}
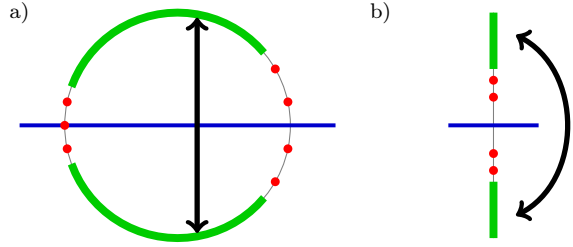

\subsection{Balanced representations and index group}\label{sec:balanced}

Under a norm-small perturbation of a unitary $W$ the eigenspace of $+1$ may change,
e.g., by splitting into branches. However, in a scenario where the deformations
must be admissible for a  symmetry type which forces spectra to be invariant
under complex conjugation, such branches must come in complex conjugate pairs.
Hence the parity of the dimension of the eigenspace will be constant.
We need to sharpen this argument, by making full use of the given symmetry type, and also by considering how the symmetry is represented on the eigenspace. We begin by defining the ``trivial'' representations, which will later be identified as those by which the eigenspace can change.

\begin{defi}\label{def:balancedrep}
	A symmetry representation $\rho$ is called {\dff balanced} if there exists a unitary operator $U$ or, equivalently, a Hamiltonian $H$, which is admissible for the symmetry representation and gapped.
\end{defi}

The ``non-triviality'' of a representation is then described by the following proposition in which we identify the group equivalence classes of representations modulo balanced ones, with the direct sum as addition. A direct way to express this is by a map $\six$ associating with every representation an element of an abelian group $\igS$.

\begin{prop}\label{prop:six} For every symmetry type $\symS$ there is an abelian group $\igS$ and a map ``$\six$'' taking any finite dimensional representation $\rho$ of the type
	to an element $\six(\rho)\in\igS$ such that
	\begin{itemize}
		\item[(0)] For every $j\in\igS$ there is a representation $\rho$ such that $j=\six(\rho)$.
		\item[(1)] $\six(\rho_1)=\six(\rho_2)$, if $\rho_1=U\rho_2U^*$ for a unitary operator $U$.
		\item[(2)] $\six(\rho_1\oplus\rho_2)=\six(\rho_1)+\six(\rho_2)$.
		\item[(3)] $\six(\rho)=0$ if and only if $\rho$ is balanced.
	\end{itemize}
\end{prop}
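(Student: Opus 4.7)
My plan is to construct $\igS$ and $\six$ via a Grothendieck-style quotient of the representation monoid. Let $M_\symS$ denote the commutative monoid of unitary equivalence classes of finite-dimensional representations of $\symS$, with direct sum as operation and the $0$-dimensional representation as identity. I first verify that the balanced representations form a submonoid $B_\symS\subset M_\symS$ containing the identity: the direct sum of two admissible gapped unitaries is again admissible and gapped, since its spectrum is the union of the two spectra and so still avoids $\pm 1$.

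Next, I declare $\rho_1\sim\rho_2$ whenever $\rho_1\oplus\beta_1\cong\rho_2\oplus\beta_2$ unitarily for some $\beta_1,\beta_2\in B_\symS$. Routine bookkeeping shows this is an equivalence relation; I set $\igS:=M_\symS/{\sim}$ and $\six(\rho):=[\rho]$. Direct sum descends to a well-defined, associative, commutative binary operation on $\igS$ with $[B_\symS]$ as neutral element, immediately yielding (1), (2), and the implication ``$\rho$ balanced $\Rightarrow$ $\six(\rho)=0$'' of~(3).

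To show that $\igS$ is a group and to establish (0), I would exhibit for each $\rho$ a complementary representation $\rho'$ with $\rho\oplus\rho'\in B_\symS$. For the types where the predicted $\igS$ is trivial or $\Ir_2$-valued, $\rho'=\rho$ works: on $V_\rho\oplus V_\rho$ an off-diagonal admissible gapped unitary can be written down, whose precise form depends on the signs of $\ph^2,\rv^2,\ch^2$ supplied by Lemma~\ref{lem:phases}. For the chiral types with $\Ir$-valued index group, I would instead take $\rho'$ to be $\rho$ with the sign of $\ch$ flipped, yielding a representation of the same dimension but with opposite value of $\tr\ch$; an off-diagonal unitary on $V_\rho\oplus V_{\rho'}$ then does the job. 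In every case admissibility reduces to a short block-matrix verification.

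For the converse of~(3) and the matching of $\igS$ to the groups listed in Table~\ref{Tab:sym}, I would introduce the explicit invariant $\nu(\rho)$ from the last column---$\dim\rho\bmod 2$ for $\symD$, $\tr\ch$ for the chiral types, $\dim\rho\bmod 4$ for $\symDIII$, and the trivial invariant otherwise---verify that it is additive, unitarily invariant, and vanishes on balanced representations (since gappedness at $\pm 1$ under chiral symmetry forces the off-diagonal blocks of $W$ in the $\ch$-eigenbasis to be invertible, whence the two $\ch$-eigenspaces have equal dimension and $\tr\ch=0$; and complex-conjugate eigenvalue pairs force even dimension in $\symD$). Thus $\nu$ factors through a homomorphism $\tilde\nu:\igS\to\igS_{\mathrm{table}}$, which together with the explicit realizations of each value makes $\tilde\nu$ an isomorphism and yields the converse of~(3). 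The main obstacle I anticipate is the bookkeeping in matching $\nu$ to the restricted groups $2\Ir$ and $2\Ir_2$ for $\symCII$ and $\symDIII$: the quaternionic structure enforced by $\rv^2=-\idty$ pairs conjugate $\ch$-eigenspaces and doubles the minimal step of $\nu$, and verifying this requires combining Lemma~\ref{lem:phases} with the irreducible decomposition from Section~\ref{sec:abstypes}.
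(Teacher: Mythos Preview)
Your approach is essentially the paper's: form the monoid of representations modulo balanced ones and show it is already a group by exhibiting inverses. The paper, however, gives a single uniform inverse construction valid for \emph{all} types at once: for any $\rho=(\ph,\rv,\ch)$ set $\rho'=(\ph,-\rv,-\ch)$ (dropping whichever symmetries are absent), and check that the swap $V(\phi_1\oplus\phi_2)=\phi_2\oplus(-\phi_1)$ is admissible for $\rho\oplus\rho'$ with spectrum $\{\pm i\}$. This avoids your case split and also fixes a small slip in your chiral case: for types \symBDI{} and \symCII{} flipping only $\ch$ while leaving $\rv$ unchanged would violate the convention $\ph\rv=\ch$ of \Qref{lem:phases}, so you must flip $\rv$ as well. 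Your plan for the converse of~(3) via the explicit invariant $\nu$ is fine and is exactly what the paper carries out in \Qref{sec:indexgroups}.
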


\begin{proof}
	We begin with an approach that is slightly more abstract than needed, in which (3) is replaced by the weaker condition (3'): ``$\rho$ balanced $\Rightarrow\six(\rho)=0$'', and some maximality condition to make $\six$ as non-zero as possible. For this approach we only need the obvious statement:
\begin{equation}\label{bplus}\tag{b$\oplus$}
  \text{``The direct sum of balanced representations is balanced.''}
\end{equation}
We define $\igS$ as the set of equivalence classes of finite dimensional representations with respect to the relation ``$\rho_1\sim\rho_2$'' which is defined by the existence of balanced representations $\beta_1,\beta_2$, such that $\rho_1\oplus\beta_1$ is unitarily equivalent to $\rho_2\oplus\beta_2$, i.e., 	
	\begin{equation}\label{eq:equivrel}
	\rho_1\sim\rho_2\quad\Leftrightarrow\quad\exists\beta_1,\beta_2\text{ balanced, }U\text{ unitary s.t. }
	\rho_1\oplus\beta_1=U(\rho_2\oplus\beta_2)U^*.
	\end{equation}
	This relation is clearly reflexive and symmetric. Transitivity is implied by \eqref{bplus}.  The map $\six$ is just the assignment of each representation to its equivalence class. Then (0), (1), and (3') are obvious. From \eqref{bplus} it is easily verified that the direct sum of equivalent representations gives equivalent sums. Therefore, (2) can be taken as the definition of the sum in $\igS$. Since $\rho_1\oplus\rho_2$ and $\rho_2\oplus\rho_1$ are unitarily equivalent, this sum makes $\igS$ an abelian semigroup.
	
	From here we could go on to define the Grothendieck group of the monoid $\igS$, as the largest {\it group} with the above properties. However, for the symmetry types of the tenfold way we are in the lucky situation that we already have inverses:
	
	For any representation $\rho$ we consider $\rho'$, a representation formed by the same symmetry operators, but with additional signs, namely for $\rv'=-\rv$ and $\ch'=-\ch$  whereas $\ph'=\ph$. Then, as one easily shows, the operator $V(\phi_1\oplus\phi_2)=\phi_2\oplus-\phi_1$ is an admissible unitary for $\rho\oplus\rho'$ and has eigenvalues $\pm i$. Hence $\rho\oplus\rho'$ is balanced, and $\six(\rho)+\six(\rho')=0$. We conclude that $\igS$ as defined above is indeed an abelian group.
\end{proof}

Now as will be made clear in \Qref{sec:indexgroups} there is only a small list of irreducible representations (irreps) of each symmetry type: either each irrep is already balanced, which means $\igS=\{0\}$, or there is only one irrep, which is its own inverse ($\igS=\Ir_2$), or there are two, which are inverses of each other ($\igS=\Ir$). In either case, property (3) is obvious.

\subsection{Computing the index groups}\label{sec:indexgroups}
We will now compute the index groups of the ten symmetry types. This will be
done by computing the irreducible representations (``irreps'') of the respective symmetry
type and identifying them with the generators of the corresponding index group. This section makes use of some well-known results from the theory of (anti-) unitarily represented symmetries. See for example \cite[Sect.~26]{Wignerbook} for a detailed discussion of the topic. Finally we will comment on \textit{forget homomorphisms}, that is, the maps on the index groups that correspond to changing symmetry types by ``forgetting'' two out of four symmetries.

\subsubsection{Computing the irreps}\label{sec:irreps}
For any group $G$ consisting of unitary and antiunitary operators, the subset of unitary operators is a normal subgroup $G_U$ and $G=G_U\cup \theta G_U$, where $\theta\in G$ is any antiunitary element.
For the symmetry types considered here, $G_U$ either contains only the identity or, in addition, the
chiral symmetry. It therefore has either one or two one-dimensional irreps, corresponding to the different possible signs for $\ch$ ($\pm1$ for $\ch^2=\idty$ and $\pm i$ for
$\ch^2=-\idty$). To determine the irreps of $G$, assume the representation of $G_U$ to be completely reduced and pick an irrep. Now choose a basis element $\psi$ of the underlying Hilbert space. Regarding the action of the antiunitary operator $\theta$ on $\psi$ there are then three cases to distinguish:
\begin{itemize}
	\item[(1)] $\psi$ can be chosen invariant.
	\item[(2)] $\psi$ is mapped to a linearly
	independent vector, which is a basis for an equivalent irrep of $G_U$.
	\item[(3)] $\psi$ is mapped to a linearly independent vector, which is a basis for an inequivalent irrep of $G_U$.
\end{itemize}

In case (1) the irreps of $G_U$ already determine the irreps of the whole
group. This is the case for the symmetry types without an antiunitary operator
which squares to $-\idty$: \symA{}, \symD{}, \symAI{}, \symAIII{} and
\symBDI{}, with two inequivalent irreps for \symAIII{} and \symBDI{}
respectively.

The remaining symmetry types contain at least one anti-unitary $\sigma$ with $\sigma^2=-\idty$. Thus, since $\braket\phi{\sigma\phi}=\braket{\sigma^2\phi}{\sigma\phi}=-\braket{\phi}{\sigma\phi}=0\ \forall\phi\in\HH$, they belong to either case (2) or (3).

In case (2) we get irreps of $G$, which contain two copies of equivalent irreps of
$G_U$ and are therefore of dimension two. This is the case for the symmetry types
\symC{}, \symAII{} and \symCII{}: assume $\ch\psi=\pm\psi$. Then, with \Qref{lem:phases}, we have $\ch\ph\psi=\ph\ch\psi=\pm\ph\psi$, so that $\ph\psi$ and $\psi$ are eigenvectors for the same eigenvalue of $\ch$ and therefore correspond to equivalent irreps. Since for \symC{} and \symAII{} all irreps of
$G_U$ are equivalent, the procedure yields one overall irrep of
dimension two. For \symCII{} there are two inequivalent irreps of $G_U$ and
therefore also two for $G$.

The two remaining symmetry types \symCI{} and \symDIII{} are examples for
case (3). Then the $\pm i$-eigenspaces of $\ch$ are mapped to each other by $\rv$ and $\ph$, which can be deduced in a way analogous to the one before. But now we get a sign-flip when passing $\pm i$ by $\ph$ or $\rv$. An overall
irrep then contains a copy of each of the two inequivalent irreps of $G_U$
leaving only one possible two-dimensional irrep for $G$.

\subsubsection{Computing the index groups}
The computation of the index groups is now relatively simple, since, by \Qref{prop:six}, the generators of the index group are given by the images of the irreps under $\six$. When there is only one irrep $\rho$, $\six(\rho)$ must be its own inverse, so the index group is $\igS\cong\{0\}$ or $\igS\cong\Ir_2$. When there are two, which are inverses of each other, the index group is $\igS\cong\Ir$. Let us go through the computations in detail:

\noindent {\it Types \symA{}, \symAI{}, \symAII{}:}

In these cases $U=i\idty$ ($H=\idty$) is always admissible and gapped, so every
representation is balanced and $\igS=\{0\}$.

\noindent {\it Types  \symC{} and \symCI{}:}

Both symmetry types have only one irrep of dimension two. We find a basis $\{\phi,\ph\phi\}$ of it by choosing any vector $\phi$ in the case of \symC{} and an eigenvector of $\ch$ in the case of \symCI, respectively. By defining $U\phi=\ph\phi$ and $U\ph\phi=-\phi$ ($H\phi=i\ph\phi$ and $H\ph\phi=-i\phi$) we then get an admissible and gapped unitary (Hamiltonian). Therefore every representation is a
direct sum of balanced ones and we thus have $\igS=\{0\}$.

\noindent {\it Types \symD{} and \symDIII{}:}

Both symmetry types have only one irrep which, as will be shown below,
is not balanced. Since there is only one, the irrep must be its own inverse in the sense of the construction given in the proof of \Qref{prop:six}. Therefore, we identify the index groups of both symmetry types with $\Ir_2$.

In case \symD{} one can always choose the basis-element of the one-dimensional irrep space to be $\ph$-invariant. $\ph$ then acts as the complex conjugation with respect to this basis vector. Hence the only admissible unitaries (Hamiltonians) are $U=\pm \idty$ ($H=0$), which are not gapped. In two dimensions a real rotation matrix $U$ by an angle $\notin\pi\Ir$ ($H=\sigma_y$) is gapped and admissible, so all even dimensional representations are balanced. Therefore,  $\ig(\symD)=\Ir_2$, with
\begin{equation}\label{sixmod2}
  \six(\rho)\equiv d\ \mod2.
\end{equation}

In case \symDIII{} the symmetry conditions for $\ph$ and $\ch$ force the eigenvalues of an admissible unitary $U$ (Hamiltonian $H$) to come in complex conjugate pairs (pairs with opposite signs). Since $\rv$ leaves the eigenspaces invariant, $\rv^2=-\idty$ forces them to be at least of dimension two and therefore the eigenvalues of a gapped operator come in groups of four. On the other hand, a four-dimensional representation is always balanced. To see this choose a vector $\phi=\phi_1+\ph\phi_2$, with $\ch\phi_k=i\phi_k$. Then $\{\phi,\rv\phi,\ph\phi,\ch\phi\}$ is an orthogonal basis on which the action of the symmetries is fixed by \Qref{lem:phases}. In this basis, the diagonal operator $U=\operatorname{diag}(i,i,-i,-i)$ ($H=\operatorname{diag}(1,1,-1,-1)$) is admissible and gapped.  Similar to $\ig(\symD)$, the index group is isomorphic to $\Ir_2$. To take into account the two-dimensional irreps of \symDIII{} we set $\ig(\symDIII)=2\Ir_2$ considered as a subgroup of $\Ir_4$. The symmetry index of a $d$-dimensional representation $\rho$ is then given by
\begin{equation}\label{sixmod4}
  \six(\rho)\equiv d\ \mod4.
\end{equation}

\noindent{\it Types \symAIII{}, \symBDI{} and \symCII{}:}

In these cases, we know that $\ch^2=\idty$, therefore $\ch$ has eigenvalues
$\pm1$ and is given up to unitary equivalence by the pair $(n_-,n_+)$ of
multiplicities of these eigenvalues.
Let us now look at one of the irreducible representations of \symAIII{}, e.g.\ $\rho_+$, with
$(n_-,n_+)=(0,1)$. It lives on a one dimensional space in
which $\ch$ is the identity. Admissibility of a unitary operator $U$ means that
$U=U^*$ ($H=-H$), allowing only $U=\pm\idty$ ($H=0$), which are not gapped. Therefore, $\rho_+$ is not balanced.
Its inverse, as given in the proof of \Qref{prop:six}, is the
representation $\rho_-$ with $(n_-,n_+)=(1,0)$. By applying the direct sum rule
we get, for an arbitrary representation $\rho$ with multiplicities $(n_-,n_+)$
\begin{equation}\label{sixchi}
\six(\rho)=n_+\six(\rho_+)-n_-\six(\rho_+)=(\tr\gamma)\,\six(\rho_+),
\end{equation}
where the multiplication of group elements by integers is understood as
iterated sum. Hence $\ig(\symAIII)$ is generated by the single element
$\six(\rho_+)$. Moreover, no multiple of this element vanishes, since the
representations with multiplicities $(0,n)$ are all unbalanced. This identifies
the index group with $\Ir$, and we can just set $\six(\rho)=\tr\ch$.
Note that any such identification can only be up to an automorphism of
$\ig(\symAIII)$, so $\six(\rho)=-\tr\ch$ would also be a valid choice, differing
only by taking the other irreducible representation $\rho_-$ as the generator of
$\ig(\symAIII)$ to be identified with $1\in\Ir$.  All chiral symmetry indices,
when written as natural numbers, depend on such a convention. It is, in fact, a
phase convention in disguise, because we can always change the sign of $\ch$
(and one of the other symmetries) getting an equivalent representation of the
symmetry.

In case \symBDI\ $\ph$ plays the role of complex conjugation, and the arguments
of \symAIII\ apply with minimal changes:
The representations are still labelled by pairs of multiplicities  $(n_-,n_+)$,
only that now we fix an $\ph$-real basis in each eigenspace, so that the action
of all symmetry operators is determined. If we make the same convention for
identifying the generator of $\ig(\symBDI)$, the groups are just related by the
``forget homomorphism'' which considers a \symBDI-representation as an
\symAIII-representation by not considering $\ph$ and $\rv$.

The irreducible representations of \symCII\ are
two-dimensional. Thus the possible generators of the group $\ig(\symCII)$ are the representations with ($\ch$-) multiplicities $(0,2)$ and $(2,0)$. Writing
the group as $2\Ir$ rather than $\Ir$ makes the forget homomorphism (see next paragraph)
$\ig(\symCII)\to\ig(\symAIII)$ especially simple.

\subsubsection{Forget homomorphisms}
Whenever we have a symmetry type that contains all three symmetries, it can also be considered as another type, by ``forgetting'' two symmetries. How this is reflected in the index group can be described by \textit{forget homomorphisms}. The nontrivial homomorphisms are collected in \Qref{Tab:forget}.

When we leave out $\ph$ and $\ch$, we are left with a symmetry group containing only $\rv$ (\symAI{}, \symAII{}) with trivial index group and therefore the forget homomorphism must be the zero map. If we instead ignore $\rv$ and either $\ph$ or $\ch$, we are left with a nontrivial index group. The forget homomorphisms can be deduced in a straight forward way (see e.g.\ $\ig(\symBDI)\to\ig(\symD)$ in \Qref{Tab:forget}). The only case which is not obvious, is the homomorphism $\ig(\symDIII)\to\ig(\symAIII)$: consider an irrep of \symDIII{}, where $\ch$ can be chosen as $\operatorname{diag}(-i,i)$, where the eigenvalues are determined by \Qref{lem:phases}. If we ignore $\ph$ and $\rv$, we are free to change the phase of $\ch$, to get a representation with $\ch=\operatorname{diag}(-1,1)$. Therefore an irrep of \symDIII{} becomes a representation of \symAIII{} of the form $\rho_-\oplus\rho_+$, which has index zero.

\begin{table}\begin{centering}
		\begin{tabular}{|c|ccc|}
			& \s\symBDI\s  &\s\symCII\s  &\s\symDIII\s   \\\hline
			\symAIII  & $\id$       & $\id$     & 0\\
			\symD     & $\mod2$     &         & 0
		\end{tabular}\hspace{10pt}
		\caption{\label{Tab:forget} Forget homomorphisms between non-trivial index groups.  Sample entry: symmetry type \symBDI\ contains an involution $\ph$ with $\ph^2=\idty$,
			so every representation of \symBDI\ is also one of \symD. This induces a homomorphism $\Ir=\ig(\symBDI)\to\ig(\symD)=\Ir_2$, in this case the quotient map $\mod2$. }
\end{centering}\end{table}

\subsection{Symmetry indices of unitaries and Hamiltonians}
\label{sec:sym:indicDef}

So far, the symmetry index is only defined for finite dimensional representations. This may seem insufficient for classifying operators $W$ on an infinite dimensional Hilbert space.
However, the symmetries leave the $\pm1$-eigenspaces of admissible unitaries (and the $0$-eigenspaces of admissible Hamiltonians) invariant and the essential gap condition ensures their finite dimensionality. Hence, the symmetry index of the restriction of the symmetries to these eigenspaces is well-defined.

\begin{defi}\label{def:sixOp}
The {\dff symmetry index} of an admissible Hamiltonian $H$, denoted by $\six(H)$,  is the symmetry index of the representation of the symmetries on its $0$-eigenspace. For an admissible unitary $W$, we define
$\six_+(W)$ as the symmetry index of the representation on its $+1$-eigenspace, and $\six_-(W)$ using the $-1$-eigenspace. For the index in the combined eigenspaces we write $\six(W)=\six_+(W) + \six_-(W)$.
\end{defi}

When the overall Hilbert space is finite dimensional and carries a symmetry representation $\rho$, we have $\six(W)=\six(\rho)$ for \emph{all} admissible $W$.  Indeed, the representation in the eigenspaces differs from $\rho$ only by the representation belonging to the non-real part of the spectrum, which is balanced by definition. Since we assume each cell to carry a balanced representation, $\six(W)=0$ for $W$ acting on any finite number of cells. However, this is no longer true for infinitely many cells. In a sense the whole theory is about mismatches allowing for $\six(W)\neq0$.

It will be useful to have a concrete formula for $\six_\pm(W)$ in the finite dimensional case.
We collect them here for later use. Symmetry type \symDIII{} is omitted, because we have no equally simple formula.

\begin{lem}
	Given a $d<\infty$ dimensional symmetry representation and a symmetry admissible unitary $W$. Then
	\begin{eqnarray}
	(-1)^{\six_-(W)}&=&\det(W)\quad\mbox{for symmetry type}\ \symD   \label{siDet}\\
	\six_\pm(W)&=& \frac12\tr\gamma(\idty\pm W)\quad\mbox{for symmetry types}\ \symAIII,\symBDI,\symCII. \label{sitrch}
	\end{eqnarray}
\end{lem}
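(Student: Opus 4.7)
The plan is to exploit the symmetry-induced structure of the spectrum of $W$ in a finite dimensional Hilbert space. In both cases the symmetry forces the spectrum to be invariant under complex conjugation: for $\symD$, from $\ph W=W\ph$ together with antiunitarity of $\ph$, and for the three chiral types, from $\ch W \ch^* = W^*$. Thus the eigenvalues split into complex conjugate pairs $\{\omega,\bar\omega\}$ with $|\omega|=1$, $\omega\neq\bar\omega$, together with eigenspaces at the real points $\pm 1$, which are invariant under the symmetry operators and on which $\six_\pm(W)$ is defined.

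For symmetry type $\symD$, I would simply compute $\det W$ as the product of the eigenvalues (counted with multiplicity). Each complex conjugate pair contributes $\omega\bar\omega = 1$, so only the real eigenvalues survive, giving $\det W = 1^{n_+}\cdot(-1)^{n_-} = (-1)^{n_-}$, where $n_\pm = \dim\ker(W\mp\idty)$. Since $\six_-(W) \equiv n_- \pmod 2$ by \eqref{sixmod2}, this immediately yields $(-1)^{\six_-(W)} = \det W$.

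For the chiral types I would use the spectral projections $P_\omega$ of $W$. The relation $\ch W \ch^* = W^*$ gives $\ch P_\omega \ch^* = P_{\bar\omega}$, so $\ch$ maps $\operatorname{ran} P_\omega$ onto $\operatorname{ran} P_{\bar\omega}$. For non-real $\omega$ these ranges are mutually orthogonal, hence the diagonal entries of $\ch$ in any basis adapted to the spectral decomposition vanish on $\operatorname{ran} P_\omega$, giving $\tr(\ch P_\omega) = 0$. On the real eigenspaces $P_\pm$ the operator $\ch$ restricts to the symmetry representation whose index is, by \eqref{sixchi} and Definition \ref{def:sixOp}, $\six_\pm(W) = \tr(\ch P_\pm)$. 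Expanding $W = \sum_\omega \omega P_\omega$ and $\idty = \sum_\omega P_\omega$ then gives
\begin{equation*}
\tr \ch = \six_+(W) + \six_-(W), \qquad \tr(\ch W) = \six_+(W) - \six_-(W),
\end{equation*}
and taking half of the sum and difference yields \eqref{sitrch}.

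The only step that requires any care is the vanishing of $\tr(\ch P_\omega)$ for non-real $\omega$; this is the conceptual heart of the argument and must be stated in terms of the orthogonality of $\operatorname{ran} P_\omega$ and $\operatorname{ran} P_{\bar\omega}$, rather than in terms of any eigenbasis of $\ch$ itself (which need not be compatible with the spectral decomposition of $W$). Everything else reduces to linear algebra. The omission of type $\symDIII$ is explained by the fact that $\six$ there takes values in $2\Ir_2$ and is not expressible as a trace of any fixed symmetry operator acting on the $\pm 1$-eigenspaces.
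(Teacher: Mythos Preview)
Your argument is correct and follows essentially the same approach as the paper: for both equations you use that the non-real eigenvalues of $W$ come in complex conjugate pairs, that each such pair contributes trivially (product $1$ for $\det W$; traceless contribution of $\ch$ because $\ch P_\omega\ch^*=P_{\bar\omega}$ with $P_\omega P_{\bar\omega}=0$), and that what remains is determined by the $\pm1$-eigenspaces. The paper phrases the chiral case by saying that $\ch$ acts like $\sigma_x$ on each conjugate eigenvalue pair, which is exactly your observation that $\ch$ is off-diagonal in the spectral decomposition; your explicit computation of $\tr\ch$ and $\tr(\ch W)$ and taking half the sum and difference is just a slightly more detailed unpacking of the same step.
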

\begin{proof}
	The non-real eigenvalues of admissible unitaries $W$ come in complex conjugate pairs for the symmetry types under consideration. Regarding the first equation, we find that $\det(W)=\pm 1$ and its value is determined by the parity of the $-1$-eigenspace, hence by $\six_-(W)$. Considering $\det(-W)$, a similar formula is true for $\six_+$. For the second equation, note that $\ch$ acts as $\sigma_x$ on eigenvectors corresponding to complex conjugate pairs of eigenvalues of $W$ and is therefore traceless on this subspace. Hence, the formula yields $\tr\ch$ on the $\pm1$-subspace of $W$ respectively.
	
\end{proof}

One cornerstone of this theory is the homotopy invariance of the symmetry indices of admissible operators. That is, whenever $W'$ is a gentle perturbation of $W$, we must have $\six_-(W')=\six_-(W)$ and $\six_+(W')=\six_+(W)$, and similarly for Hamiltonians. This is the upshot of the following Proposition. It actually makes a slightly stronger statement, namely that the indices are locally constant in the norm topology.

\begin{prop}\label{pro:homoto}
	Let $W_0$ be an admissible walk. Then there is a constant $\veps>0$ such that $\six_\pm(W_1)=\six_\pm(W_0)$ for both signs and all admissible $W_1$ with $\norm{W_1-W_0}<\veps$.\\
	The same statement holds for admissible Hamiltonians and the index $\six(H)$.
\end{prop}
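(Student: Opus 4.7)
The plan is a Riesz projection argument around the fixed points $\pm1$. By the essential gap assumption, I would choose $\delta>0$ so small that the closed disks $\overline{D_\pm}=\{z\in\Cx:|z\mp1|\le\delta\}$ each meet $\sigma(W_0)$ only in the isolated eigenvalue $\pm1$ itself, and so that their boundary circles $\Gamma_\pm$ lie in the resolvent set of $W_0$. Since the resolvent is norm-continuous in $W$ uniformly on any compact set avoiding the spectrum, for $\norm{W_1-W_0}<\veps$ with $\veps$ small the circles $\Gamma_\pm$ remain in the resolvent set of $W_1$ as well, and the Riesz projections
\[
Q_\pm(W):=-\frac{1}{2\pi i}\oint_{\Gamma_\pm}(W-z)^{-1}\,dz
\]
are finite-rank projections onto the $W$-invariant subspace corresponding to spectrum inside $D_\pm$. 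They depend norm-continuously on $W$, so after shrinking $\veps$ further $\rank Q_\pm(W_1)=\rank Q_\pm(W_0)$.

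I would next verify that $Q_\pm(W)$ commutes with every symmetry operator whenever $W$ is admissible. Since $D_\pm$ is invariant under complex conjugation, for instance $\rv Q_\pm(W)\rv^*$ equals the spectral projection of $\rv W\rv^*=W^*$ on $D_\pm$, which coincides with $Q_\pm(W)$ because $\sigma(W^*)\cap D_\pm=\overline{\sigma(W)\cap D_\pm}$; the antiunitarity of $\rv$ exactly compensates the conjugation picked up when the contour integral is transcribed. The analogous arguments for $\ph$ and $\ch$ proceed identically. Hence the range of each $Q_\pm(W_i)$ carries a finite-dimensional representation $\rho_i^\pm$ of the symmetry type. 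To identify $\rho_0^\pm$ with $\rho_1^\pm$ I would invoke Nagy's transformation unitary
\[
U_\pm:=\bigl(\idty-(Q_\pm(W_1)-Q_\pm(W_0))^2\bigr)^{-1/2}\bigl[Q_\pm(W_1)Q_\pm(W_0)+(\idty-Q_\pm(W_1))(\idty-Q_\pm(W_0))\bigr],
\]
which is well defined and satisfies $U_\pm Q_\pm(W_0)U_\pm^*=Q_\pm(W_1)$ once $\norm{Q_\pm(W_1)-Q_\pm(W_0)}<1$. Because $U_\pm$ is assembled from $Q_\pm(W_0)$ and $Q_\pm(W_1)$ by purely algebraic operations and real-coefficient functional calculus, it commutes with every symmetry, unitary or antiunitary; $\rho_1^\pm$ is therefore unitarily equivalent to $\rho_0^\pm$ and property (1) of \Qref{prop:six} gives $\six(\rho_0^\pm)=\six(\rho_1^\pm)$.

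To finish, I would relate these indices to $\six_\pm(W_i)$. By the choice of $\delta$, $\mathrm{range}\,Q_\pm(W_0)=\ker(W_0\mp\idty)$, so $\six(\rho_0^\pm)=\six_\pm(W_0)$ directly. For $W_1$ the range of $Q_\pm(W_1)$ splits symmetry-equivariantly as $\ker(W_1\mp\idty)$ plus a complement on which $W_1$ is a symmetry-admissible unitary whose spectrum lies in $D_\pm\setminus\{\pm1\}$, hence avoids both $+1$ and $-1$. By \Qref{def:balancedrep} this complement is balanced and contributes $0$ to the index, so $\six(\rho_1^\pm)=\six_\pm(W_1)$. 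Chaining the equalities gives $\six_\pm(W_1)=\six_\pm(W_0)$. The Hamiltonian case is identical, with $\pm1$ replaced by $0$ and $\Gamma_\pm$ by a small self-conjugate circle around $0\in\Cx$; the role of avoiding both $\pm1$ is played by avoiding $0$, which is again what makes the off-$0$ spectrum balanced.

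The step I expect to require the most care is the symmetry-equivariance, both of the Riesz projection and of $U_\pm$: under an antiunitary symmetry one must simultaneously conjugate the contour parameter and the integrand, and it is precisely the self-conjugacy of $D_\pm$ together with the strictly real-coefficient form of the Nagy formula that makes both checks go through without any residual phase.
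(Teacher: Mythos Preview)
Your argument is correct and takes a genuinely different route from the paper's. Both proofs begin the same way: choose a small conjugation-symmetric disk $D_\pm$ around each fixed point, form the Riesz projection $Q_\pm(W)$, observe that it depends norm-continuously on $W$, and verify that it commutes with every symmetry present. Both also finish the same way, by noting that on the range of $Q_\pm(W_1)$ the walk $W_1$ splits off the $\pm1$-eigenspace plus a gapped, hence balanced, complement.

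The difference is in the middle step, where one must compare $\six$ of the symmetry representation on $Q_\pm(W_0)\HH$ with that on $Q_\pm(W_1)\HH$. The paper argues case by case over the symmetry types: for \symD{} and \symDIII{} the index is determined by the dimension of the representation space, which is preserved once $\norm{Q_\pm(W_1)-Q_\pm(W_0)}<1$; for the chiral types \symAIII, \symBDI, \symCII{} it uses the trace formula $\six=\tr\ch$ and bounds $\abs{\tr(Q\ch Q)-\tr(P_0\ch P_0)}$ directly by $2d\,\norm{Q-P_0}$, forcing the integer difference to vanish. This yields an explicit $\veps=\delta/(2(2d+1))$.

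You instead produce the Sz.-Nagy intertwiner $U_\pm$ and observe that, being a real-coefficient power series in two projections which each commute with every (anti)unitary symmetry, $U_\pm$ itself commutes with the symmetries. This gives outright unitary equivalence of the two finite-dimensional representations, so $\six$ agrees by \Qref{prop:six}(1), with no case distinction at all. Your approach is conceptually cleaner and in fact proves the stronger statement that the representations themselves are equivalent, not merely their indices; the paper's approach has the compensating virtue of being slightly more elementary and of producing an explicit perturbation radius.
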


\begin{proof}
	We show this only in the Hamiltonian case, because the cases $\six_\pm(W)$ are completely analogous. So let $H_0$ be an admissible Hamiltonian, and $P_0$ the projection onto its $0$-eigenspace, which we will take to be $d$-dimensional. Since $0$ is not in the essential spectrum, this eigenvalue is isolated, i.e., there is a distance $\delta>0$ to all other eigenvalues and other parts of the spectrum. The radius $\veps$ in the Proposition will depend only on $d$ and $\delta$.
	
	The proof will be based on the perturbation theory in terms of resolvents as described in \cite{Kato} or \cite[Ch.~XII]{ReedSi4}.
	We denote by $\Gamma$ the circular path in the complex plane around $0$ of radius $\delta/2$.  For $z\in\Gamma$ the resolvent $R_0(z)=(z\idty-H_0)\inv$ is bounded by $2/\delta$. By $R_1(z)$ we denote the resolvent of $H_1$.  By the resolvent equation
	\begin{equation} \label{eq:R1-R0}
	R_1(z)-R_0(z) = R_1(z) (H_1-H_0) R_0(z)
	\end{equation}
	we have
	\begin{equation} \label{eq:R1}
	R_1(z) = R_0(z) [\idty-(H_1-H_0)R_0(z)]^{-1}=\sum_{n=0}^\infty R_0(z)\Bigl((H_1-H_0)R_0(z)\Bigr)^n,
	\end{equation}
	which is a convergent series everywhere on $\Gamma$, provided $\norm{H_1-H_0}<\delta/2$. Assuming this inequality from now on, we have that the resolvent $R_1$ is defined and uniformly norm bounded by $(2/\delta)\bigl(1-2\norm{H_1-H_0}/\delta\bigr)\inv$. By the Cauchy integral formula the spectral projection of $H_1$ for the interior of the circle $\Gamma$ is
	\begin{equation}\label{Q-integral}
	Q=\int_\Gamma\mkern-3mu\frac{dz}{2\pi i}\,R_1(z).
	\end{equation}
	The corresponding integral with the resolvent $R_0$ just gives the projection $P_0$. Therefore,
	\begin{equation}\label{homidiff}
	Q-P_0=\int_\Gamma\mkern-3mu\frac{dz}{2\pi i}\,(R_1(z)-R_0(z)).
	\end{equation}
The difference of the resolvents is just the sum \eqref{eq:R1} with the term $n=0$ omitted. We insert this into \eqref{homidiff}, estimate the integrand term by term, and multiply by the length of $\Gamma$, which gives
	\begin{equation}\label{homidiffnorm}
	\norm{Q-P_0}\leq \frac{2\norm{H_1-H_0}/\delta}{1-2\norm{H_1-H_0}/\delta} .
	\end{equation}
	Hence if we assume the tighter bound $\norm{H_1-H_0}<\delta/4$, by \cite[I.\S 4.6]{Kato} this implies that $Q$ and $P_0$ have the same dimension $d$.
	
	Let us now bring in the symmetry conditions. Each of the symmetries $(\ph,\rv,\ch)$ has the property that eigenvectors of $H$ are mapped to eigenvectors of $H$, possibly with a sign change of the eigenvalue. Therefore, the subspace $Q\HH$, i.e.\ the eigenspace of $H_1$ for the interval $[-\delta/2,\delta/2]$ is invariant under each of the symmetries. Let us denote by $Q\rho Q$ the representation of the symmetry in this invariant subspace. When $P_1$ is the projection onto the $0$-eigenspace of $H_1$, the representations $Q\rho Q$ and $P_1\rho P_1$ differ by a balanced representation, since $H_1$ is gapped on $(Q-P_1)\HH$. This gives \begin{equation}\label{sixesperturbed}
	\six(H_1):=\six(P_1\rho P_1)=\six(Q\rho Q).
	\end{equation}
	Hence we only have to prove that for sufficiently small $\norm{H_1-H_0}$ we have $\six(Q\rho Q)=\six(P_0\rho P_0)=:\six(H_0)$. This is obvious for the symmetry types \symD{} and \symDIII{}, since in these cases the symmetry index is determined by the dimension of the representation, and we have already established that $\dim Q\HH=\dim P_0\HH$, as soon as $\norm{H_1-H_0}<\delta/4$. That is, the proposition holds with $\veps=\delta/4$.
	
	For symmetry types with trivial index group the Proposition is trivially true, which leaves the cases \symAIII, \symBDI, and \symCII, for which the symmetry index of finite dimensional symmetry representations is given by $\tr(\ch)$.  Hence
	\begin{align}\label{sixQ-P0}
	\abs{\six(Q\rho Q)-\six(P_0\rho P_0)}&=\abs{\tr(Q\ch Q)-\tr(P_0\ch P_0)}\\ &\leq\abs{\tr(Q\ch Q)-\tr(P_0\ch Q)}+\abs{\tr(P_0\ch Q)-\tr(P_0\ch P_0)}\\
	&\leq \norm{Q-P_0}\,(\norm{\ch Q}_1+\norm{P_0\ch}_1)\\
	&\leq 2d \norm{Q-P_0},
	\end{align}
	where $\norm\cdot_1$ denotes the trace norm, and we have used that $\tr P_0=\tr Q=d$. Combining this with \eqref{homidiffnorm} we find that for sufficiently small $\norm{H_1-H_0}\leq\veps$ the left hand side of \eqref{sixQ-P0} is $<1$, so the indices, being integers have to coincide. Collecting the constants we find that the proposition holds with
	\begin{equation}\label{homveps}
	\veps=\frac\delta2\,\frac1{2d+1}.
	\end{equation}
\end{proof}

The converse of this proposition would be the {\it completeness} of these invariants: when the indices of two operators coincide, they can be deformed into each other. This issue will be addressed in \Qref{sec:complete}.

% decoupling intro
\section{Decouplings and two settings}\label{sec:twosettings}
In the previous section we established two homotopy invariants, $\six_\pm(W)$ for walks in our setting. However, this cannot be the whole story, since these are trivial in the translation invariant case, and reflect nothing of the cell structure and the locality condition. This section is a heuristic introduction and preview of how locality enters and allows us to define new invariants. It turns out, that these follow a somewhat different set of rules, i.e., are best formulated in a setting of their own. This is set out in detail in \Qref{sec:indices}, and will be brought together with the invariants described so far when we get to \Qref{sec:homotopy-indices} and \Qref{sec:decoup}. In the current section we only sketch these ideas, so we frequently refer to concepts to be properly defined only in these later sections.

The main idea for bringing in the one-dimensional lattice structure is to split the system into a right and a left subsystem. Let us assume for the sake of discussion that we have a \emph{gentle decoupling} of $W$, i.e., a local and gentle perturbation of the form
\begin{equation}\label{decoupSum}
W'=W_L\oplus W_R,
\end{equation}
where the direct sum is with respect to $\HH=(\Pl a\HH)\oplus(\Pg a\HH)$, with $a\in\Ir$ some arbitrarily chosen cut point. Thus under $W'$ a walker starting somewhere at $x\geq a$ will have zero probability to reach $y<a$, or conversely, after any number of steps. Whether such a decoupling always exists is far from obvious, but this will be established in \Qref{thm:gentdecoup}.
Then for each walk $W_L,W_R$ we can consider the eigenspaces at $\pm1$, and their symmetry indices leading to the four index quantities in the upper left of the following table
\begin{equation}\label{si22cross}
\begin{array}{cc|c}
     \six_+(W_L)&\six_+(W_R)&\six_+(W')\\
     \six_-(W_L)&\six_-(W_R)&\six_-(W')\\\hline
     \sixL  (W')&\sixR  (W')&\six  (W')\vrule height 5pt depth 0pt width 0pt
\end{array}
\end{equation}
On the right and at the bottom we have collected the respective marginal sums, where $\sixL(W')$ is just defined as $\six(W_L)=\six_-(W_L)+\six_+(W_L)$. So which of these numbers are independent of where and how the cut is made, and could hence play the role of invariants? It turns out that there are two complementary aspects, which in the end require quite different tools, and are, loosely speaking, associated with the row sums and the column sums of the above table, respectively.

\begin{figure}
	\def\sep#1{\draw (#1,-1)--(#1,0);}
\def\frame{
\draw[fill=black!10](-6,-1) rectangle (6,0);
\sep4\sep2\sep{-4}\sep{-2}
\draw[thick] (0,-1.5) -- (0,7);
\draw (-3,1) rectangle +(2,4); \draw (-3,3)--+(2,0);
\draw (1,1) rectangle +(2,4);  \draw (1,3)--+(2,0);
\node at (-4,2) {$\KK_-$};\node at (4,2) {$\KK_-$};
\node at (-4,4) {$\KK_+$};\node at (4,4) {$\KK_+$};
\node at (-3,-2) {$W_L$};\node at (3,-2) {$W_R$};
}

\begin{tikzpicture}[scale=.4]
\frame
\draw[fill=white] (-2,3) ellipse (.8 and 1.2); 
\draw[fill=white] (2,3) ellipse (.8 and 1.2); 
\node at (-2,3) {$W_\KK$};\node at (2,3) {$W_\KK$};
\node at (-5,6) {\Large$W^\flat$};
\end{tikzpicture}
\hspace{1cm}
\begin{tikzpicture}[scale=.4]
\frame
\node at (-2,2) {$-\idty$};
\node at (2,2) {$\idty$};
\node at (-2,4) {$\idty$};
\node at (2,4) {$-\idty$};
\node at (-5,6) {\Large$W^\sharp$};
\end{tikzpicture}
	\caption{\label{fig:crosspert}The walks involved in the example of a crossover perturbation. (See text) }
\end{figure}

\noindent{{\bf Example: a crossover perturbation:} In order to get a prototype of the dependencies in \eqref{si22cross}, consider an additional cell on the left side, given by a Hilbert space $\KK$ on which the symmetry acts in a balanced way. More concretely, suppose there are  symmetry-invariant subspaces $\KK_\pm$ so that $\KK=\KK_-\oplus\KK_+$ and the restriction $\rho_\pm$ of the symmetry operators to $\KK_\pm$ has symmetry index $\six(\rho_\pm)=\pm n\in\igS$. The overall representation on $\KK$ is then balanced, and hence is a legitimate additional cell for an extended cell structure.  Now we extend the walk to the additional cell by choosing on $\KK$ a gapped unitary $W_\KK$. We make the same extensions with the same representations on the right hand side, and call the resulting walk $W^\flat$. Clearly, we can use the gentle decoupling of $W$ to decouple also $W^\flat$, by just leaving alone all the additional cells. Then since we do not get any additional eigenvalues at $\pm1$ on either side, the matrix \eqref{si22cross} will be exactly the same for $W^\flat$ as for $W$.

Let us now do the same thing but replace the gapped $W_\KK$ on the left hand side by the unitary operator, which is $+\idty$ on $\KK_+$ and $-\idty$ on $\KK_-$, and choose the opposite signs on the right.
Call the resulting unitary $W^\sharp$ (see \Qref{fig:crosspert}). Again we can do the decoupling by simply decoupling $W$. However, the $\pm1$ eigenspaces now contain full copies of $\KK_\pm$ on each side. We have to add the respective indices to get the indices of $W^{\sharp}$, namely
\begin{equation}\label{six22add}%
\sixmat{W^\sharp}=\sixmat{W^\flat}+ \left(\begin{array}{cc} n&-n\\ -n&n\end{array}\right).
\end{equation}
On the right hand side we have used the observation made earlier that the index data for $W^\flat$ and $W$ are equal. In this way we get on the left two walks $W_L^\sharp$ and $W^\flat_L$ on the same cell structure, which are clearly local perturbations of each other. However, by \Qref{pro:homoto}, this perturbation cannot be gentle, because the $\six_\pm$ indices are different. On the other hand, the overall perturbation from $W^\flat$ to $W^\sharp$ is gentle: just exchange the two copies of $\KK_+$ by a real rotation. This leaves $W=\idty$ on the right copy of $\KK$ which has vanishing indices, and is continuously connected to $W_\KK$, by fixing the eigenvectors of $W_\KK$ and moving the eigenvalues to $1$ along the unit circle. On the left we proceed similarly to connect $W_\KK$ to $-\idty$.

Let us collect some conclusions from this construction.
\begin{itemize}
\item[(a)] None of the individual terms $\six_\pm(W_{L,R})$ in \eqref{si22cross} is stable under gentle and local perturbations of $W$ \\(Compare $W^\flat$ and $W^\sharp$).
\item[(b)] None of these terms is independent of the cut position \\(Just make the cut so all additional cells end up on the same side).
\item[(c)] The indices $\six_\pm(W)$ are not stable under non-gentle, though local perturbations. \\(Compare $W^\flat_R$ and $W^\sharp_R$).
\item[(d)] The sum of the two indices $\six(W)=\six_-(W)+\six_+(W)$, corresponding to the row sums $\sixL(W')$ and $\sixR(W')$ in \eqref{si22cross}, might have such desired stability.
\end{itemize}

\noindent{{\bf Column sums:} (See \Qref{sec:indices})
Using gentle decouplings one can indeed argue that the column sums are good invariants, namely independent of the cut position and cut details. To see this one can consider two decouplings sufficiently far apart \cite{letter}. This leaves a middle piece $W_M$, i.e., an admissible unitary with respect to a finite dimensional representation on a direct sum of cells $\HH_x$, which we assumed to be balanced. Hence $\six(W_M)=0$, and this is just the difference of the indices obtained from the different cuts. This also shows that $\sixL(W)$ and $\sixR(W)$ are independent of other details in the choice of $W'$.

The tricky part is now to show that $\sixL(W)$ and $\sixR(W)$ are also stable under global continuous perturbations. This is by no means obvious. Although the decoupling construction described in \Qref{sec:decoup} seems nearly canonical it is {\it not} ``homotopy continuous'' as the above example shows. The arrows in the following diagram indicate which walks in the crossover perturbation example can be continuously and admissibly deformed into each other.
$$\begin{tikzpicture}
  \node (a) {$W^\flat$};
  \node (b1) [right=of a] {$W^\flat_L$};
  \node (b2) [right=-1mm of b1] {$\oplus$};
  \node (b3) [right=-1mm of b2] {$W^\flat_R$};
  \node (c) [below=of a] {$W^\sharp$};
  \node (d1) [right=of c] {$W^\sharp_L$};
  \node (d2) [right=-1mm of d1] {$\oplus$};
  \node (d3) [right=-1mm of d2] {$W^\sharp_R$};
  \draw[<->] (b1) -- (d1) node [left,midway,sloped] {\(\not\)};
  \draw[<->] (b3) -- (d3) node [left,midway,sloped] {\(\not\)};
  \draw[<->] (a) to node {} (b1);
  \draw[<->] (a) to node {} (c);
  \draw[<->] (c) to node {} (d1);
\end{tikzpicture} $$
So there is no way to do the decoupling construction in a way that homotopic walks lead to homotopic left half-walks. Nevertheless, the indices $\sixL(W)$ and $\sixR(W)$ are also homotopy stable. We show this in \Qref{sec:indices} by choosing a decoupling method which is not only not ``gentle'' but even destroys unitarity. Indeed the homotopy stability is easily shown for deformations in the extended class of ``essentially unitary operators''.

\noindent{{\bf Non-gentle perturbations:} (See \Qref{sec:homotopy-indices})
It is clear from the example, conclusion (c), that there are local perturbations, which cannot be contracted to the identity. It turns out that this effect is precisely classified by another index quantity, which we call the relative index $\sixrel W':W$ of a local  (more generally, a compact) perturbation $W'$ of $W$. This vanishes if and only if $W'$ can be contracted locally to $W$ (\Qref{lem:contract}). But could there be a perturbation which cannot be contracted locally, but by some large scale deformation affecting the entire walk? This is excluded (\Qref{thm:locpert}) by showing that the relative index is just the difference of the ``absolute'' indices $\six_+(W')$ and $\six_+(W)$. So the row sums in \eqref{si22cross} provide the complete classification of non-gentle perturbations.

\noindent{\bf Summary:}
The row sums and the column sums in the matrix \eqref{si22cross} are homotopy invariants, and independent of where the cut is made. However, the individual entries have neither stability. This leaves us with three independent homotopy invariants.
On the other hand, if we want stability also with respect to possibly non-gentle local perturbations, only the column sums provide invariants. These can be defined also independently of the existence of a gentle decoupling, and are also stable with respect to compact perturbations.
We will show in \Qref{sec:complete} that in either category the invariants described are complete.

% compact perturbations
\section{Indices stable under compact perturbations}\label{sec:indices}

As noted after \Qref{def:pertsorts}, in the Hamiltonian case every compact perturbation is gentle, since the convex combination $H_t=(1-t)H_1+tH_2$ provides a continuous connection.
The difficulty in the unitary case lies in keeping unitarity along the connecting path. The approach we take in this section is to waive unitarity for the connecting
path, keeping only a weakened condition: by an {\dff essentially unitary operator} $W$ we mean one such that
$W^*W-\idty$ and $WW^*-\idty$ are both compact operators. Then, for any compact
operator $K$, $W+K$ is also essentially unitary, and so is any convex combination $(1-t)W_1 +t W_2$ if $W_2-W_1$ is compact. So in the enlarged class compact perturbations become gentle.

At the same time all difficulties in getting decouplings vanish: if we project away the off-diagonal matrix blocks with respect to some $P_{\geq a}$, we normally destroy unitarity. But since these blocks are compact, we are still left with an essentially unitary operator.

With all the subtleties of the unitarity conditions gone we can thus expect to get a theory which is as straightforward as the Hamiltonian case. It is clear that in this theory the homotopy classes become much larger, so some features (like walks differing by a non-gentle compact perturbation) become wiped out. But enough remains to get a theory of $\sixL,\sixR$, i.e.\ the column sums in \eqref{si22cross}. We will come back to the questions of strictly unitary homotopy, and thus a theory appropriate for the row sums in \Qref{sec:homotopy-indices} and \ref{sec:decoup}.

\subsection{Setting}\label{sec:compactsetting}
We can now go through the assumptions in \Qref{sec:firstsetting}, making appropriate relaxations.
The cell structure of the Hilbert space and the symmetry types and operators will be unchanged.

The essential gap condition is also unchanged, but we can no longer phrase it in terms of the eigenspaces of $W$. Indeed, $W$ might not be diagonalizable. Instead we use the formulation that the image $\calk W$ of $W$ in the Calkin algebra, which is still a bona fide unitary element, does not have $\pm1$ in its spectrum.
Expressed directly in terms of $W$ this means that there are bounded operators
$R_1,R_{-1}$, namely some preimages of the resolvent of $\calk W$ in the Calkin algebra at $\pm1$, such that $R_1(W-\idty)-\idty$ and $R_{-1}(W+\idty)-\idty$ are
compact operators.

The symmetry conditions for $W$ required by the symmetry type will be assumed to be satisfied exactly.
When all this is satisfied, we call $W$ an {\dff admissible essentially unitary} operator. The connection with the Hamiltonian case is made by introducing the imaginary part of $W$
\begin{equation}\label{IM}
  \IM W=\frac1{2i}\,{(W-W^*)}.
\end{equation}
Note that $\IM W$ is exactly (not merely ``essentially'') Hermitian. It also satisfies the admissibility conditions for Hamiltonians with respect to symmetries as described in \Qref{sec:abstypes}. Finally, it has an essential gap at $0$, which is seen most readily by looking at the image of $\im W$ in the Calkin algebra, $\calk{\IM W}=(\calk W-\calk W^*)/(2i)$, and using the spectral mapping theorem. This allows the following definition.

\begin{defi} \label{def:six4essU}
	For an admissible essentially unitary operator $W$ we define
	\begin{equation}\label{six4essU}
	\six(W)=\six(\IM W).
	\end{equation}
\end{defi}

When $W$ is exactly unitary, this coincides with the earlier definition $\six(W)=\six_-(W)+\six_+(W)$.
Indeed, for unitary $W$ the $0$-eigenspace of $\im W$ is just the direct sum of the eigenspaces at $-1$ and $+1$. But an essentially unitary operator might not even be normal or diagonalizable, so an appropriate direct definition of $\six_+(W)$ would need additional considerations.

In the literature one also finds another reduction of the unitary case to the Hamiltonian case, namely by using the ``effective Hamiltonian'' $H=i\log W$ \cite{Asbo2,Kita}. This is not so useful to us: it not only destroys locality properties of the walk, but in order to preserve the symmetry properties, one needs to put the branch cut of the logarithm on the negative axis, directly through one of the points where we want to study additional eigenvalues.

\subsection{Indices of admissible essentially unitary walks}\label{sec:indiceshamwalk}
For an admissible essentially unitary operator to be an \emph{essentially unitary walk} we demand the same essential locality condition as before: $\Pg aW\Pl a$ and $\Pl aW\Pg a$ are compact operators for some (and hence all) $a\in\Ir$.
As noted in the previous subsection, for an admissible essentially unitary operator $W$ we define $\six(W)$ as the symmetry index of the representation in the $0$-eigenspace of $\IM W$. Now for any cut-point $a\in\Ir$ we define
\begin{equation}\label{sixRL}
\sixL(W)=\six(\Pl a  W\Pl a)  \quad\mbox{and}\  \sixR(W)=\six(\Pg a  W\Pg a),
\end{equation}
where $\Pl a W\Pl a$ and $\Pg a W\Pg a$ are considered as operators on the respective half-spaces.\\
The following Theorem collects a few basic properties.

\begin{thm}\label{thm:sixProps}\
	\begin{itemize}
		\item[(1)] The indices $\six(W)$, $\sixL(W)$, and $\sixR(W)$ are invariant under gentle as well as compact perturbations of $W$.
		\item[(2)] $\six(W)=\sixL(W)+\sixR(W)$.
		\item[(3)] The definitions \eqref{sixRL} do not depend on the cut-point $a\in\Ir$.
		\item[(4)] Let $W'=W_L\oplus W_R$ be a decoupled local perturbation of $W$. Then $\sixL(W)=\six(W_L)$ and  $\sixR(W)=\six(W_R)$.
		\item[(5)] For a translation invariant $W$: $\six(W)=0$.
	\end{itemize}
\end{thm}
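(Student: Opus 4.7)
My plan is to reduce every claim to properties of the Hermitian operator $\IM W$, so that \Qref{pro:homoto} together with the elementary facts about $\six$ from \Qref{prop:six} apply. Start with (1). Homotopy invariance of $\six(W)=\six(\IM W)$ along a gentle path follows immediately from \Qref{pro:homoto}, since $W\mapsto\IM W$ is norm-continuous and sends essentially unitary operators with essential gap at $\pm1$ to Hermitian operators with essential gap at $0$. For compact perturbations, the convex combination $W_t=(1-t)W_0+tW_1$ stays essentially unitary (the difference is compact, so $\calk{W_t}=\calk{W_0}$ is unitary with unchanged spectrum), and the symmetry conditions are $\Rl$-linear in $W$, so the path is admissible and gentle in the essentially unitary class. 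The same arguments handle $\sixL,\sixR$ once one checks that $\Pg aW\Pg a$ is admissible on $\Pg a\HH$: symmetries commute with $\Pg a$ because they are cell-local; essential unitarity follows from $(\Pg aW\Pg a)^*(\Pg aW\Pg a)=\Pg aW^*W\Pg a-\Pg aW^*\Pl aW\Pg a=\Pg a+\text{compact}$ using essential locality; and the essential gap at $0$ of $\Pg a\IM W\Pg a$ on $\Pg a\HH$ follows because $\IM W$ differs from the block-diagonal $\Pl a\IM W\Pl a\oplus\Pg a\IM W\Pg a$ by a compact off-diagonal term, so essential spectra coincide, and $0$ is absent from both.

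For (2), write $W$ as the block-diagonal part $\Pl aW\Pl a\oplus\Pg aW\Pg a$ plus the compact off-diagonal $\Pl aW\Pg a+\Pg aW\Pl a$. By (1) the index is preserved, and since the symmetries respect the splitting $\HH=\Pl a\HH\oplus\Pg a\HH$, the $0$-eigenspace of the imaginary part of the block-diagonal operator is a direct sum of the two $0$-eigenspaces as symmetry representations; additivity of $\six$ under direct sums (property (2) of \Qref{prop:six}) then gives $\six(W)=\sixL(W)+\sixR(W)$. For (3), given $a<b$, set $P_{[a,b)}:=\Pg a-\Pg b$, which is finite rank, and compare $\Pg aW\Pg a$ with the operator $0\oplus\Pg bW\Pg b$ on $\Pg a\HH=P_{[a,b)}\HH\oplus\Pg b\HH$. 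The difference is supported on $P_{[a,b)}$ and hence finite rank, so by (1) the two have the same index. The imaginary part of $0\oplus\Pg bW\Pg b$ has $0$-eigenspace equal to $P_{[a,b)}\HH\oplus\ker\IM(\Pg bW\Pg b)$; the first summand carries the direct sum of the cell representations on $\HH_a,\dots,\HH_{b-1}$, which is balanced and therefore contributes index $0$, while the second contributes $\six(\Pg bW\Pg b)$. So $\sixR$ is cut-independent, and $\sixL$ is handled analogously. For (4), by (3) align the definition of $\sixR$ with the cut used for the decoupling; then $\Pg aW'\Pg a=W_R$ exactly, and since $W'$ is a local hence compact perturbation of $W$, (1) gives $\sixR(W)=\sixR(W')=\six(W_R)$, and symmetrically for $\sixL$.

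Finally for (5), translation invariance together with essential unitarity forces exact unitarity, because the only translation-invariant compact operator is $0$; admissibility then amounts to strict gaps at $\pm1$, the $0$-eigenspace of $\IM W$ is trivial, and $\six(W)=0$. The main technical hurdle I anticipate is verifying, in (1), that $\Pg aW\Pg a$ is genuinely admissible essentially unitary on $\Pg a\HH$ with essential gap preserved along gentle paths; everything else is formal, using only \Qref{pro:homoto}, essential locality, and the additivity of $\six$ over direct sums.
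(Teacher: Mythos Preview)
Your proof is correct and follows essentially the same approach as the paper: reduce everything to the Hermitian operator $\IM W$, use \Qref{pro:homoto} for homotopy invariance, convex interpolation for compact perturbations, essential locality to peel off compact off-diagonal blocks, and the balancedness of the cell representations for cut-independence. Two minor variations are worth noting: in (3) you compare with $0\oplus\Pg bW\Pg b$ rather than the paper's $P_MWP_M\oplus\Pg bW\Pg b$, which is arguably cleaner since the kernel of $\IM(0)$ is manifestly all of $P_{[a,b)}\HH$; and in (5) you argue that a translation-invariant compact operator must vanish (forcing exact unitarity and then a strict gap), whereas the paper diagonalizes $\IM W$ by Fourier transform and observes that a zero of an eigenvalue band would lie in the essential spectrum. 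Both routes are equally short and valid. Your explicit verification that $\Pg aW\Pg a$ is admissible essentially unitary on $\Pg a\HH$ fills in a detail the paper leaves implicit.
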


\begin{proof}
	\noindent(1)\ Let $t\mapsto W(t)$ be a norm continuous path of admissible essentially unitary walks. Then $\IM W_t$, $\Pl a \IM W_t\Pl a$, and $\Pg a \IM W_t\Pg a$ likewise depend continuously on $t$. They are also essentially gapped, when considered as operators on $\HH$, $\Pl a\HH$, and $\Pg a\HH$, respectively. Therefore by \Qref{pro:homoto}, the respective indices are constant along the path. When $W'$ is a compact perturbation of $W$,
	$(1-t)W+tW'=W+t(W'-W)$ is such a continuous path.
	
	\noindent(2)\ The right hand side is the symmetry index of $\Pl a\IM W\Pl a\oplus\Pg a \IM W\Pg a$, which differs from $\IM W$ by $\Pg aW\Pl a+\Pl aW\Pg a$. By the weakened locality condition in \Qref{sec:compactsetting} these are both compact operators. Hence the equality follows from (1).
	
	\noindent(3)\ Let $a<b$, and let $P_M=\Pg a\Pl b$ be the projection onto the cells $\HH_x$ with $a\leq x<b$. Then as in (2) we have that $P_MWP_M\oplus \Pg bW\Pg b$ is a compact perturbation of  $\Pg aW\Pg a$, so
	$\sixR_a(W)=\six(P_MWP_M)+\sixR_b(W)$, where the subscripts indicate the cut point used for the definition. The difference is the symmetry index of a symmetry-admissible Hamiltonian $\IM P_MWP_M$ on a finite dimensional space. Since the complement of the $0$-eigenspace is always balanced for such an operator, the difference term is just the index of the symmetry representation on the cells $[a,b)$, which we have assumed to be balanced from the outset.
	
	\noindent(4)\ Since local perturbations are compact, this is trivial, and is added here only to show the consistency of the indices defined in this section with the decoupling approach in \Qref{sec:twosettings}.
	
	\noindent(5)\ The Hamiltonian $\IM W$ can be diagonalized jointly with the translations, resulting in a set of eigenvalue functions $\veps_\alpha(k)$, $k\in[-\pi,\pi]$. When one of these functions has a zero (or is constantly equal to zero) $0$ must be in the essential spectrum of $\IM W$, contradicting admissibility. Hence the spectrum of $\IM W$ is strictly gapped.
\end{proof}

With every admissible essentially unitary walk or Hamiltonian we associate the pair of indices $\bigl(\sixL(W),\sixR(W)\bigr)$, which is clearly invariant under both homotopy and compact admissible perturbations. Because of the independence of the cut point we can compute $\sixR(W)$ as far to the right as we please. This is just about the opposite of a ``locally computable invariant'' \cite{OldIndex}: knowing a walk on just a finite piece allows no conclusion about the invariants whatsoever.

\subsection{Bulk-boundary correspondence}\label{sec:bulkboundary}

The bulk-boundary correspondence is the prediction of protected $\pm1$-eigenvalues at the interface between two bulks with distinct indices. It is characteristic for our theory that we can allow a very broad definition of ``bulk'' in this statement. Of course, translation invariance is one possibility. But it is also sufficient to take a disordered system in which only the statistical law for the local coins is translation invariant \cite{dynloc,dynlocalain}. Similarly, an almost periodic system may play the role of a bulk. What we need is only that a bulk system has a proper gap.
Then $\six_\pm(W)=0$ and $\sixL(W)+\sixR(W)=0$, so just one of these, say $\sixR(W)$,  is enough to
determine the invariants.

With this we are now able to formulate the
bulk-boundary correspondence:

\begin{cor}[Bulk-boundary correspondence]\label{cor:bulkedgecor}
	Suppose we have two bulk walks $W_R$ and $W_L$ and some $W$
	which is a unitary crossover of the two in the sense that $W$ coincides with $W_R$ far to the right, i.e.,
    $\lim_{a\to+\infty}\norm{\Pg a (W-W_R)\Pg a}=0$ and,
	similarly $W$ coincides with $W_L$ far to the left. Then
	\begin{equation}\label{bulkedge}
	\six(W)=\sixL(W)+\sixR(W)=-\sixR(W_L)+\sixR(W_R).
	\end{equation}
	Whenever $W_L$ and $W_R$ belong to different classes this is nonzero, and $W$ must have at least one eigenvector with eigenvalue $+1$ or $-1$. Moreover, the absolute value of this number is a lower bound on the
	dimension of the combined $\pm1$-eigenspaces.
\end{cor}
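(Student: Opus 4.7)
The plan is to derive all three parts of the conclusion by combining \Qref{thm:sixProps} with the local norm-constancy of the symmetry index from \Qref{pro:homoto}. Before anything else I would check that $W$ itself is admissible: it is strictly unitary, inherits essential locality from the two bulks it matches at infinity, and its essential spectrum coincides with the union of those of $W_L$ and $W_R$, both of which are strictly gapped. Hence $W$ has an essential gap at $\pm 1$, making $\sixL(W)$, $\sixR(W)$, and $\six_\pm(W)$ all well-defined. The first equality $\six(W) = \sixL(W) + \sixR(W)$ is then immediate from \Qref{thm:sixProps}(2).

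For the second equality I would invoke cut-point invariance, \Qref{thm:sixProps}(3), to push the cut point $a$ arbitrarily far to the right. The hypothesis $\norm{\Pg a(W - W_R)\Pg a} \to 0$ implies that the Hamiltonians $\im(\Pg a W \Pg a)$ and $\im(\Pg a W_R \Pg a)$ on $\Pg a\HH$ become arbitrarily close in norm. The essential gap of $W_R$ at $\pm 1$ yields a lower bound, independent of $a$, on the essential gap of $\im(\Pg a W_R \Pg a)$ at $0$, so the $\veps$ of \Qref{pro:homoto} may be chosen uniformly in $a$. Local constancy of the index then gives $\sixR(W) = \six(\Pg a W \Pg a) = \six(\Pg a W_R \Pg a) = \sixR(W_R)$ once $a$ is large enough. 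The symmetric left-sided argument, together with the bulk identity $\sixL(W_L) = -\sixR(W_L)$ noted just before the corollary (valid because $\six(W_L)$ vanishes by strict gappedness), gives $\sixL(W) = -\sixR(W_L)$. Adding completes the equality.

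For the existence of $\pm 1$-eigenvectors, observe that $W$ is strictly unitary so, by \Qref{def:sixOp}, $\six(W) = \six_+(W) + \six_-(W)$ with $\six_\pm(W)$ the symmetry index of the representation on the eigenspace $E_\pm$. If $\sixR(W_R) \neq \sixR(W_L)$ then $\six(W) \neq 0$, forcing at least one of $E_\pm$ to be nontrivial. For the dimension bound I would check case by case using \Qref{Tab:sym} that $|\six(\rho)| \leq \dim \rho$ for every symmetry type and every finite-dimensional representation, in the natural meaning of $|\cdot|$ on $\igS$: for the chiral types \symAIII, \symBDI, \symCII{} this is immediate from $\six(\rho) = \tr\ch$ with $\ch^2 = \idty$, and for \symD{} and \symDIII{} it follows from the mod-$2$ and mod-$4$ formulas for the dimension. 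The triangle inequality on $\igS$ then yields $|\six(W)| \leq |\six_+(W)| + |\six_-(W)| \leq \dim E_+ + \dim E_-$.

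The main technical obstacle is ensuring that \Qref{pro:homoto} can be applied with constants uniform in the cut point $a$: I must verify both a lower bound on the essential gap of $\im(\Pg a W_R \Pg a)$ on $\Pg a\HH$ and an upper bound on the dimension $d$ entering the constant $\veps=\delta/(4d+2)$, both independent of $a$. These bounds follow from strictness of the bulk gap together with the essential locality of $W_R$, which ensures that $\Pg a W_R \Pg a$ differs from a block of a genuine unitary by a compact operator of controlled essential norm; while routine, this robustness check deserves to be made explicit.
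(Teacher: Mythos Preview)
Your overall structure is right, and the reductions for the first equality and for the dimension bound are fine. The substantive difference from the paper is in how you establish $\sixR(W)=\sixR(W_R)$. You try to push the cut point $a$ out and invoke \Qref{pro:homoto} directly, which forces you to control the constant $\veps$ uniformly in $a$. You correctly flag this as the obstacle, but your claimed resolution is not convincing: the $\veps$ in \Qref{pro:homoto} depends on the \emph{actual} gap $\delta$ of $\IM(\Pg a W_R\Pg a)$ around $0$ (not merely the essential gap) and, for the chiral types, on the dimension $d$ of its $0$-eigenspace. For a general bulk $W_R$ that is only assumed strictly gapped (not translation invariant), neither of these is obviously bounded uniformly in $a$; isolated eigenvalues of $\IM(\Pg a W_R\Pg a)$ can sit anywhere in the essential gap and there is no a priori bound on how many there are. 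So as written the argument has a genuine gap.

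The paper sidesteps this entirely by observing that the hypothesis already gives compactness, not just smallness: for fixed $a$ and any $b>a$,
\[
  \Pg a(W-W_R)\Pg a-\Pg b(W-W_R)\Pg b
\]
has finite rank (each term in the difference contains the finite-rank factor $\Pl b\Pg a$), and $\norm{\Pg b(W-W_R)\Pg b}\to0$ exhibits $\Pg a(W-W_R)\Pg a$ as a norm limit of finite-rank operators, hence compact. Then $\Pg a W\Pg a$ and $\Pg a W_R\Pg a$ are compact perturbations of each other, and \Qref{thm:sixProps}(1) gives $\sixR(W)=\sixR(W_R)$ at once, with no uniform estimate needed. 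If you want to keep your route through \Qref{pro:homoto}, the clean fix is exactly this: once the difference is compact, run the convex interpolation $(1-t)\Pg aW\Pg a+t\,\Pg aW_R\Pg a$ and apply \Qref{pro:homoto} locally along the path, as in the proof of \Qref{thm:sixProps}(1). That replaces the global uniformity requirement by a compactness-of-the-path argument.

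One minor remark on your opening paragraph: admissibility of $W$ (symmetry, essential locality, essential gap) is taken as part of the hypotheses in the paper rather than derived from the crossover condition; your claim that the essential spectrum of $W$ equals the union of those of $W_L$ and $W_R$ is plausible but is itself a nontrivial statement that would need the same compactness argument above to justify.
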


\begin{proof}
    We will first show that $\sixR(W)=\sixR(W_R)$. The argument for $\sixL(W)=\sixL(W_L)$ is completely analogous. By \Qref{thm:sixProps} (1) it is sufficient to show that $\Pg a(W-W_R)\Pg a$ is compact. Pick $b\in\Ir$ with $a<b$. Then
    \begin{equation*}
      \begin{split}
        \Pg a(W-W_R)\Pg a=\Pl b\Pg a(W-W_R)\Pg a\Pl b+\Pg b(W-W_R)\Pg a\Pl b\\
                            +\Pl b\Pg a(W-W_R)\Pg b+\Pg b(W-W_R)\Pg b.
      \end{split}
    \end{equation*}
    The first three terms on the right hand side each have finite rank, because they contain the factor $\Pl b\Pg a$, which is a finite rank projector. Hence $\Pg a(W-W_R)\Pg a-\Pg b(W-W_R)\Pg b=:K_b$ has finite rank. Since $\norm{\Pg b(W-W_R)\Pg b}\to0$ this gives an explicit norm approximation of $\Pg a(W-W_R)\Pg a$ by finite rank operators. This proves \eqref{bulkedge}.
	
Now $\six(W)$ is the index of the symmetry representation in the combined eigenspaces at $\pm1$, and for all symmetry types the absolute value of the index is a lower bound to the dimension of this space.  In particular, when $\sixR(W_R)\neq\sixR(W_L)$, we get a non-zero lower bound.
\end{proof}

Note that the theory does not predict whether the eigenvalues will be $+1$ or
$-1$. This depends on the index $\six_-(W)$, but from the given asymptotic data this value cannot be inferred: any compact non-gentle perturbation will produce a variant of engineering the crossover, for which the location of the protected eigenvalues is changed. Examples are given in \cite{letter,explorerA,UsOnTI}.

% examples
\section{Examples}\label{sec:examples}

\subsection{Translation invariant models: Generalities}
As we argued in the previous section, translation invariant systems naturally appear in applications as bulk systems (see also \Qref{sec:bulkboundary} and \Qref{sec:finite}). The simplification brought about by this assumption is considerable, allowing effectively explicit formulas for the index. Moreover, as we will describe in this section the classification reduces precisely \cite{UsOnTI} to the well-known one in terms of vector bundles over the quasi-momentum space \cite{Kita,kitaevPeriodic}. The relevant explicit formulas are then mostly known \cite{Kita,Asbo2} and will be described below.
An explicit derivation covering also additional aspects of the translation invariant case will be given in \cite{UsOnTI}.

The main simplification due to translation invariance is the possibility to partly diagonalize the walk by Fourier transform. After Fourier transform the Hilbert space becomes
$\mathcal L^2([-\pi,\pi),dk)\otimes\Cx^d$, where the interval $[-\pi,\pi)$, also called the Brillouin zone or the quasi-momentum space, should be considered as a parametrization of the circle.
We usually consider the Hilbert space as the space of $\Cx^d$-valued square integrable functions on $[-\pi,\pi)$.
The walk then acts by multiplying $\psi(k)$ with a unitary operator $\Wh(k)$ on $\Cx^d$. For a strictly local walk each entry of $\Wh(k)$ is a polynomial in $\exp(\pm ik)$. Our standard assumption of essential locality translates exactly \cite{UsOnTI} into the continuity of $k\mapsto\Wh(k)$ on the circle, i.e., with periodic boundary condition. The essential gap condition means that $\pm1$ are not in the spectrum of any $\Wh(k)$. For the action of symmetries one has to take into account that antiunitary symmetries also reverse the sign of quasi-momentum $k$. Thus, if $\sigma$ is an antiunitary symmetry, which acts in each cell as the finite dimensional antiunitary operator $\sigma_1$, we have $\widehat{\sigma W\sigma^*}(k)=\sigma_1 \Wh(-k)\sigma_1^*$. Thus the symmetry conditions become
\begin{equation}\label{tisymm}
  \ph_1\Wh(k)\ph_1^*=\Wh(-k),\quad \rv_1\Wh(k)\rv_1^*=\Wh(-k)^*,\ \text{and}\ \ch_1\Wh(k)\ch_1^*=\Wh(k)^*.
\end{equation}
The index 1 for ``single cell'' will be omitted in the sequel.

Let $B(k)$ denote the upper band projection, i.e., the eigenprojection of  $\Wh(k)$ for the eigenvalues with positive imaginary part. Because of the gap condition $k\mapsto B(k)$ is also continuous. Since the continuous functional calculus preserves essential locality, we can deform all eigenvalues of $\Wh(k)$ to $\pm i$, and get the {\it flat-band walk} with $\Wh_\flat(k)=iB(k)-i(\idty-B(k))=2iB(k)-i\idty$. Being homotopic, $W$ and $W_\flat$ have all the same invariants. This underlines that the band projection $B(k)$ as a function of $k$, i.e., the Hermitian vector bundle over the circle (parametrized by $k$) with fiber $B(k)\Cx^d$ is the key geometric object to look at. Indeed this is the starting point for the classification of symmetric Hamiltonians in terms of K-Theory.

It is a remarkable fact that in spite of the much larger flexibility of general (not translation invariant) walks the homotopy classes are the same. Given our theory, this is actually easy to show. For the sake of discussion denote by $\igS$ the classifying group of symmetry type from our \Qref{Tab:sym}, and by $\igtiS$ the classifying group for translation invariant systems, based on the classification of vector bundles \cite{kitaevPeriodic}. By inspection the abstract groups are isomorphic, but it may be good to briefly look at the mathematical questions involved. Since every translation invariant walk $W$ is assigned an index $\sixR(W)\in\igS$, which is unchanged under homotopy, there must be a map $\iota:\igtiS\to\igS$. Since on both sides addition is defined in terms of direct sums of walks, $\iota$ must be a homomorphism for addition. Now $\iota$ might not be onto, because some classes in $\igS$ might fail to be realizable by translation invariant $W$. This possibility is easily dispelled by providing for each type $\symS$ a translation invariant walk $W$ so that $\sixR(W)$ generates $\igS$ (see next subsection). Thus all of our classes contain a translation invariant example. On the other hand, two translation invariant walks might be deformable into each other only by breaking translation invariance along the way. These would correspond to distinct elements in $\igtiS$, so $\iota$ might fail to be injective. In fact, this does happen, in a way, for symmetry type $\symD$. More precisely, we can construct \cite{UsOnTI} translation invariant walks $W,W'$ on the same fixed cell structure, which cannot be deformed into each other by keeping translation invariance, but can be if we group neighbouring cells, and keep only invariance by even translations. One can also find $W''$ such that $W\oplus W''$ can be deformed to $W'\oplus W''$. Thus the Grothendiek construction used in the definition of $\igtiS$ enforces a cancellation law, so that $W$ and $W'$ correspond to the same element after all. The conventional term for this is ``stable homotopy'', which is thus distinct from ordinary homotopy for type \symD, but apparently not for the other types \cite{UsOnTI}.

\subsection{The generating example}\label{sec:genex}

\begin{figure}
	\tikzstyle{block} = [minimum width = .15, minimum height = 0.15]
\tikzstyle{dot} = [circle,fill,inner sep = 2]
\tikzstyle{i} = [rectangle,fill=gray!30,fill opacity=0, text opacity=1, inner sep=.2, outer sep=0]
\tikzstyle{arr} = [->, ultra thick]

\begin{tikzpicture}
	[
	>=latex',auto,
		]
		\draw[line width=3pt] (-4.25,2) +(-.25,0) -- +(.25,0) +(0,-.25) -- +(0,.25);
		\draw[line width=3pt] (-4.25,1) +(-.25,0) -- +(.25,0);

		\draw[line width=2pt,dashed] (-3.5,1.5) -- (-4.1,1.5);
		\draw[line width=2pt,dashed] (3.5,1.5) -- (4.1,1.5);

		\foreach \i in {-3,-2,...,3}{
			\node[block] (a\i) at (\i,0.4) {$\i$};
			\node[dot] (u\i) at (\i,2) {};
			\node[dot] (d\i) at (\i,1) {};	
			
			\draw[gray!30,ultra thick] 
			({\i-.25},.75) -- ({\i+.25},.75) --
			({\i+.25},2.25) -- ({\i-.25},2.25) -- cycle;
		}

		\draw[line width=3pt, gray!50] (-.5,0) -- (-.5,2.75);
	
		\foreach \i in {-3,-2,...,2}{
			\node[i] at ({\i+.5},1.5) {$\cdot i$};	
		}
	
		\draw[red,ultra thick] (-.35,1.65) rectangle +(.7,.7);

		\path[arr] (u-2) edge[bend right] node [left] {} (d-3);
		\path[arr] (u-1) edge[bend right] node [left] {} (d-2);
		\path[arr] (u0) edge[bend right] node [left] {} (d-1);
		\path[arr] (u1) edge[bend right] node [left] {} (d0);	
		\path[arr] (u2) edge[bend right] node [left] {} (d1);
		\path[arr] (u3) edge[bend right] node [left] {} (d2);
		
		\path[arr] (d-3) edge[bend right] node [left] {} (u-2);
		\path[arr] (d-2) edge[bend right] node [left] {} (u-1);
		\path[arr] (d-1) edge[bend right] node [left] {} (u0);
		\path[arr] (d0) edge[bend right] node [left] {} (u1);	
		\path[arr] (d1) edge[bend right] node [left] {} (u2);
		\path[arr] (d2) edge[bend right] node [left] {} (u3);

\end{tikzpicture}
	\caption{Schematic representation of the generating example walk. It acts by swapping the cells indicated by arrows and multiplying by $i$.  The red box marks the kernel of $PWP$, where $P$ is the projection onto the cells to the right of the gray divide.}
	\label{fig:Wgenerate}
\end{figure}
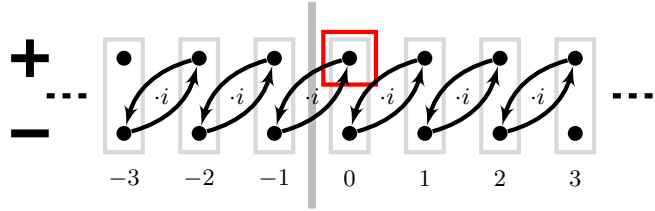

As the simplest example, which actually generates all index groups, we consider the walk on $\ell^2(\Ir)\otimes\Cx^2$, shown in \Qref{fig:Wgenerate}
 (compare also \cite[Fig. 6]{Asbo1}). Its action in a basis $\ket{x,s}$ with $x\in\Ir$ and $s=\pm1$ is
\begin{equation}\label{gentwalk}
  W\ket{x,+1}=i\ket{x-1,-1}\qquad\mbox{and}\qquad W\ket{x,-1}=i\ket{x+1,+1}.
\end{equation}
That is, $W$ swaps pairs of basis elements belonging to neighbouring cells. The factor $i$ ensures that $W^2=-\idty$, so the eigenvalues of $W$ are $\pm i$. Clearly, $W$ is translation invariant. For involutive symmetries we have firstly $\rv$ as complex conjugation in the position space basis. Hence $\rv^2=\idty$, and $\rv W=-W\rv=W^*\rv$. The chiral symmetry acts like $\sigma_3$ in each cell, i.e., $\ch\ket{x,s}=s\ket{x,s}$. Then $\ch^2=\idty$, and $\ch W=-W\ch=W^*\ch$. Hence this is a walk of symmetry type \symBDI. For comparison with the formulas based on translation invariance, we will need the walk matrix in momentum space, which is
\begin{equation}\label{Wgen}
  \Wh(k)= \begin{pmatrix}
    0&i e^{ik}\\i e^{-ik}&0
  \end{pmatrix}.
\end{equation}

Let us compute the index $\sixR(W)$ according to \eqref{sixRL}, which requires first to determine the Hermitian operator
\begin{equation}\label{imgent}
  \IM W=\frac1{2i}\,{(W-W^*)}=\frac1{i}\,W.
\end{equation}
We are interested in the null space of $P(\IM W)P$. The pairs $(\ket{x,+1},\ket{x-1,-1})$ with $x>0$ are unaffected by $P$.  However, $\ket{0,+1}$ is mapped by $(\IM W)P$ to $\ket{-1,-1}$ and to zero by $P$. This spans the null space of $P(\IM W)P$. On this space $\ch=+1$, so for the symmetry type \symBDI, we get $\sixR(W)=1$. By \Qref{Tab:forget} we obtain the indices for symmetry type \symAIII{} (by keeping $\ch$) and \symD{} (by keeping $\ph$). In these cases, we also get  $\sixR(W)=1$ where ``$1$'' is the generator of the respective group $\Ir$ or $\Ir_2$.

This leaves two symmetry types with non-trivial index group, namely \symCII{} and \symDIII. Note that these have in some sense doubled indices, which suggests doubled versions of the above $W$ as generating examples. For symmetry type \symCII, we choose $W_D=W\oplus W$, which is chirally symmetric with respect to $\ch_D=\ch\oplus\ch$. In order to get a \symCII{} walk, we need a particle-hole symmetry with $\ph_D^2=-\idty$, which can be constructed by setting $\ph_D=\left( \begin{smallmatrix}0 & -\ph\\ \ph & 0\end{smallmatrix}\right)$, where $\ph$ is the symmetry used for $W$. Then the admissibility condition $\ph W=W\ph$ for $W$ readily implies the one for $W_D$. By construction, the index $\sixR(W_D)=2$ can then trivially be obtained by the previous example, and generates  the required index group $2\Ir$.

Similarly, for symmetry type \symDIII, we set  $W_D=W\oplus W^*$, which is $\ph_D$-symmetric with $\ph_D=\ph\oplus\ph$. We get a chiral symmetry with $\ch_D^2=-\idty$ by setting $\ch_D=\left( \begin{smallmatrix}0& -\idty\\ \idty & 0\end{smallmatrix}\right)$. Again, the index $\sixR(W_D)$ can easily be obtained from the considerations for $W$, and generates the index group $2\Ir_2$.

\subsection{Index formulas}
We now provide simple procedures for computing the index $\sixR(W)$ from the matrix valued function $k\mapsto\Wh(k)$. In each case we make sure that the procedure is homotopy invariant, behaves correctly under direct sums, and applies even if we only know $\Wh$ to be continuous. With differentiability one could provide also some interpretations by integrals over a Berry connection but we leave that to our more detailed work \cite{UsOnTI}.

\subsubsection{Types \symAIII, \symBDI, and \symCII}\label{sec:tixchiral}
In these cases we have a chiral symmetry $\ch$ with $\ch^2=\idty$. We can write $\Wh(k)$ in block matrix form with respect to the eigenspaces of $\ch$. Since each cell is to be balanced, $\tr\gamma=0$. Hence the eigenspaces have the same dimension, and the off-diagonal block $\Wh_{12}(k)$ is a square matrix. It turns out (see, e.g.\ \cite{UsOnTI}) that $W$ has gaps at $\pm1$ if and only if $\Wh_{12}(k)$ is invertible for all $k$. Therefore,
the curve $c:[-\pi,\pi]\to\Cx$ given by $c(k)=\det\Wh_{12}(k)$ is continuous (with periodic boundary condition) and does not pass through the origin. Hence the winding number $\wind(c)$ of this curve is defined. It does not depend on the bases chosen in the two eigenspaces of $\ch$ for extracting a matrix from the operator $\Wh(k)$. Indeed if $S_1,S_2$ are the unitary operators of such basis changes, $\Wh_{12}(k)$ is replaced by $S_1\Wh_{12}(k)S_2^*$, which modifies $c(k)$ by a constant factor and does not change the winding number.
Since $c(k)$ depends continuously on $W$ (taken in the norm topology), it is a homotopy invariant. For the direct sum of walks we get the direct sum also of the chirally off-diagonal blocks, hence the product of the determinants, and hence the sum of the winding numbers. And finally, the generating example \eqref{Wgen} is already written in the chiral basis, so $\wind(c)=1$. This shows that for the three symmetry types mentioned we have
\begin{equation}\label{tixChiral}
    \sixR(W)=\wind(c).
\end{equation}

\subsubsection{Type \symCII}\label{sec:tixCII}
The formula of the previous section holds also for \symCII. But according to \Qref{Tab:sym} the invariant should always be even. So we need to show that this is automatically true for the winding number \eqref{tixChiral} given the additional symmetry.

\begin{proof}
We can simplify the problem by deforming the bands to be flat, i.e., we can assume $\Wh_{12}(k)$ to be unitary.  Since $\ph$ commutes with $\ch$ it acts separately in each chiral block. These have the same even dimension, so with respect to suitable bases the $\ph$-operators in each block can be chosen to be the same. So with \eqref{tisymm} we are left to show (in streamlined notation): If $k\mapsto U(k)$ is a continuous function of unitaries, and $\ph$ antiunitary with $\ph^2=-1$ such that $\ph U(k)\ph^*=U(-k)$ then $c(k)=\det U(k)$ has even winding number.

Now the symmetry condition immediately implies $\overline{c(k)}=c(-k)$. When we write $c(k)=\exp(2\pi i a(k))$ with $a$ continuous on $[-\pi,\pi]$ the winding number is just $a(\pi)-a(-\pi)$, and the symmetry condition
requires that $a(k)+a(-k)$ is constant. Thus $\wind(c)=2(a(\pi)-a(0))$.  This is indeed even, provided we can show that $a(\pi)-a(0)$ is an integer or, equivalently, that $c(\pi)=c(0)$.

We will thus complete the proof by showing that all unitaries $U$ with $\ph U\ph^*=U$ have the same determinant, namely $1$. The non-real eigenvalues of such $U$ come in complex conjugate pairs, which do not contribute to the determinant. So we only need to look at the $-1$-eigenspace. Since this is left invariant by $\ph$ and $\ph^2=-\idty$ this eigenspace must be even dimensional, so again the contribution to the determinant is $1$.
Note that the only difference to the \symBDI\ case is this very last sentence.
\end{proof}

\subsubsection{Type \symD}\label{sec:tixD}
The invariant is described most easily in terms of the corresponding flat-band walk $\Wh_\flat(k)=2iB(k)-i\idty$, which satisfies $\Wh_\flat(k)^*=-\Wh_\flat(k)$. The symmetry constraint $\ph\Wh_\flat(k)\ph^*=\Wh_\flat(-k)$ then imposes that for $k=0$ and $k=\pi\equiv-\pi$, $\Wh_\flat(k)$ is real and antisymmetric in any basis in which $\ph$ is the complex conjugation.

Let us denote by $\AA$ the set of real antisymmetric unitaries. For such matrices the Pfaffian $\pfaff(U)$ is a well defined polynomial in the matrix elements such that $\det(U)=\pfaff(U)^2$. Since on $\AA$ $\pf(U)$ is obviously real valued, we find that $\pf(U)=\pm1$ for $U\in\AA$. Since $\pfaff$ is continuous this identifies two connected components of $\AA$. We claim that
\begin{equation}\label{tixD}
  (-1)^{\sixR(W)}=\frac{\pfaff(\Wh_\flat(\pi))}{\pfaff(\Wh_\flat(0))}.
\end{equation}
That is, if we write the index group $\ig(\symD)=(\Ir_2,+)$, the index is $0$ when $\Wh_\flat(0)$ and $\Wh_\flat(\pi)$ are in the same connected component of $\AA$, and $1$ otherwise. Obviously, this a homotopy invariant, due to the continuity of the Pfaffian. The addition formula for direct sums follows because the Pfaffian is multiplicative over direct sums. Moreover, for the generating walk \eqref{Wgen} we have $W=W_\flat$ and $\Wh(0)=-\Wh(\pi)$. When expressed in an $\ph$-real basis these matrices become unitary, real, and antisymmetric and equal to the only two such matrices in two dimensions, namely $\pm i\sigma_2$. That is, the index must be $1$.  This can also be seen by determining the index with respect to the symmetry type \symBDI, and forgetting the symmetries other than $\ph$.

Comparing with the chiral symmetry types it is remarkable that the classification in this case depends only on the values $\Wh$ for $0$ and $\pi$ and not on the connecting path. This can be seen directly as follows. For a flat-band walk we have $\det\Wh_\flat(k)\equiv1$, so the connecting path lies in $SU_d$. But then, by the simple connectedness of $SU_d$, any two such connections can be deformed into each other, and hence must give the same index.

\subsubsection{Type \symDIII}\label{sec:tixDIII}
The formula for this symmetry type combines the ideas of the previous two subsections. First of all, it suffices to treat the flat-band case. The block decompositions of the relevant operators are then
\begin{equation}\label{blockDIII}
  \ch=\begin{pmatrix}i&0\\0&-i\end{pmatrix} ,\quad
  \eta=\begin{pmatrix}0&K\\K&0\end{pmatrix} ,\quad
  \Wh(k)=\begin{pmatrix}0&Z(k)\\-Z(k)^*&0\end{pmatrix} ,\quad
\end{equation}
where $K$ is complex conjugation in a suitable basis, and $Z(k)$ is unitary for all $k$. The chiral symmetry is guaranteed by this form, and the particle-hole symmetry becomes $\overline{Z(k)}=-Z(-k)^*$. Hence  $Z(0)$ and $Z(\pi)$ are antisymmetric (although not necessarily real). Now consider the {\it antisymmetric closure} of $Z$, i.e., a continuous path $Z$ in the space of unitary matrices so that $Z(k)$ is the given function for $k\in[0,\pi]$, and $Z(k)$ is antisymmetric for $k\in[\pi,4]$ with  $Z(4)=Z(0)$. As in the chiral case we consider the curve $c(k)=\det Z(k)$ for $k\in[0,4]$ in the complex plane, which avoids the origin, because $Z(k)$ is everywhere unitary. We then set
\begin{equation}\label{tixDIII}
  \sixR(W)\equiv2\,\wind(c)\quad \mod4.
\end{equation}
We have to show that the antisymmetric closure exists, i.e., the set of antisymmetric unitaries is connected. This follows by using that the eigenvalues of such a matrix $U$ come in pairs $\pm\lambda$, and therefore $U=VU_0V^T$, where $V$ is an arbitrary unitary, and $U_0$ is a direct sum of copies of $i\sigma_2$ \cite{Zumino}. By contracting $V$ to the identity we find that all $U$ are connected to $U_0$. The above formula makes sense only if we can show that the result is independent of the choice of antisymmetric closure. Now the difference of the winding numbers of two closures is the winding number of a closed path running entirely in the antisymmetric manifold. On this path the Pfaffian $\pfaff(Z(k))$ is well defined and non-zero. If its winding number is $n$, $c(k)=\det(Z(k))=\pfaff(Z(k))^2$ has winding number $2n$, hence this number is well-defined modulo $2$. The factor $2$ is introduced for conformity with \Qref{Tab:sym}. Another way to put this is to say that the index of a walk is trivial if and only if it can be deformed to a walk with $Z(k)$ antisymmetric for all $k$.

Note that the direct sum property holds for this formula as well, and the forgetting relations are respected.

\subsection{Splitting the Split-Step Walk}\label{sec:splitss}
This example was introduced in \cite{Kita} and has become a testbed for many questions of this theory \cite{Kita2,Asbo1,Asbo2,Asbo4}. We use it here in a generalized (not translation invariant) form to demonstrate some effects of gentle vs.\ non-gentle decouplings.
The model has symmetry type \symBDI, and is of the form
\begin{equation}\label{eq:wss}
  W=BS_{\downarrow}AS_{\uparrow}B,
\end{equation}
where $A=\bigoplus_x A_x$ and $B=\bigoplus_x B_x$ are unitary operators acting sitewise with $\HH_c=\Cx^2$. The operations $S_{\uparrow}$ and $S_{\downarrow}$ represent the right shift of the spin-up vectors, and the left shift of the spin-down vectors, respectively. The chiral symmetry takes the form $\ch=\bigoplus_x \sigma_1$ and $\ph$, if applicable, is given by complex conjugation. Then one easily checks that both shifts commute with $\ph$, and
\begin{equation}
  \ch S_\downarrow=S_\uparrow^*\ch,\qquad\ch S_\uparrow=S_\downarrow^*\ch.
\end{equation}
With this relation one easily verifies that $W$ is admissible for the global \symBDI symmetry, provided each $A_x$ and $B_x$ is admissible for the symmetries acting in each cell, which we will assume from now on. For the standard translation invariant model of this kind we take $A_x=R(\theta_2)$ and $B_x=R(\theta_1/2)$, where $R(\theta)$ denotes the standard real rotation matrix by angle $\theta$. The parameter torus is shown in \Qref{fig:splitsse}. It is partitioned into regions with different index, separated by lines at which a spectral gap closes at $+1$ or at $-1$. An interactive version of this plane, where one can observe the changes in the dispersion relation, and the curve $c$ from \Qref{sec:tixchiral}, is provided at \cite{sse}.

It is suggestive to use also the unitarily equivalent form $B^*WB=(S_{\downarrow}A)(S_{\uparrow}B^2)=:W_{\downarrow}W_{\uparrow}$. This amounts to a site-dependent local basis change, which also makes the chiral symmetry site dependent. Naturally, this does not change the index. However, we can also exchange the two factors (change the ``time frame''), and consider $W_{\uparrow} W_{\downarrow}=W_{\downarrow}^*WW_{\downarrow}$, which is again unitarily equivalent. However, this operation does change the index, because the operator $W_{\downarrow}$ does not respect the cell structure. This turns the walk depicted in \Qref{fig:Wgenerate} into one, in which the swapping happens only within each cell, so the index becomes zero. More generally, the phase plane \Qref{fig:splitsse} is rotated by this operation by 90 degrees, so it exactly exchanges walks with trivial index and walks with non-trivial index.

\begin{figure}
	%\documentclass{article} % say
%\usepackage{tikz}
%\usetikzlibrary{decorations.markings}
%\begin{document}
	\tikzset{
		>=stealth',
		big arrow/.style={
			very thick,
			postaction={decorate}
			},
		big arrow reverse/.style={
			very thick,
			decoration={markings, mark=at position 0.17 with {\arrow[thick, scale=2]{>}},
				mark=at position 0.5 with {\arrow[thick,scale=2]{>}},
				mark=at position 0.83 with {\arrow[thick,scale=2]{>}}},
			postaction={decorate}
			},
		mid graphic/.style={
			xshift=3cm
			},
		right graphic/.style={
			xshift=6cm
			}
	}
	
	\begin{tikzpicture}
	[
	scale=1.75,
	font=\footnotesize
		]
		
		% colors are still free to be changed -__-
		\definecolor{pmcol}{RGB}{180,180,180}
		\definecolor{nncol}{RGB}{180,180,180}
		\definecolor{nmcol}{RGB}{100,100,220}
		\definecolor{pncol}{RGB}{255,100,100}

		% draw frame for coordinate system		
		\draw[thick] (-1.05,-1.05)  rectangle +(2.1,2.1);
		
		% draw actual harlequin
		% convention is (si+,si-), draw red if si+!=0, blue if si- != 0
		\fill[nncol] 	(-.5,-.5) -- +(-.5,-.5) -- +(-.5,.5)
								(-.5,-.5) -- ++(.5,.5) -- ++(.5,-.5) -- +(-.5,-.5)
								(.5,-.5) -- +(.5,.5) -- +(.5,-.5);
		\fill[pncol] 	(-.5,-.5) -- +(-.5,-.5) -- +(.5,-.5)
								(-.5,-.5) -- ++(.5,.5) -- ++(-.5,.5) -- +(-.5,-.5)
								(-.5,.5) -- +(.5,.5) -- +(-.5,.5);		
		\fill[pmcol] 	(-.5,.5) -- +(-.5,-.5) -- +(-.5,.5)
								(-.5,.5) -- ++(.5,.5) -- ++(.5,-.5) -- +(-.5,-.5)
								(.5,.5) -- +(.5,.5) -- +(.5,-.5);
		\fill[nmcol] 	(.5,-.5) -- +(-.5,-.5) -- +(.5,-.5)
								(.5,-.5) -- ++(.5,.5) -- ++(-.5,.5) -- +(-.5,-.5)
								(.5,.5) -- +(.5,.5) -- +(-.5,.5);
								
		% draw "forbidden lines"
		\draw[white,very thick] (-1,-1) -- (1,1);
		\draw[white,very thick] (1,-1) -- (-1,1);
		\draw[white,very thick] (-1,0) -- (0,1) (0,-1) -- (1,0);
		\draw[white,very thick] (-1,0) -- (0,-1) (0,1) -- (1,0);
		
		% draw ticks on axes
		\foreach \i in {-1,-.5,0,.5,1}{
			\draw[align=left] (-1.02,{\i}) -- (-1.08,{\i});
			\draw[align=left] ({\i},-1.02) -- ({\i},-1.08);
		}
		
		% draw text next to ticks
		\draw (-1.05,-1)  node[left,align=left]{ $-\pi$};
		\draw (-1.05,-.5)  node[left,align=left]{ $-\frac{\pi}{2}$};
		\draw (-1.05,0)  node[left,align=left]{ $0$};
		\draw (-1.05,.5)  node[left,align=left]{ $\frac{\pi}{2}$};
		\draw (-1.05,1)  node[left,align=left]{ $\pi$};
		
		\draw (-1,-1.05)  node[below,align=center]{ $-\pi$};
		\draw (-.5,-1.05)  node[below,align=center]{ $-\frac{\pi}{2}$};
		\draw (0,-1.05)  node[below,align=center]{ $0$};
		\draw (.5,-1.05)  node[below,align=center]{ $\frac{\pi}{2}$};
		\draw (1,-1.05)  node[below,align=center]{ $\pi$};
		
		% draw point for eigenfunctions
		\fill (0.125, -.25)  circle (.03) ;

		% draw labels
		\draw (0,1.05) node[above,align=center]{ $\theta_1$};
		\draw (1.05,0) node[right,align=left]{ $\theta_2$};
		
		% do we prefer labels in the diagram? or a legend?
		
		% label different phases
		\draw (.5,0) node[align=center]{\large$\bf{1}$};
		\draw (-.5,0) node[align=center]{\large$\bf{\text{-}1}$};
		\draw (0,-.5) node[align=center]{\large$\bf 0$};
		\draw (0,.5) node[align=center]{\large$\bf 0$};

		% draw mid coordinate system
		\draw[mid graphic, very thick] (-1.05,-1.05)  rectangle +(2.1,2.1);
		\draw[mid graphic, very thick] (-1.05,0) -- +(2.1,0) (0,-1.05) -- +(0,2.1);
		\draw[mid graphic, thick, opacity=0.3] (-1,0.275) -- (1,0.275);
		\draw[mid graphic, thick, opacity=0.3] (-1,-0.775) -- (1,-.775);
		%labels
		\draw[mid graphic] (0,-1.05) node[below, align=center] {$A_0=\sigma_x$};
		\draw[mid graphic] (1.05,.525) node[right,align=center] {$\six_+$};
		\draw[mid graphic] (1.05,-.525) node[right,align=center] {$\six_-$};
		\draw[mid graphic] (0.525,1.05) node[above,align=center] {$\sixR$};
		\draw[mid graphic] (-0.525,1.05) node[above,align=center] {$\sixL$};
		
		% draw right coordinate system
		\draw[right graphic, very thick] (-1.05,-1.05)  rectangle +(2.1,2.1);
		\draw[right graphic, very thick] (-1.05,0) -- +(2.1,0) (0,-1.05) -- +(0,2.1);
		\draw[right graphic, thick, opacity=.3] (-1,0.275) -- (1,0.275);
		\draw[right graphic, thick, opacity=0.3] (-1,-0.775) -- (1,-.775);
		\draw[right graphic] (0,-1.05) node[below, align=center] {$A_0=-i\sigma_y$};
		\draw[right graphic] (1.05,.525) node[right,align=center] {$\six_+$};
		\draw[right graphic] (1.05,-.525) node[right,align=center] {$\six_-$};
		\draw[right graphic] (0.525,1.05) node[above,align=center] {$\sixR$};
		\draw[right graphic] (-0.525,1.05) node[above,align=center] {$\sixL$};

		%draw mid eigenfunctions
		\draw[mid graphic,very thick,blue] plot coordinates {(-0.9, 0.287835) (-0.8, 0.295705) (-0.7, 0.308399) (-0.6, 0.328877) (-0.5, 0.361911) (-0.4, 0.415198) (-0.3, 0.501158) (-0.2, 0.639821) (-0.1, 0.863502)};
		\draw[mid graphic,very thick,red] plot coordinates {(-0.9, -0.774975) (-0.8, -0.77509) (-0.7, -0.774676) (-0.6, -0.77617) (-0.5, -0.770771) (-0.4, -0.790279) (-0.3, -0.719794) (-0.2, -0.974468) (-0.1, -0.0542976)};

		%draw right eigenfunctions
		\draw[right graphic,very thick,red] plot coordinates {(0.1, 0.863502) (0.2, 0.639821) (0.3, 0.501158) (0.4, 0.415198) (0.5, 0.361911) (0.6, 0.328877) (0.7, 0.308399) (0.8, 0.295705) (0.9, 0.287835)};	
		\draw[right graphic,very thick,blue] plot coordinates {(0.1, -0.0542976) (0.2, -0.974468) (0.3, -0.719794) (0.4, -0.790279) (0.5, -0.770771) (0.6, -0.77617) (0.7, -0.774676) (0.8, -0.77509) (0.9, -0.774975)};
		\draw[right graphic,very thick,blue] plot coordinates {(-0.9, 0.287835) (-0.8, 0.295705) (-0.7, 0.308399) (-0.6, 0.328877) (-0.5, 0.361911) (-0.4, 0.415198) (-0.3, 0.501158) (-0.2, 0.639821) (-0.1, 0.863502)};
		\draw[right graphic,very thick,red] plot coordinates {(-0.9, -0.774975) (-0.8, -0.77509) (-0.7, -0.774676) (-0.6, -0.77617) (-0.5, -0.770771) (-0.4, -0.790279) (-0.3, -0.719794) (-0.2, -0.974468) (-0.1, -0.0542976)};

		%draw index labels
		%mid
		\draw[mid graphic] (-.525,.525) node[align=center] {\large$\bf{1}$};
		\draw[mid graphic] (.525,.525) node[align=center] {};
		\draw[mid graphic] (-.525,-.525) node[align=center] {\large$\bf{\text{-}1}$};
		\draw[mid graphic] (.525,-.525) node[align=center] {};
		%right
		\draw[right graphic] (-.525,.525) node[align=center] {\large$\bf{1}$};
		\draw[right graphic] (.525,.525) node[align=center] {\large$\bf{\text{-}1}$};
		\draw[right graphic] (-.525,-.525) node[align=center] {\large$\bf{\text{-}1}$};
		\draw[right graphic] (.525,-.525) node[align=center] {\large$\bf{1}$};

	\end{tikzpicture}
%\end{document}
	\caption{\label{fig:splitsse} \emph{Left:} parameter plane for the split-step walk \eqref{eq:wss}. The plane is split into regions of different symmetry index $\sixR$. The white lines represent the parameter configurations, for which the gap closes.\\
	\emph{Mid:} eigenfunctions for a decoupled split-step walk with parameter configuration $(\theta_1,\theta_2)=(\pi/8,-\pi/4)$ (black dot in left graphic). The decoupling coin at $x=0$ is chosen as $A_0=\sigma_x$. The eigenfunctions are positioned according to their corresponding contribution to the index (see index matrix \eqref{si22cross}). Top and bottom refer to walk-eigenvalues $+1$ and $-1$ respectively; left and right refer to the respective half-line. The label in each quadrant denotes the chirality of the respective eigenfunction and hence the value of the respective index contribution. Since $\sigma_x$ is a non-gentle perturbation, the pair $(\six_+,\six_-)=(+1,-1)$ (row-sums) is non-zero, whereas $(\sixR,\sixL)=(0,0)$ (column-sums), being insensitive to any compact perturbation, is zero.\\
	\emph{Right:} eigenfunctions for the same coin configuration, but now with the decoupling coin $A_0=R(\pi/2)=-i\sigma_y$. Since this is a gentle perturbation, both pairs $(\six_+,\six_-)=(\sixR,\sixL)=(0,0)$ are zero.}
\end{figure}

The family of split-step walks is also very suitable for discussing decouplings. In this case we just change one of the ``coin unitaries'' $A_x$, say $A_0$ to a so-called {\it splitting coin}, namely either
$A_0=\pm R(\pi/2)=\pm i\sigma_2$ or $A_0=\pm\sigma_1$. In either case one checks readily that $S_{\downarrow}AS_{\uparrow}$ leaves $\bigoplus_{x=0}^\infty\HH_x$ invariant. However, these two options are rather different. The family of \symBDI-admissible unitaries consists precisely of the rotations $R(\theta)$ and $\pm\sigma_1$, which are not connected to the rotations. Therefore, choosing a rotation to decouple the walk is a gentle perturbation, which leaves $\six_\pm(W)$ invariant. That is, the eigenvalues at $+1$ (resp., $-1$) must come in chirally opposite pairs. On the other hand, for the non-gentle decoupling by $A_0=\pm\sigma_1$ only the overall set of eigenvectors must be chirally balanced. This is shown in \Qref{fig:splitsse}, and can also be followed dynamically in \cite{sse}.

\subsection{Arbitrarily varying parameters}\label{sec:ssArbitrary}
What happens, if we let the angles $\theta_1(x)$ and  $\theta_2(x)$ vary arbitrarily, say in an interval? If the intervals are too large, so that they allow homogeneous phases of different index, then under such conditions we could produce arbitrarily many phase boundaries, and the essential gap would close. So let us consider two constants $\eps_1,\eps_2$ and suppose that one of the following cases applies:
\begin{equation}\label{varytable}\def\pito{\textstyle\frac\pi2}
  \begin{array}{rll}
    \mbox{Case 1: for all $x$,\ }&\theta_1(x)\in\left[-\pito-\eps_1,-\pito+\eps_1\right]& \mbox{and\ }\theta_2(x)\in\left[-\eps_2,+\eps_2\right]\cup\left[\pi-\eps_2,\pi+\eps_2\right]\\
    \mbox{Case 2: for all $x$,\ }&\theta_1(x)\in\ \left[\pito-\eps_1,\pito+\eps_1\right]& \mbox{and\ }\theta_2(x)\in\left[-\eps_2,+\eps_2\right]\cup\left[\pi-\eps_2,\pi+\eps_2\right]\\
    \mbox{Case 3: for all $x$,\ }&\theta_1(x)\in\left[-\eps_1,+\eps_1\right]\cup\left[\pi-\eps_1,\pi+\eps_1\right]& \mbox{and\ }\theta_2(x)\in\left[-\pito-\eps_2,-\pito+\eps_2\right]\\
    \mbox{Case 4: for all $x$,\ }&\theta_1(x)\in\left[-\eps_1,+\eps_1\right]\cup\left[\pi-\eps_1,\pi+\eps_1\right]& \mbox{and\ }\theta_2(x)\in\ \left[\pito-\eps_2,\pito+\eps_2\right]
  \end{array}
\end{equation}
We then claim that as long as
\begin{equation}\label{epsrect}
  \sin\frac{\eps_1}2+\sin\frac{\eps_2}2< \sqrt2,
\end{equation}
the corresponding walks are gapped, and have \symBDI-index $\sixR(W)=-1/+1/0/0$ in case 1/2/3/4, respectively. The allowed regions for the angles are shown in \Qref{fig:ssrectangles}.

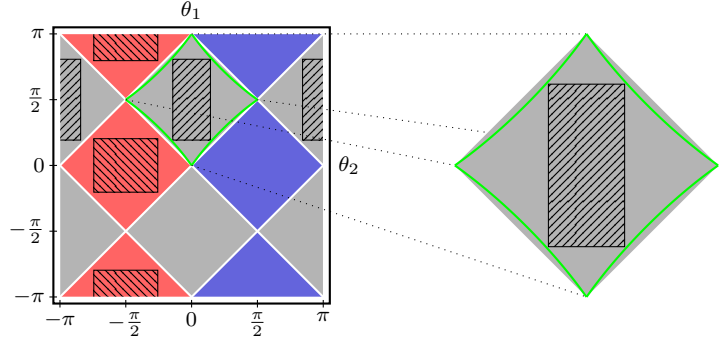
\begin{figure}
	%\documentclass{article} % say
%\usepackage{tikz}
%\usetikzlibrary{decorations.markings}
%\begin{document}
	\tikzset{
		>=stealth',
		big arrow/.style={
			very thick,
			postaction={decorate}
			},
		big arrow reverse/.style={
			very thick,
			decoration={markings, mark=at position 0.17 with {\arrow[thick, scale=2]{>}},
				mark=at position 0.5 with {\arrow[thick,scale=2]{>}},
				mark=at position 0.83 with {\arrow[thick,scale=2]{>}}},
			postaction={decorate}
			},
		right graphic/.style={
			xshift=3cm
		}
	}
	
	\begin{tikzpicture}
	[
	scale=1.75,
	font=\footnotesize
		]
		% colors are still free to be changed -__-
		\definecolor{pmcol}{RGB}{180,180,180}
		\definecolor{nncol}{RGB}{180,180,180}
		\definecolor{mmcol}{RGB}{220,220,220}
		\definecolor{nmcol}{RGB}{100,100,220}
		\definecolor{pncol}{RGB}{255,100,100}
		\definecolor{linecol}{RGB}{0,255,0}
		
		% draw frame for coordinate system		
		\draw[thick] (-1.05,-1.05)  rectangle +(2.1,2.1);
		
	% draw actual harlequin
	% convention is (si+,si-), draw red if si+!=0, blue if si- != 0
	\fill[nncol] 	(-.5,-.5) -- +(-.5,-.5) -- +(-.5,.5)
		(-.5,-.5) -- ++(.5,.5) -- ++(.5,-.5) -- +(-.5,-.5)
		(.5,-.5) -- +(.5,.5) -- +(.5,-.5);
	\fill[pncol] 	(-.5,-.5) -- +(-.5,-.5) -- +(.5,-.5)
		(-.5,-.5) -- ++(.5,.5) -- ++(-.5,.5) -- +(-.5,-.5)
		(-.5,.5) -- +(.5,.5) -- +(-.5,.5);		
	\fill[pmcol] 	(-.5,.5) -- +(-.5,-.5) -- +(-.5,.5)
		(-.5,.5) -- ++(.5,.5) -- ++(.5,-.5) -- +(-.5,-.5)
		(.5,.5) -- +(.5,.5) -- +(.5,-.5);
	\fill[nmcol] 	(.5,-.5) -- +(-.5,-.5) -- +(.5,-.5)
		(.5,-.5) -- ++(.5,.5) -- ++(-.5,.5) -- +(-.5,-.5)
		(.5,.5) -- +(.5,.5) -- +(-.5,.5);
	
	% draw "forbidden lines"
	\draw[white,thick] (-1,-1) -- (1,1);
	\draw[white,thick] (1,-1) -- (-1,1);
	\draw[white,thick] (-1,0) -- (0,1) (0,-1) -- (1,0);
	\draw[white,thick] (-1,0) -- (0,-1) (0,1) -- (1,0);
	
	% draw ticks on axes
	\foreach \i in {-1,-.5,0,.5,1}{
		\draw[align=left] (-1.02,{\i}) -- (-1.08,{\i});
		\draw[align=left] ({\i},-1.02) -- ({\i},-1.08);
	}
	
	% draw text next to ticks
	\draw (-1.05,-1)  node[left,align=left]{ $-\pi$};
	\draw (-1.05,-.5)  node[left,align=left]{ $-\frac{\pi}{2}$};
	\draw (-1.05,0)  node[left,align=left]{ $0$};
	\draw (-1.05,.5)  node[left,align=left]{ $\frac{\pi}{2}$};
	\draw (-1.05,1)  node[left,align=left]{ $\pi$};
	
	\draw (-1,-1.05)  node[below,align=center]{ $-\pi$};
	\draw (-.5,-1.05)  node[below,align=center]{ $-\frac{\pi}{2}$};
	\draw (0,-1.05)  node[below,align=center]{ $0$};
	\draw (.5,-1.05)  node[below,align=center]{ $\frac{\pi}{2}$};
	\draw (1,-1.05)  node[below,align=center]{ $\pi$};
	
	% draw labels
	\draw (0,1.05) node[above,align=center]{ $\theta_1$};
	\draw (1.05,0) node[right,align=left]{ $\theta_2$};

\def\dummydist{0.007}
	%rectangles
	%trivial phase
	\draw[pattern=north east lines, pattern color=black] (-0.15+\dummydist, 0.814136-\dummydist) rectangle (0.15-\dummydist, 0.185864+\dummydist);
	\draw[pattern=north east lines, pattern color=black] (-0.85+\dummydist, 0.814136-\dummydist) rectangle (-1.01, 0.185864+\dummydist);
	\draw[pattern=north east lines, pattern color=black] (1.01, 0.814136-\dummydist) rectangle (0.85-\dummydist, 0.185864+\dummydist);
	%-1 phase
	\draw[pattern=north west lines, pattern color=black] (-0.25-\dummydist, 0.21034-\dummydist) rectangle (-0.75+\dummydist,-0.21034+\dummydist);
	\draw[pattern=north west lines, pattern color=black] (-0.25-\dummydist, -1+0.21034-\dummydist) rectangle (-0.75+\dummydist,-1.01);
	\draw[pattern=north west lines, pattern color=black] (-0.75+\dummydist,1-0.21034+\dummydist) rectangle (-0.25-\dummydist,1.01);

	% region lines
	%upper trivial phase
	\draw[thick,color=linecol] plot [smooth] coordinates {(0., 1.) (0.05, 0.932782) (0.1, 0.871257) (0.15, 0.814136) (0.2, 0.760654) (0.25, 0.71034) (0.3, 0.662911) (0.35, 0.618203) (0.4, 0.576144) (0.45, 0.536727) (0.5, 0.5)};
	\draw[thick,color=linecol] plot [smooth] coordinates {(0., 0) (0.05, 0.0672175) (0.1, 0.128743) (0.15, 0.185864) (0.2, 0.239346) (0.25, 0.28966) (0.3, 0.337089) (0.35, 0.381797) (0.4, 0.423856) (0.45, 0.463273) (0.5, 0.5)};
	\draw[thick,color=linecol] plot [smooth] coordinates {(-0.5, 0.5) (-0.45, 0.536727) (-0.4, 0.576144) (-0.35, 0.618203) (-0.3, 0.662911) (-0.25, 0.71034) (-0.2, 0.760654) (-0.15, 0.814136) (-0.1, 0.871257) (-0.05, 0.932782) (0., 1.)};
	\draw[thick,color=linecol] plot [smooth] coordinates {(-0.5, 0.5) (-0.45, 0.463273) (-0.4, 0.423856) (-0.35, 0.381797) (-0.3, 0.337089) (-0.25, 0.28966) (-0.2, 0.239346) (-0.15, 0.185864) (-0.1, 0.128743) (-0.05, 0.0672175) (0., 0)};
		
	%some cosmetics:
	\draw[white, very thick] (-1.02,-1.01) -- (1.02,-1.01);
	\draw[white, very thick] (-1.02,1.01) -- (1.02,1.01);
	\draw[white, very thick] (-1.01,-1.02) -- (-1.01,1.02);
	\draw[white, very thick] (1.01,1.02) -- (1.01,-1.02);

	%right graphic
	\draw[dotted] (0,1) -- (3,1) (0,0) -- (3,-1) (-.5,.5) -- (2,0) (.5,.5) -- (4,0);
	%patch
	\fill[nncol,right graphic] (0,-1) -- (1,0) -- (0,1) -- (-1,0) --  (-1,0);
	%rectangle
	\draw[right graphic,pattern=north east lines, pattern color=black] (-0.3+1.5*\dummydist, 0.628272-1.5*\dummydist) rectangle (0.3-1.5*\dummydist, -0.628272+1.5*\dummydist);
	%region lines
	\draw[right graphic,thick,color=linecol] plot [smooth] coordinates {(0., 1.) (0.05, 0.931136) (0.1, 0.865565) (0.15, 0.80281) (0.2, 0.742514) (0.25, 0.684405) (0.3, 0.628272) (0.35, 0.573949) (0.4, 0.521307) (0.45, 0.470244) (0.5, 0.420681) (0.55, 0.372556) (0.6, 0.325822) (0.65, 0.280447) (0.7, 0.236406) (0.75, 0.193688) (0.8, 0.152288) (0.85, 0.112207) (0.9, 0.0734541) (0.95, 0.0360452) (1., 0)};
	\draw[right graphic,thick,color=linecol] plot [smooth] coordinates {(0., -1.) (0.05, -0.931136) (0.1, -0.865565) (0.15, -0.80281) (0.2, -0.742514) (0.25, -0.684405) (0.3, -0.628272) (0.35, -0.573949) (0.4, -0.521307) (0.45, -0.470244) (0.5, -0.420681) (0.55, -0.372556) (0.6, -0.325822) (0.65, -0.280447) (0.7, -0.236406) (0.75, -0.193688) (0.8, -0.152288) (0.85, -0.112207) (0.9, -0.0734541) (0.95, -0.0360452) (1., 0)};
	\draw[right graphic,thick,color=linecol] plot [smooth] coordinates {(0., 1.) (-0.05, 0.931136) (-0.1, 0.865565) (-0.15, 0.80281) (-0.2, 0.742514) (-0.25, 0.684405) (-0.3, 0.628272) (-0.35, 0.573949) (-0.4, 0.521307) (-0.45, 0.470244) (-0.5, 0.420681) (-0.55, 0.372556) (-0.6, 0.325822) (-0.65, 0.280447) (-0.7, 0.236406) (-0.75, 0.193688) (-0.8, 0.152288) (-0.85, 0.112207) (-0.9, 0.0734541) (-0.95, 0.0360452) (-1., 0)};
	\draw[right graphic,thick,color=linecol] plot [smooth] coordinates {(0., -1.) (-0.05, -0.931136) (-0.1, -0.865565) (-0.15, -0.80281) (-0.2, -0.742514) (-0.25, -0.684405) (-0.3, -0.628272) (-0.35, -0.573949) (-0.4, -0.521307) (-0.45, -0.470244) (-0.5, -0.420681) (-0.55, -0.372556) (-0.6, -0.325822) (-0.65, -0.280447) (-0.7, -0.236406) (-0.75, -0.193688) (-0.8, -0.152288) (-0.85, -0.112207) (-0.9, -0.0734541) (-0.95, -0.0360452) (-1., 0)};

	\end{tikzpicture}
%\end{document}
	\caption{\label{fig:ssrectangles}\emph{Left:} Parameter regions \eqref{varytable} in which the angles $\theta_i(x)$ are allowed to vary arbitrarily, while keeping the gap open. (Case 1 and case 4 distinguished by hatching). \\ \emph{Right:} Magnified part to make visible the boundary line (green) on which the corners of all hatched rectangles lie according to \eqref{epsrect}.}
\end{figure}

The idea behind these choices of intervals is that for $\eps_1=\eps_2=0$ either every $\theta_1(x)$ (cases 1,2) or every $\theta_2(x)$ (cases 3,4) is set at a decoupling value $\pm\pi/2$ (see \Qref{sec:splitss}. Therefore, the walk decomposes into a direct sum of $2\times2$-matrices. For the cases 1,2 these blocks do not follow the cell decomposition (compare \Qref{fig:Wgenerate}), for cases 2,3 they are just single cells. The other angle is then left a choice of two values, for which these $2\times2$-blocks differ by a sign, but either way have spectrum $\{i,-i\}$. Hence the whole walk $W_0$ has this spectrum, and hence $W_0^2=-\idty$.

So for every walk $W$ satisfying \eqref{varytable}, there is another one $W_0$ with $W_0^2=-\idty$ whose angles $\theta_i(x)$ differ by at most $\eps_i$. Thus in the decomposition \eqref{eq:wss} we have
\begin{equation}\label{ssAest}
  \norm{A-A_0}=\sup_x\norm{R(\theta_1(x))-R(\theta_{1,0}(x))}\leq \norm{R(\eps_1)-\idty}=\abs{e^{i\eps_1}-1}=2\sin\frac{\eps_1}2.
\end{equation}
Of course, the analogous estimate holds for $B$, and using \eqref{epsrect} we conclude that
\begin{equation}\label{ssWest}
  \norm{W-W_0}\leq \norm{A-A_0}+\norm{B^2-B_0^2}\leq 2\sin\frac{\eps_1}2+2\sin\frac{\eps_2}2<\sqrt2.
\end{equation}
But then a standard resolvent estimate ensures that the gap is open: Proceeding as in \eqref{eq:R1-R0}, we get for the resolvents $R=(\idty-W)\inv$ and $R_0=(\idty-W_0)\inv$
the relation $R(\idty-(W-W_0)R_0)=R_0$, which can be solved for $R$ by a geometric series as soon as $\norm{(W-W_0)R_0}<1$. Since $\norm{R_0}=\abs{(1-i)\inv}=1/\sqrt2$, this is guaranteed by \eqref{ssWest}. Hence $R$ is a bounded operator, meaning that $1$ is not in the spectrum. Of course, the same applies to $-1$, so both gaps of $W$ are open.

We remark that it is crucial for this argument that $A$ can be estimated as a direct sum, just leading to a maximum of norm differences. This feature fails if we do not allow the two angles to vary completely independently
but just constrain all pairs $(\theta_1(x),\theta_2(x))$, and perhaps $(\theta_1(x+1),\theta_2(x))$ in one particular phase region. Numerical evidence suggests that if these pairs belong to the same value of the phase diagram (not mixing cases 3 and 4), and also keep a finite distance from the boundaries the resulting walk is also gapped. But we have at the moment no method to show this.

Now assume a walk $W$, with parameters chosen according to one of the four cases \eqref{varytable}. Then $W$ can continuously be deformed into an appropriate $W_0$, with $W_0^2=-\idty$, without closing the gap. For such walks however, it is straightforward to compute the respective index $\sixR(W)=-1/+1/0/0$ in case $1/2/3/4$.

% non-gentle pert.
\section{Indices sensitive to non-gentle perturbations}\label{sec:homotopy-indices}

In this section we return to non-translation invariant walks, but will stick to the strictly unitary setting as described in \Qref{sec:firstsetting}. In this scenario the symmetry indices $\six_\pm$ provide an invariant to classify the gentleness of perturbations. We will first describe the basic results, and then give an index criterion to decide the gentleness of any compact perturbation.

\subsection{The relative index of a compact perturbation}\label{sec:locpert}

It is often convenient to write perturbations of unitary operators in multiplicative form, so that $W'=VW$.
In this case we will also refer to the unitary $V$ as the ``perturbation''.
For the perturbation properties listed in \Qref{def:pertsorts} it is easy to find the corresponding properties of $V$. For example $W'$ is a local perturbation iff $V-\idty$ is only non-zero on finitely many cells, and it is a compact perturbation iff $V-\idty$ is a compact operator. Because some of the admissibility conditions with respect to the symmetry involve the adjoint walk it is {\it not} true that the product of admissible operators is admissible. Consequently, the condition for $W'$ to be admissible for the same symmetry representation as $W$ is not directly the admissibility of $V$, but a modified form described in the following Lemma.

\begin{lem}\label{lem:twiddlesym} Let $W$ be an admissible walk, and introduce the symmetry operators
	\begin{equation}\label{twiddlesym}
	\tph=\ph, \quad \trv=W\rv, \quad\mbox{and}\quad\tch=W\ch,
	\end{equation}
	whenever these operators are part of the symmetry type. Then these operators are a symmetry representation of the same type. Moreover, for any unitary $V$ the operator $VW$ satisfies the symmetry admissibility conditions for the appropriate subset of $(\ph,\rv,\ch)$ if and only if $V$ satisfies the commutation relations for $\trh=(\tph,\trv,\tch)$.
	The subspace
	\begin{equation}\label{HV}
	\HH_V:=(V-\idty)\HH=(VW-W)\HH=(V^*-\idty)\HH
	\end{equation}
	is invariant under $\tph,\trv,\tch,V$ and $V^*$.
	When $V-\idty$ is compact, $V$ has an essential gap at $-1$ and, in particular, the $-1$-eigenspace $\HH_V^-\subset\HH_V$ is finite dimensional.
\end{lem}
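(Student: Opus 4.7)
My plan is to take the four assertions in order, using the admissibility relations of $W$, namely $\ph W = W\ph$, $\rv W = W^*\rv$ and $\ch W = W^*\ch$, as the basic algebraic input. For the first claim the (anti)unitary character of each tilded operator is inherited from $W$ and the corresponding symmetry, and the squares are preserved:
\begin{equation*}
\trv^2 = W\rv\,W\rv = WW^*\rv^2 = \rv^2, \qquad \tch^2 = W\ch\,W\ch = WW^*\ch^2 = \ch^2.
\end{equation*}
When all three symmetries are present the product relation reads $\tph\trv = \ph W\rv = W\ph\rv = W\ch = \tch$, and mutual commutativity of the tilded operators follows from the same reshuffle combined with \Qref{lem:phases}.

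For the second claim I substitute $VW$ into each admissibility condition and pull $W$ through using the relations above. For time reversal $\rv(VW)\rv^* = W^*V^*$ collapses to $\rv V\rv^* = W^*V^*W$, equivalently $\trv V\trv^* = V^*$; analogously $\tph V\tph^* = V$ and $\tch V\tch^* = V^*$, so $V$ satisfies exactly the commutation relations of a unitary admissible for $\trh$.

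For the subspace identities, $(V-\idty)\HH = (VW-W)\HH$ follows from $VW - W = (V-\idty)W$ and the bijectivity of $W$, while the short identity $(V-\idty)(-V^*\psi) = (V^*-\idty)\psi$ and its counterpart give $(V-\idty)\HH = (V^*-\idty)\HH$. Invariance under $V$ and $V^*$ is then immediate from $V(V-\idty)\psi = (V-\idty)V\psi$ and $V^*(V-\idty)\psi = -(V^*-\idty)\psi$, and the tilded symmetries preserve $\HH_V$ by the commutation relations of the second step; for instance $\trv(V-\idty)\psi = (V^*-\idty)\trv\psi$, which lives in $\HH_V$ precisely because the two descriptions coincide.

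Finally, when $V-\idty$ is compact the Calkin image satisfies $\calk V = \calk\idty$, so $\spec(\calk V) = \{1\}$ and $-1$ is not in the essential spectrum of $V$. Any eigenvector $V\phi = -\phi$ satisfies $(V-\idty)\phi = -2\phi$, placing $\phi \in \HH_V$ and identifying the $-1$-eigenspace of $V$ with the $-2$-eigenspace of the compact operator $V-\idty$, which is finite-dimensional by the spectral theorem for compact operators. The proof is essentially bookkeeping; the one place a reader should pause is in keeping the $V$ versus $V^*$ commutations straight under the antiunitary tilded symmetries, and in spotting the algebraic identity that collapses the three descriptions of $\HH_V$ to one.
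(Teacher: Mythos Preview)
Your proof is correct and follows essentially the same route as the paper's own argument, which dismisses the first two claims as ``straightforward algebra'' and handles the subspace identities via $V-\idty=(VW-W)W^*=(V^*-\idty)(-V)$. You have simply unpacked that algebra explicitly; the only cosmetic difference is that the paper writes the factorizations as operator identities while you write them pointwise, but the content is identical.
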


\begin{defi}\label{def:relInd}
	In the setting of the Lemma, the symmetry index $\tsix_-(V)$, i.e., the symmetry index of $\trh$ restricted to $\HH_V^-$,
	is called the {\dff relative index} of a perturbation $VW$ of $W$, and will be denoted by $\sixrel VW:W$.
\end{defi}

\begin{proof}[Proof of \Qref{lem:twiddlesym}] We have to verify that the representation $\trh$ satisfies the sign convention of \Qref{lem:phases}, and has the same squares as $(\ph,\rv,\ch)$, and that the admissibility conditions of $V$ and $VW$ are equivalent. All of this is straightforward algebra.
	
	The equality of the three spaces follows because $V-\idty=(VW-W)W^*=(V^*-\idty)(-V)$. That $\HH_V$ is invariant under $V,V^*$ is clear from this expression. Invariance under $\tch$ follows because $\tch(V-\idty)\HH=(V^*-\idty)\tch\HH$, and similarly for the other symmetries.
	
	The essential gap at $-1$ and the finite dimensionality of $\HH_V^-$ follow directly from the compactness of $V-\idty$. The symmetry index $\tsix_-(V)$ is then well defined, since $\HH_V^-$ is finite-dimensional.
\end{proof}

By homotopy invariance of $\six_-$ (\Qref{pro:homoto}), the relative index vanishes for gentle perturbations, so it classifies compact perturbations modulo gentle compact ones. As the following Lemma shows this is actually sharp, i.e., compact perturbations with vanishing relative index must be gentle.

\begin{lem}\label{lem:contract}
	Let $W'=VW$ be a compact perturbation of $W$. Then the following are equivalent:
	\begin{itemize}
		\item[(1)] $\sixrel VW:W=0$
		\item[(2)] There is a continuous path $t\mapsto V_t$ of $\trh\,$-admissible unitary operators on $\HH_V$ connecting $V$ with $\idty$,
		so that $W_t=V_tW$ will be a homotopy connecting $VW$ and $W$.
	\end{itemize}
\end{lem}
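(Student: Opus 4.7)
The plan is to establish $(2)\Rightarrow(1)$ by homotopy invariance of the symmetry index and $(1)\Rightarrow(2)$ by concatenating two explicit deformations.

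For the easy direction $(2)\Rightarrow(1)$, I invoke the perturbation-theoretic argument behind \Qref{pro:homoto} in the modified representation $\trh$. Along the hypothesized path each $V_t$ is $\trh$-admissible with $V_t-\idty$ compact (inherited from $V_tW-W$ being compact), so $V_t$ has an essential gap at $-1$ and a finite-dimensional $-1$-eigenspace. The index $\tsix_-(V_t)$ is therefore locally constant and hence constant along the connected path; at $V_1=\idty$ the $-1$-eigenspace is empty, so $\sixrel VW:W=\tsix_-(V_0)=\tsix_-(V_1)=0$.

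For $(1)\Rightarrow(2)$ I construct the path in two stages. Since $V-\idty$ is compact and normal, $V$ restricted to $\HH_V$ has pure point spectrum accumulating only at $+1$, and the essential gap at $-1$ makes $-1$ an isolated eigenvalue of finite multiplicity; denote its eigenspace by $\HH_V^-$. Write the spectrum of $V$ as $\{\lambda_j\}=\{e^{i\alpha_j}\}$ with $\alpha_j\in(-\pi,\pi]$ and finite-rank eigenprojections $E_j$. In \emph{Stage~1} I synchronously rotate each non-$(-1)$ eigenvalue toward $+1$, setting
\begin{equation*}
V^{(1)}_s\;=\;-E_{-1}\;+\;\sum_{j:\,\lambda_j\neq-1}e^{i(1-s)\alpha_j}\,E_j,\qquad s\in[0,1].
\end{equation*}
Because $\tph$ and $\tch$ interchange the $\lambda$- and $\bar\lambda$-eigenspaces of $V$ while $\trv$ preserves each of them, the synchronized motion $(\lambda,\bar\lambda)\mapsto(e^{i(1-s)\alpha},e^{-i(1-s)\alpha})$ preserves all $\trh$-commutation relations, as a short direct computation on each eigenspace confirms. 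The essential gap at $-1$ confines the arguments to $|\alpha_j|\le\alpha_{\max}<\pi$, yielding $\|V^{(1)}_s-V^{(1)}_{s'}\|\le\alpha_{\max}|s-s'|$ and hence norm continuity; the differences $V^{(1)}_s-\idty$ remain compact because their eigenvalues accumulate only at $0$. At $s=1$ one reaches $V^{(1)}_1=\idty$ on the orthogonal complement of $\HH_V^-$ inside $\HH_V$ and $V^{(1)}_1=-\idty$ on $\HH_V^-$.

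In \emph{Stage~2} I collapse the $-1$-eigenspace using hypothesis (1). The induced representation $\rho_-$ of $\trh$ on the finite-dimensional space $\HH_V^-$ satisfies $\six(\rho_-)=\sixrel VW:W=0$, so by \Qref{prop:six}(3) it is balanced and admits a $\trh$-admissible gapped Hamiltonian $H$, which I normalize to $H^2=\idty$. Then $s\mapsto e^{i\pi(1-s)H}$ is a norm-continuous path of $\trh$-admissible unitaries on $\HH_V^-$ from $-\idty$ at $s=0$ to $\idty$ at $s=1$; admissibility of each $e^{itH}$ reduces by a direct expansion of the exponential to the admissibility relations satisfied by $H$. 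Extending by identity on the orthogonal complement of $\HH_V^-$ in $\HH_V$ and concatenating with Stage~1 produces the required path $V_t$. Each $V_t-\idty$ is compact by construction, so \Qref{lem:twiddlesym} guarantees that $W_t=V_tW$ is an admissible walk throughout, with essential spectrum equal to that of $W$.

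The principal technical obstacle is Stage~1: one must deform infinitely many eigenvalues simultaneously in a way that is at once $\trh$-compatible and norm-continuous in $s$. The two features that make this possible are the synchronized motion of complex-conjugate pairs, which respects all four commutation relations simultaneously even though no single continuous functional calculus $f_s$ on the unit circle can do so, and the essential gap at $-1$, which confines the eigenvalue arguments to a compact subinterval of $(-\pi,\pi)$ and thereby controls the modulus of continuity uniformly across the infinite spectrum.
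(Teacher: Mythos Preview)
Your proof is correct and follows essentially the same two-stage approach as the paper: first contract the non-$(-1)$ eigenvalues of $V$ along the circle to $+1$ (the paper just says ``$V$ is gapped so we can deform it to the identity by contracting the eigenvalues along the unit circle to $1$''), then use the balancedness of the representation on $\HH_V^-$ to connect $-\idty$ to $+\idty$ there. Your version simply supplies the details the paper leaves implicit---the explicit spectral formula for Stage~1, the verification that the synchronized motion of conjugate pairs is $\trh$-admissible, and the explicit exponential path $e^{i\pi(1-s)H}$ in Stage~2.
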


\begin{proof}
	(2) implies (1), because $\tsix_-$ is a homotopy invariant by \Qref{pro:homoto}. Conversely, suppose $\tsix_-(V)=0$. On the complement of $\HH_V^-$ in $\HH_V$, the operator $V$ is gapped so we can deform it to the identity by contracting the eigenvalues along the unit circle to $1$. On $\HH_V^-$ the index condition means that there is a gapped unitary, whose eigenvalues we can either deform to $-1$, so that it coincides with $V$ on that subspace, or to $+1$, as required.
\end{proof}

The relative index theory gives a concise description of the phenomenon of non-gentle perturbations. Indeed it has been argued \cite{Thiang} that the indices of a topological phase classification should be such relative indices. However, the weakness of this point of view is that it is confined to just the compact perturbations of a fixed walk. Each such ``island'' is charted well, but no connection between the different islands is made. In particular, this approach leaves open the question whether a compact perturbation with $\sixrel W':W\neq0$ can after all be contracted to $W$ by a continuous path involving also global (non-compact) deformations. This issue will be resolved in the following section in a surprisingly simple way: the relative index turns out to be the difference of some absolute indices that we have already introduced. Since absolute indices are invariant with respect to arbitrary gentle perturbations (\Qref{pro:homoto}), a compact perturbation with non-vanishing relative index cannot be contracted.

\subsection{Relative index from absolute}

\begin{thm} \label{thm:locpert}
	Let $W'=VW$ be a compact perturbation of $W$, and let $\sixrel W':W$ denote the relative symmetry index defined in \Qref{def:relInd}. Then
	\begin{equation}\label{locpert}
	\sixrel W':W=\six_-(W')-\six_-(W)=-\bigl(\six_+(W')-\six_+(W)\bigr).
	\end{equation}
\end{thm}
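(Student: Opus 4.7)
The second equality in \eqref{locpert} is immediate from \Qref{thm:sixProps}(1): since the combined index $\six(W) = \six_+(W) + \six_-(W)$ is invariant under compact perturbations, $\six(W') = \six(W)$, so $\six_-(W') - \six_-(W) = -\bigl(\six_+(W') - \six_+(W)\bigr)$ after rearrangement.

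For the first equality the plan is to use homotopy invariance to reduce $V$ to a canonical form and then carry out an explicit finite-dimensional identification. By \Qref{pro:homoto} applied both to $V$ (with respect to the $\trh$-symmetry) and to $VW$ (with respect to the original symmetry), all three of $\tsix_-(V)$, $\six_-(VW)$ and $\six_-(W)$ are constant along any norm-continuous path $V_t$ of $\trh$-admissible unitaries such that $V_t-\idty$ is compact and $V_tW$ is admissible; the second admissibility is automatic along such a path since $V_tW - W$ stays compact and the essential gap at $\pm 1$ is preserved. I then deform $V$ by contracting the eigenvalues lying in $\{e^{i\theta}:\theta\neq 0,\pi\}$ toward $+1$ in $\trh$-symmetric complex-conjugate pairs, leaving the $-1$-eigenspace $\HH_V^-$ untouched. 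This brings $V$ to the canonical form $V = \idty - 2Q$, where $Q$ is the finite-rank projection onto $\HH_V^-$.

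A second homotopy, simultaneously deforming $W$ and updating $V = \idty - 2\widetilde Q$ so that $\widetilde Q$ stays a $\trh_t$-invariant rank-$(\dim\HH_V^-)$ projection, brings the pair to a configuration in which $[W,Q] = 0$ and the restriction $W_Q := W|_{Q\HH}$ has spectrum in $\{+1,-1\}$. The finite-dimensional, $\trh$-invariant subspace $Q\HH$ carries enough structure to admit this reduction within admissibles, and the infinite-dimensional ambient $\HH$ provides room to extend it without violating global admissibility. In this reduced configuration $W = W_Q \oplus W_{Q^\perp}$ and $W' = (-W_Q)\oplus W_{Q^\perp}$; the $Q^\perp$-contributions cancel in the difference, and since negation interchanges the $\pm 1$ eigenspaces on the finite-dimensional $Q\HH$ one obtains
\begin{equation*}
\six_-(W') - \six_-(W) = \six_-(-W_Q) - \six_-(W_Q) = \six_+(W_Q) - \six_-(W_Q).
\end{equation*}
On the other hand $\HH_V^- = Q\HH$ and $\tch|_{Q\HH} = W_Q\,\ch|_{Q\HH}$, so the type-specific formulas \eqref{siDet}--\eqref{sitrch} identify $\tsix_-(V)$ with exactly the same finite-dimensional trace: in the chiral cases both quantities equal $\tr(W_Q\,\ch|_{Q\HH})$, while for types \symD{} and \symDIII{} both are determined by $\dim Q\HH$ modulo $2$ respectively $4$.

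The main technical obstacle is the second homotopy: maintaining $\trh_t$-admissibility of $V_t$ along a path that alters $W_t$ is delicate, because $\trh_t = (\ph, W_t\rv, W_t\ch)$ itself depends on $W_t$, and the condition $[\tch,Q]=0$ becomes $W^*QW = \ch Q\ch$, which couples $W$ and $Q$ nontrivially. Should this reduction turn out obstructed in some configurations, a fall-back proof uses direct Riesz-projection calculus around $-1$: writing $P_-^{W'} - P_-^W = -\tfrac{1}{\pi i}\oint_\Gamma (z-W')^{-1} QW (z-W)^{-1}\,dz$ for a small loop $\Gamma$ encircling $-1$ in the resolvent sets of both $W$ and $W'$, and then using cyclicity of the trace together with the symmetry relation $\ch W\ch = W^*$ (and its analogues) to reduce $\tr(\ch(P_-^{W'}-P_-^W))$ to $\tr(\ch QW) = \tr(\tch Q) = \tsix_-(V)$.
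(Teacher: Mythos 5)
Your second equality and your first reduction (contracting the non-real eigenvalue pairs of $V$ in symmetric pairs so that $V=\idty-2Q$ with $Q$ the finite-rank, $\trh$-invariant projection onto $\HH_V^-$) coincide with the paper's opening moves. The genuine gap is everything after that: the whole content of the theorem is the relation between the $-1$-eigenspaces of $W$ and of $W'=(\idty-2Q)W$ when $Q\HH$ is in \emph{general position} relative to the spectral subspaces of $W$, and your proposal never establishes it. The ``second homotopy'' bringing the pair to $[W,Q]=0$ with $W|_{Q\HH}$ having spectrum in $\{\pm1\}$ is asserted, not constructed, and the target configuration is genuinely special: $\tch$-invariance of $Q\HH$ together with $[W,Q]=0$ forces $[\ch,Q]=0$, an extra constraint the deformation would have to manufacture. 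Moreover, along such a path the reference walk changes, so the representation $\trh_t=(\ph,W_t\rv,W_t\ch)$ changes with $t$; \Qref{pro:homoto} is proved for a fixed symmetry representation and does not directly yield constancy of $\tsix_-(V_t)$ (nor of the relative index, which is defined with respect to $W_t$). Since you acknowledge the obstruction yourself, what you have actually proved is the identity only in the commuting, $\pm1$-spectrum configuration, which begs the question.

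The fall-back does not close the gap either. The contour integral $\oint_\Gamma(z-W')^{-1}QW(z-W)^{-1}\,dz$ involves resolvents of two \emph{different} operators, both with poles at or near $-1$ inside $\Gamma$; cyclicity plus $\ch W\ch^*=W^*$ does not by itself collapse $\tr\bigl(\ch(P_-^{W'}-P_-^{W})\bigr)$ to $\tr(\ch QW)$ --- the cross terms are controlled precisely by how $Q\HH$ sits relative to the $-1$-eigenspaces, which is the work to be done. And the trace identity only addresses the types \symAIII, \symBDI, \symCII\ covered by \eqref{sitrch}; for \symD{} and \symDIII{} the index is a dimension mod $2$ or mod $4$ and your sketch says nothing. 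The paper closes exactly this gap by a finite-dimensional analysis of the eigenvalue equation $(W+\idty)\psi=2QW\psi$: it splits the $-1$-eigenspaces $N\HH$, $N'\HH$ and the perturbation space $Q\HH$ into symmetry-invariant pieces ($N_0=N_0'$, $N_1\HH=NQ\HH$, $Q_0\HH=Q\HH\cap N^\perp\HH$), shows that the restricted inverse $(W+\idty)^{-1}$ intertwines $\trh$ on $\ker H$, with $H$ the admissible Hamiltonian $-iQ(\idty-W)(W+\idty)^{-1}Q$ on $Q_0\HH$, with $\rho$ on $N_1'\HH$, and that the polar isometry of $Q$ maps $N_1\HH$ onto $QN\HH$ while flipping the sign of $\rv$ and $\ch$, so that $\six(\trh_{QN})=-\six(\rho_{N_1})$; summing these identities gives \eqref{locpert} uniformly for all symmetry types. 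Some version of this bookkeeping, or an actually completed deformation argument, is what your proposal is missing.
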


A good way to think about this identity is as an analogue of a determinant product. In fact, for a finite dimensional overall space with symmetry type \symD, it is just that: according to \eqref{siDet}, we can directly translate the relation $\det(VW)=\det(V)\det(W)$ to the index setting to get the result. For infinite dimensional spaces, however, $\six_-(W)$ is not directly a determinant, understood as the infinite product of all eigenvalues, unless one makes the convention that every non-real eigenvalue is immediately combined with its conjugate, and thus omitted from the product, and that this rule is also extended to the continuous spectrum of $W$. That is certainly a reasonable convention and leads to an equivalent definition of $\six_-(W)$. But then it is a non-trivial question whether the product formula still holds. The Theorem answers this in the affirmative.
Similar remarks apply to the symmetry types \symAIII, \symBDI, and \symCII, for which in finite dimension we get from \eqref{sitrch}:
\begin{equation}
\tsix_-(V)=\tfrac12\tr(\widetilde{\gamma}(\idty-V))
=\tfrac12\tr(\gamma(W-W'))=\six_-(W')-\six_-(W).
\end{equation}

\begin{proof}
The proof will be given for the first identity in \eqref{locpert}. The second one follows from $\six_+(W)=\six(W)-\six_-(W)$ and $\six(W')=\six(W)$, since $\six(W)$ is invariant under compact perturbations.
	
	We first use the homotopy invariance of $\six_-$ to simplify the situation. Indeed, if we change $V$ in a norm continuous way (keeping the required symmetry) all terms in \eqref{locpert} will remain constant. Since $V-\idty$ is compact it has discrete spectrum on $\HH_V$. We deform $V$ by continuously moving every complex conjugate eigenvalue pair $\lambda,\overline{\lambda}$ to $1$ keeping the eigenvectors. This ensures that the symmetries are respected along the path. The resulting operator $V'$ has additional eigenvalues $1$, and $\HH_{V'}$ is smaller than $\HH_V$, namely only the $-1$-eigenspace of $V$. Therefore, we can take $V(=V')=(\idty-2P)$, with $P$ a finite dimensional projection onto a subspace which is invariant under the symmetry representation $(\tph,\trv,\tch)$ from \Qref{lem:twiddlesym}.
	
	The left hand side of \eqref{locpert} is then given by the symmetry index of the finite dimensional representation $\trh_P$ on $P\HH$. The two summands on the right hand side are given by the symmetry indices of the representations $\rho_N$ and $\rho_{N'}$, where $N$ and $N'$ are the $-1$-eigenprojections of $W$ and $W'$, respectively.
	To prove the index formula, we need to compute $\six(\rho_{N'})$ in terms of $\six(\rho_{N})$ and $\six(\trh_P)$. By definition, $N'\HH$ is the solution space of $W'\psi=-\psi$, which we rewrite as
	\begin{equation}\label{eq:eigenval}
	(W+\idty)\psi=2PW\psi.
	\end{equation}
	We first split the spaces $N\HH, N'\HH$ and $P\HH$ into symmetry invariant subspaces to identify the correct summands on both sides of \eqref{locpert} (see \Qref{fig:relativeindex}).
	
	$N$ and $N'$ can be written as $N=N_0\oplus N_1$, with
	\begin{eqnarray}
	N_0\HH&=&\{\phi\in N\HH\mid P\phi=0\}=N\HH\cap P^\perp\HH,\\
	N_1\HH&=&(N-N_0)\HH=NP\HH,\nonumber
	\end{eqnarray}
	and likewise for $N'$, where we use the notation $P^\perp=\idty-P$ for projections. The subspace $N_0\HH$, and consequently $N_1\HH$, is $\rho$-invariant, since $P\gamma\phi=PWW^*\gamma\phi=P\tch W\phi=-\tch P\phi=0$ and $P\ph\phi=\ph P\phi=0$, 	for $\phi\in N_0\HH$. Here and in the remainder of this proof we check only symmetry conditions for $\ph$ and $\ch$, since the computation for $\rv$ is analogous, and follows anyhow if all three symmetries are present.

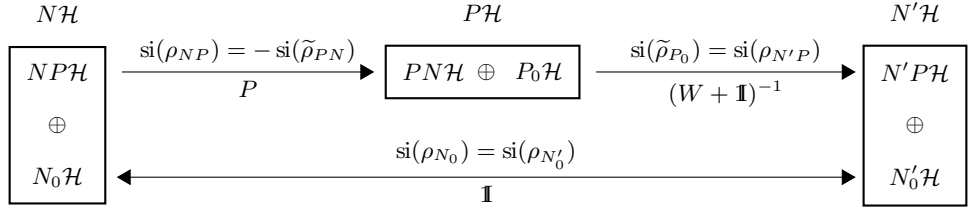
\begin{figure}[ht]
		\begin{tikzpicture}[>=triangle 60]
	\def\hdist{5}
	\def\vdist{0.7}
	\def\pos{above}
	
%nodes
	\node (n0)	at (0,0) {$N_0\HH$};
	\node		at (0,\vdist) {$\oplus$};
	\node (np)	at (0,2*\vdist) {$NP\HH$};
	\node (pn)	at (\hdist,2*\vdist) {$PN\HH$};
	\node		at (\hdist+\vdist,2*\vdist) {$\oplus$};
	\node (p0)	at (\hdist+2*\vdist,2*\vdist) {$P_0\HH$};
	\node (ns0)	at (2*\hdist+2*\vdist,0) {$N'_0\HH$};
	\node		at (2*\hdist+2*\vdist,\vdist) {$\oplus$};
	\node (nsp)	at (2*\hdist+2*\vdist,2*\vdist) {$N'P\HH$};
	
%anchors for arrows
	\def\adist{1mm}
	\node[right=\adist of n0] (n0a) {};
	\node[right=\adist of np] (npa) {};
	\node[right=\adist of p0] (p0a) {};
	\node[left=\adist of ns0] (ns0a) {};
	\node (phantomnsp) at (2*\hdist+2*\vdist,2*\vdist) {$\phantom{N'_0\HH}$};
	\node (phantompn) at (\hdist,2*\vdist) {$\phantom{N'_0\HH}$};
	\node[left=\adist of phantomnsp] (nspa) {};
	\node[left=\adist of phantompn] (pna) {};

%boxes
	\node[draw,fit=(n0) (np),thick] (nbox) {}; 
	\node[draw,fit=(p0) (pn),thick] (pbox) {}; 
	\node[draw,fit=(ns0) (nsp),thick] (nsbox) {}; 
	
%titles
	\node[above=2mm of nbox] {$N\HH$};
	\node[above=2mm of pbox] {$P\HH$};
	\node[above=2mm of nsbox] {$N'\HH$};

%arrows
	\draw[<->] (n0a)--(ns0a) node [midway,\pos]
		{\small $\six(\rho_{N_0})=\six(\rho_{N'_0})$} node [midway,below] {$\idty$};
	\draw[->] (npa) -- (pna) node [midway,\pos]
		{$\six(\rho_{NP})=-\six(\trh_{PN})$} node [midway,below] {$P$};
	\draw[->] (p0a) -- (nspa) node [midway,\pos]
		{\small $\six(\trh_{P_0})=\six(\rho_{N'P})$} node [midway,below] {$(W+\idty)^{-1}$};
\end{tikzpicture}
		\caption{\label{fig:relativeindex}
			Mappings between the subspaces that are involved in a finite rank perturbation $W'=(\idty-2P)W$ (see proof of \Qref{thm:locpert}). The respective map between the spaces is denoted below the arrows and the bidirectional arrow stands for equal spaces. The resulting consequence for the index of the symmetry representation given on each of the subspaces  is denoted above the arrows.}
	\end{figure}	
Analogously to $N\HH$, $P\HH$ is split into
	\begin{equation}
	P\HH=P_0\HH\oplus (P-P_0)\HH:=\bigl(P\HH\cap N^\perp\HH\bigr)\oplus PN\HH.
	\end{equation}
	Each of these spaces is invariant for $\trh$, since $P$ and $N$ commute with the symmetries of $\trh$ ($N$ commutes with $\rho$ and with $W$, and therefore with $\trh$). The symmetry invariance of all these subspaces allows us to decompose the symmetry indices as $\six(\rho_N)=\six(\rho_{N_0})+\six(\rho_{N_1})$, the analogous relation for  $\rho_{N'}$, and $\six(\trh_P)=\six(\trh_{PN})+\six(\trh_{P_0})$. We will now show that some of these summands are equal, due to the existence of appropriate intertwining relations. The relevant spaces and maps are shown in \Qref{fig:relativeindex}.
	
	Consider the eigenvalue equation \eqref{eq:eigenval} for $W'$. For one simple kind of solutions this equation can be satisfied with both sides equal to $0$. That is, the solution $\psi\in N'\HH$ satisfies two further conditions: on one hand, $\psi\in N\HH$, because the left hand side vanishes. On the other, $P\psi=-PW\psi=0$, so $\psi\in P^\perp\HH$, that is $\psi\in N_0\HH$ and $\psi\in N_0'\HH$. Therefore $N_0=N'_0$.
	
	The next step in the analysis of the eigenvalue equation \eqref{eq:eigenval} is the analysis of $N'_1\HH$, i.e., the solutions orthogonal to the ones already found in $N_0'\HH$. Let $\phi\neq0$ be the vector equal to both sides in \eqref{eq:eigenval}. Then the right hand side implies $\phi\in P\HH$, and the left $\phi\in N^\perp\HH$, so
	$\phi\in P_0\HH$. Since in the case considered here $\psi\in N^\perp\HH$, we can reconstruct $\psi$ from $\phi$. Restricted to $N^\perp\HH$, the inverse of $(W+\idty)$ exists and we get $\psi=(W+\idty)^{-1}\phi$. This restricted inverse is also known as the pseudo-inverse of $(W+\idty)$, and is bounded due to the essential gap condition. For this $\psi$ to satisfy \eqref{eq:eigenval}, $\phi$ must fulfill the consistency condition
	\begin{equation}
	0= P(\idty-W)(W+\idty)^{-1}P\phi =\vcentcolon iH\phi.
	\end{equation}
	As one easily checks, $H$ is a $\trh$-admissible Hamiltonian on the finite dimensional space $P_0\HH$.
	Therefore, the symmetry representation on $P_0\HH\ominus\ker(H)$ is balanced and we get $\six(\trh_{P_0})=\six(\trh\vert_{\ker(H)})=\tsix(H)$.
	To identify the symmetry representations $\trh\vert_{\ker(H)}$ and $\rho_{N'_1}$ with each other, first note that $(W+\idty)^{-1}$ bijectively maps $\ker(H)$ to $N'_1\HH$. Moreover, it intertwines the symmetry representations $\rho$ and $\trh$:
	let $\phi\in\ker(H)$,  then
	\begin{align}
	(W+\idty)^{-1}\tch\phi &= (W+\idty)^{-1}W\ch\phi = \ch W^*(W^*+\idty)^{-1}\phi = \ch (W+\idty)^{-1}\phi \\
	(W+\idty)^{-1}\tph\phi &= (W+\idty)^{-1}\ph\phi= \ph (W+\idty)^{-1}\phi .
	\end{align}
	Thus the symmetry representation $\trh$ on $\ker(H)$ is unitarily equivalent (a possible unitary being the polar isometry of $(W+\idty)^{-1}$) to $\rho$ on $N'_1\HH$ and we get $\six(\rho_{N'_1})=\six(\trh\vert_{\ker(H)})=\six(\trh_{P_0})$.
	
	Finally, we have to connect the spaces $N_1\HH=NP\HH$ and $PN\HH$. We consider $P$ as a map
	$P:N\HH\to PN\HH$. Its kernel is precisely $N_0\HH$, so restricted to $N_1\HH$ it is a bijection. Moreover, $P$ intertwines the symmetries: for  $\phi\in N_1\HH$, using $W\phi=-\phi$,
	\begin{align}\label{Pchisign}
	P\ch\phi &= PWW^*\ch\phi=PW\ch W\phi=-P\tch\phi=-\tch P\phi \\
	P\ph\phi &= \ph P\phi.
	\end{align}
	The same relations hold for the polar isometry of $P$, which is a unitary from  $NP\HH$ onto $PN\HH$.
	Hence the respective symmetry representations are the same up to an additional minus sign for $\ch$ and $\rv$. However, for any finite dimensional representation $\six(\ph,-\rv,-\ch)=-\six(\ph,\rv,\ch)$ (see the proof of \Qref{prop:six}). Hence $\six(\trh_{PN})=-\six(\rho_{N_1})$.	
	To summarize,
	\begin{align*}
	\six(\trh_{P})&=\six(\trh_{P_0})+\six(\trh_{PN})
    =\six(\rho_{N'P})-\six(\rho_{NP})\\
	&=\six(\rho_{N'P})+\six(\rho_{N'_0})-\six(\rho_{N_0})-\six(\rho_{NP})
	=\six_-(W')-\six_-(W).
	\end{align*}
\end{proof}

\noindent A direct consequence of \Qref{thm:locpert} is the following chain rule for the relative index:

\begin{cor}\label{cor:chainrule}
	Let $W'$ and $W''$ be compact perturbations of $W$. Then $W''$ is a compact perturbation of $W'$ and
	\begin{equation}\label{chainrule}
		\sixrel W'':W=\sixrel{W''}:{W'}+\sixrel W':W.
	\end{equation}
\end{cor}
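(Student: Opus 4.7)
My plan is to deduce the chain rule essentially as an arithmetic identity, using \Qref{thm:locpert} to convert each relative index into a difference of the absolute indices $\six_-$, where cancellation is immediate.

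First I would verify that the statement makes sense, i.e.\ that $W''$ really is a compact perturbation of $W'$ so that $\sixrel W'':W'$ is defined. This is straightforward: write $W''-W' = (W''-W) - (W'-W)$, and both summands are compact by hypothesis, so the difference is compact. Consequently $W''$ fits the framework of \Qref{def:relInd} relative to $W'$, and all three relative indices appearing in \eqref{chainrule} are well defined. One should also note, for the relative indices to be legal, that $W'$ and $W''$ are admissible walks: this is inherited from $W$ because compact perturbations preserve the symmetry commutation conditions (by construction, writing $W'=VW$ with $V$ a $\trh$-admissible unitary) and preserve the essential gap (since the image in the Calkin algebra is unchanged).

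Next I would invoke \Qref{thm:locpert} three times, once for each pair:
\begin{align*}
\sixrel W'':W &= \six_-(W'')-\six_-(W),\\
\sixrel W':W &= \six_-(W')-\six_-(W),\\
\sixrel W'':W' &= \six_-(W'')-\six_-(W').
\end{align*}
Adding the last two expressions gives $\six_-(W'')-\six_-(W)$, which equals the first. This yields \eqref{chainrule} at once.

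There is no real obstacle here: all the work has been done in \Qref{thm:locpert}, which recasts the relative index as the difference of the two \emph{absolute} $\six_-$ indices that are attached purely to the walk itself, not to a chosen base point. Once one has this ``potential'' representation, the chain rule is the telescoping of three differences. If one wanted an intrinsic proof staying entirely within the relative-index formalism of \Qref{lem:twiddlesym}, one would instead have to track how the symmetries $(\tph,\trv,\tch)$ attached to $W$ and to $W'$ differ and show that the indices on the eigenspaces $\HH_{V'V}^-$, $\HH_{V'}^-$ and $\HH_V^-$ (with $W'=VW$ and $W''=V'W'$) add correctly; the shortcut via absolute indices avoids this bookkeeping entirely.
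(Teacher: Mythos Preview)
Your proof is correct and is exactly the intended argument: the paper does not even spell it out, simply calling the corollary ``a direct consequence of \Qref{thm:locpert}'', which is precisely your telescoping of absolute $\six_-$ differences. Your preliminary check that $W''-W'$ is compact is appropriate and your remarks on admissibility are accurate.
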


Note that there is a subtlety to keep in mind: the multiplicative perturbations  $V_1=W'(W)^*$ and $V_2=W''(W')^*$, corresponding to the two relative indices on the right hand side of \eqref{chainrule}, are admissible for different symmetry representations of the same type: $V_1$ is admissible for $\trh=(\ph,W\rv,W\ch)$, whereas $V_2$ is admissible for $\trh'=(\eta,W'\rv,W'\ch)$.

However, when we perturb a walk in two separate regions, we can ignore this fact. Consider a strictly local walk $W$, and two local perturbations $V_1,V_2$ in sufficiently distant regions of the system (as compared to the interaction length). Then we get for the overall perturbation $V_1V_2$: $\HH_{V_1V_2}=\HH_{V_1}\oplus\HH_{V_2}$. Therefore, since $\HH_{V_1}\perp\HH_{V_2}$, the representation $\trh'$, restricted to $\HH_{V_1}$ coincides with $\trh$, restricted to the same subspace and we get $\tsix_{-}(V_1)=\tsix'_-(V_1)$. Hence
\begin{equation}\label{locindexAdd}
\sixrel V_2V_1W:W=\sixrel V_2W:W+\sixrel V_1W:W.
\end{equation}
Note that the sum may be zero, even if the summands are not. In that case we know that the overall perturbation can be contracted locally (on a finite region containing $\HH_{V_1}$ and $\HH_{V_2}$), even though the individual perturbations cannot (see \Qref{fig:tubes}).

\begin{figure}%[ht]
	\input{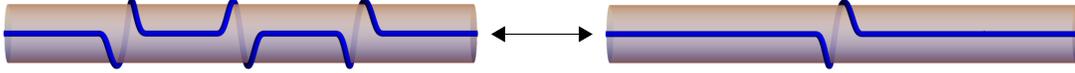}
	\caption{\label{fig:tubes}
		An example of locally contracting three perturbations to one. The tube visualizes the infinite chain, while the winding of the blue ribbon characterizes the relative index of the perturbation at the corresponding position.
	}
\end{figure}

% decoupling theory
\section{Decoupling Theory}\label{sec:decoup}

\subsection{Decoupling-Index of a one-dimensional walk}\label{sec:old-index}
The decoupling construction we use is built on the construction used in \cite{OldIndex}, where decoupling is studied in a setting which is on one hand more general (no symmetry, no gap) and, on the other, more special (strict locality and strict unitarity).
Strict locality was needed in \cite{OldIndex} because walks were just used as an analogy to understand the decoupling of cellular automata (aka.\ interacting systems). In the present context this constraint is unnatural.  Therefore we begin by recapitulating and at the same time generalizing the index and decoupling construction as the basis of the construction under symmetries.

Let $W$ be an essentially unitary operator. Then, by definition, it is a Fredholm operator, whose index we denote by $\indF(W)=\dim\ker W-\dim\ker W^*$. Then, if $W$ is also essentially local, i.e.,  $[P,W]$ is compact for the half-space projection $P$, and hence $W'=PWP+(\idty-P)W(\idty-P)$ is a compact, essentially unitary perturbation of $W$. In particular, $PWP$ is essentially unitary on the half-space, and we define
\begin{equation}\label{oldIxF}
  \ind(W):=\indF(PWP).
\end{equation}
We will see presently that this extends the definition given in \cite{OldIndex} for the case of unitary, strictly local walks. The above definition is in the spirit of \Qref{sec:indices}, and allows us to quickly establish the basic properties, by referring to Fredholm theory. Clearly, $\ind(W)$ is stable under homotopy and compact perturbations, independent of $a$ when $P=\Pg a$, and satisfies the product formula $\ind(W_1W_2)=\ind(W_1)+\ind(W_2)$. Moreover, for the shift $S$ on $\ell^2(\Ir)$, $PSP$, considered as an operator on $\ell^2(\Nl)\cong P\ell^2(\Ir)$ is the unilateral shift, one of the standard examples of Fredholm theory, and we get $\ind(S)=1$. More generally, for translation invariant $W$, $\ind(W)$ is the total winding number of the quasi-energy spectrum around the Brillouin circle \cite{OldIndex}. Note also that in analogy to the symmetry index, we really get a right index
$\ind(W)=\mbox{``}\indFR(W)\mbox{''}=\indF(PWP)$ and a left index $\mbox{``}\indFL(W)\mbox{''}=\indF((\idty-P)W(\idty-P))$ with $\indFL(W)+\indFR(W)=\indF(W)$. But since we are almost exclusively interested in the unitary case with $\indF(W)=0$, and hence $\indFL(W)=-\indFR(W)$, we will not use this notation.

For a decoupled unitary walk, $PWP$ is unitary, so such a walk has a vanishing index. By stability with respect to compact perturbations we find that a necessary condition for the existence of a compact decoupling is a vanishing index. That this condition is also sufficient will be shown by explicit construction in the next subsection. When we study decouplings in the setting of the present paper this raises the question whether we need $\ind(W)=0$ as an additional assumption. The answer is no, from two independent lines of reasoning: firstly, for most symmetry types $\ind(W)=0$ actually follows just from symmetry (see the remark following \Qref{prop:swapH01}). A short version of this argument is that the $\ind(W)$ changes sign under $W\mapsto W^*$, so it has to vanish for any symmetry type containing $\ch$ or $\rv$. But our theory is also to apply to the types \symA, \symC, and \symD. In this case (and in all others) we can invoke a second general feature of our setting, the essential gap assumption. This is shown in the following proposition, which is a variant of the Brown-Douglas-Fillmore Theorem \cite[Thm.~3.1]{BDF}. We will sketch a proof for completeness.

\begin{prop}\label{prop:bdf}
Let $W$ be an essentially local unitary operator with $\ind(W)\neq0$. Then the essential spectrum of $W$ is the full unit circle.
\end{prop}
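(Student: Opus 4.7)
My plan is to prove the contrapositive: assuming the essential spectrum $\spec(\calk W)$ is a proper closed subset of the unit circle $\Tw$, I will show $\ind(W)=0$. The idea is that, given a gap, $W$ admits a continuous logarithm and can therefore be contracted to $\idty$ within the class of essentially local unitaries, along which $\ind$ is constant.

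First I reduce to a concrete gap location. Pick $\lambda_0 = e^{i\theta_0}\in\Tw\setminus\spec(\calk W)$ and replace $W$ by $e^{-i(\theta_0+\pi)}W$; this preserves unitarity and essential locality, and leaves $\ind$ unchanged because multiplying $PWP$ by a nonzero scalar does not alter its Fredholm index. Thus I may assume $-1 \notin \spec(\calk W)$. Then $-1$ is at most an isolated eigenvalue of $W$ of finite multiplicity; let $E$ be its (possibly zero) spectral projection, a finite-rank operator, and set $W_1 := W + 2E$. Then $W_1$ is unitary, $\spec(W_1)\subseteq\Tw\setminus\{-1\}$, and $W_1 - W = 2E$ is finite-rank. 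By the compact-perturbation stability of $\ind$ noted after \eqref{oldIxF}, $\ind(W_1)=\ind(W)$, and $W_1$ is still essentially local.

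Since $\spec(W_1)$ avoids $-1$, the principal branch of $\log$ is continuous on it, and the continuous functional calculus produces a bounded self-adjoint $A := -i\log W_1$ with $W_1 = e^{iA}$. Define the norm-continuous path of unitaries $W_t := e^{itA}$, $t\in[0,1]$, connecting $\idty$ to $W_1$. The critical point is that each $W_t$ remains essentially local. The set
\begin{equation*}
  \mathcal A := \{X\in B(\HH) : [X,P]\text{ is compact}\}
\end{equation*}
is a unital, norm-closed $*$-subalgebra of $B(\HH)$ (closure under products follows from $[XY,P]=X[Y,P]+[X,P]Y$ and the ideal property of the compacts). It contains $W_1$, because $[W_1,P]=[W,P]+2[E,P]$ is compact. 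By Stone--Weierstrass, Laurent polynomials in $W_1, W_1^*$ are norm-dense in $C(\spec(W_1))$, so $f(W_1)\in\mathcal A$ for every continuous $f$ on $\spec(W_1)$; in particular $A\in\mathcal A$. The absolutely convergent exponential series then shows $W_t = e^{itA}\in\mathcal A$ for all $t$.

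Applying homotopy invariance of $\ind$ along this path yields $\ind(W) = \ind(W_1) = \ind(W_0) = \ind(\idty) = 0$, completing the contrapositive. The main obstacle is keeping the deformation inside the class of essentially local unitaries: this is what forces the $*$-subalgebra/Stone--Weierstrass argument in the previous paragraph. Everything else is standard spectral calculus combined with the already-stated stability properties of the Fredholm index.
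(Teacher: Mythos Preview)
Your proof is correct and takes a genuinely different route from the paper's.

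The paper proves the direct statement constructively: given $\ind(W)=n\neq0$, it builds an explicit compact perturbation of the form $S\oplus U$ with $U$ unitary and $S$ a bilateral shift of multiplicity $\abs{n}$, using a defect-index reduction followed by the Wold--von Neumann decomposition of the resulting co-isometry. The full circle then appears as the (essential) spectrum of the shift summand. Your argument instead proves the contrapositive by a homotopy: a gap in the essential spectrum lets you (after a finite-rank correction) take a continuous logarithm and contract $W$ to $\idty$ inside the class of essentially local unitaries, whence $\ind(W)=\ind(\idty)=0$ by norm-continuity of the Fredholm index. The essential-locality-along-the-path step, which you handle via the closed $*$-subalgebra $\mathcal A=\{X:[X,P]\text{ compact}\}$ and Stone--Weierstrass, is exactly the point where a naive deformation argument could fail, and your treatment of it is clean. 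Your approach is shorter and more elementary (no Wold decomposition), while the paper's yields additional structural information---namely that $W$ is a compact perturbation of a shift direct-summed with a decoupled unitary---which motivates the decoupling constructions in the rest of \Qref{sec:decoup}.
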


\begin{proof}
We will show that there is a compact perturbation $W'$, which is of the form $S\oplus U$, where $U$ is a decoupled unitary and $S$ is a bilateral shift with $\ind(S)=\ind(W)$. Note that the direct sum $S\oplus U$ is not along the given cell structure. Then since the spectrum of $S$ is absolutely continuous and equal to the full unit circle, the essential spectrum of $W'$ is the full unit circle, a property which transfers to $W$ by the stability of the essential spectrum under compact perturbations.

For the construction we may assume $n=\ind(W)>0$, since otherwise we can consider $W^*$. Since we assumed $W$ to be unitary we have $\indF(W)=0=\ind_F(PWP)+\indF((\idty-P)W(\idty-P))$, so it suffices to do the construction on each side: for the almost unitary $PWP$ with Fredholm index $n$ we need to find a compact perturbation $W'=S_n\oplus U$, where $S_n$ is isomorphic to the unilateral shift of multiplicity $n$. With the analogous construction on $(\idty-P)\HH$ we can then join the two unilateral shifts to a bilateral one. The basic idea is to reduce the pair of ``defect indices'' $(n_+,n_-)=(\dim\ker W,\dim\ker W^*)$ by the same amount $n_-$ so that they become $(n_+-n_-,0)=(\indF(PWP),0)$. This is done by setting $W_1=W+W_\delta$ so that $W_\delta$ vanishes on $(\ker W)^\perp$, and  $W_\delta:\ker W\subset P\HH\to\ker W^*$ is any linear map of maximal rank (i.e., $n_-$). Clearly, $W_1$ is a compact, more precisely a rank $n_-$ perturbation of $W$. By construction, $W_1W_1^*$ has trivial kernel, and because $W_1W_1^*-\idty$ is compact, $0$ lies in the region where $W_1W_1^*$ has purely discrete spectrum.  Therefore, it is boundedly invertible. We can then form the polar co-isometry $W'=(W_1W_1^*)^{-1/2}W_1$, which is a compact perturbation of $W_1$, hence of $W$.

What is left to show is that a co-isometry $W'$ of index $n$, i.e., an operator with $W'(W')^*=\idty$ and $\dim\ker W'=n$ can be decomposed into a unitary part and a shift of multiplicity $n$ (Wold- von Neumann decomposition \cite[Sect.~{X}]{WoldvN},\cite{nagy}). The idea is to inductively build a ``cell structure'' with $\HH_0=\ker W'$ and $\HH_j=(W'^*)^j\HH_0$, and to show that $W'$ is unitary on the orthogonal complement of $\bigoplus_{j\geq0}\HH_j$.
\end{proof}

With the existence of decouplings established we now move to the explicit constructions.

\subsection{Canonical decoupling and the algebra generated by two projections}\label{sec:2proj}

In this section we will explicitly construct a gentle decoupling for any essentially local unitary with $\ind(W)=0$. By definition, this is a unitary $V$ with $V-\idty$ compact so that $W'=VW$ is decoupled, i.e., commutes with $P=\Pg0$. The idea is to do this in a ``canonical'' way, which is to say that any symmetries of $W$ and $P$ are automatically transferred to the decoupling. Since we want to construct a gentle decoupling, $-1$ should not be an eigenvalue of the constructed $V$, because then $\tsix_-(V)=0$, and by \Qref{lem:contract} it can be contracted to the identity.

The construction is made canonical via the use of a canonical object, the C*-algebra generated by two projections \cite{Sinclair}, of which index aspects have been studied in \cite{AvronSeiler}.
In our case the projections will be the half-space projection $P$ and its one-step translate $Q=WPW^*$, where $P-Q=[P,W]W^*$ is a compact operator. The decoupling condition $[P,VW]=0$ then translates to the operator $V$ as the intertwining condition
\begin{equation}\label{VQPV}
VQ=PV.
\end{equation}
A ``canonical'' $V$ would be just a noncommutative function of $P$ and $Q$. We will be only partially successful in this, but will still get a major simplification of the symmetry considerations in the next subsection.

The ``C*-algebra generated by two projections'', call it $\AA$, is the set of ``noncommutative functions'' of abstract projections $P$ and $Q$, so that ``plugging in'' a particular choice of projections is the evaluation of the function. In other words, every choice of two concrete projections in some Hilbert space gives a representation of $\AA$. There is a Weierstra\ss\ Theorem saying that $\AA$ is generated by the noncommutative polynomials of $P$ and $Q$ \cite[Lem.\;A.2]{RW89}.
The reason why this $\AA$ is a very useful object (in contrast to, e.g., the analogously defined algebra generated by three projections) is that each monomial is just an alternating string of $P$s and $Q$s determined by the first and last element, and the power of $(PQ)$ in between. These four classes of monomials multiply like $2\times2$-matrices. More formally, the irreps of $\AA$ are one-dimensional in the simplest case. These are relevant for all matters concerning commuting projections. They are just parameterized by the respective eigenvalues of $P$ and $Q$. All other irreps are two-dimensional, with $P,Q$ given by two one-dimensional projections. These are parameterized by the angle $\alpha$ between the two subspaces.

We hence introduce the following subspaces \cite{Halmos,Andruchow}:
\begin{equation}\label{Hpm}
\begin{aligned}
\HH_{11}&=\{\phi\in\HH|\, P\phi=Q\phi=\phi\} \\
\HH_{00}&=\{\phi\in\HH|\, P\phi=Q\phi=0\} \\
\HH_{10}&=\{\phi\in\HH|\, P\phi=\phi,\ Q\phi=0\} \\
\HH_{01}&=\{\phi\in\HH|\, P\phi=0,\ Q\phi=\phi\} \\
\HH_\perp&= \bigcap_{\eps_1,\eps_2}\HH_{\eps_1,\eps_2}^\perp.
\end{aligned}
\end{equation}

Let us assume now that $(P,Q)$ arises from a walk as above, i.e., $Q=WPW^*$. Then the spaces $\HH_{11}$ and $\HH_{00}$ are associated to the regions ``far to the right'' and ``far to the left'', respectively. When $W$ is even strictly local, they are clearly infinite dimensional, but in general they might be empty. Every vector in $\HH_{10}$ or $\HH_{01}$ is an eigenvector for $P-Q$ with absolute value $1$. Since this operator is compact, both these spaces are finite dimensional. Moreover,
\begin{equation}\label{H10}
\HH_{10}=\{\phi\in P\HH|\, PW^*P\phi=0\}=\ker(PW^*P)
\end{equation}
when $PW^*P$ is considered as an operator on $P\HH$. Similarly, we get $W^*\HH_{01}=\ker(PWP)$. Hence
\begin{equation}\label{oldind1}
\ind (W)=\dim\HH_{01}-\dim\HH_{10}.
\end{equation}

It will be useful to collect some straightforward algebraic facts. We follow \cite{AvronSeiler} introducing the operators
\begin{eqnarray}\label{decoupAB}
A=P-Q &\quad\mbox{and}\quad& B=\idty-P-Q \\
AB+BA=0 && A^2+B^2=\idty.\label{decoupAB2}
\end{eqnarray}
These are important because $[A^2,P]=[A^2,Q]=0$, so $A^2$ and $B^2$ lie in the center of $\AA$. In an irrep with $P=\kettbra\phi$ and $Q=\kettbra\psi$ with an angle $\alpha$ between $\phi$ and $\psi$  we have $A^2=(1-\abs{\braket\phi\psi}^2)\idty$, and the eigenvalues of $A$ are $\pm\sin\alpha$. More generally speaking, by \eqref{decoupAB2}, the eigenvalues of $A$ on $\HH_\perp$ come in $\pm$ pairs so $A$ formally has trace $0$ on this subspace. Hence, taking into account also the kernel $\ker A=\HH_{00}\oplus\HH_{11}$ and the $\pm1$ eigenspaces $\HH_{01}$ and $\HH_{01}$ of $A$ with \eqref{oldind1}
\begin{equation}\label{oldind2}
\ind (W)=-\tr A=\tr (WPW^*-P),
\end{equation}
whenever $A$ is trace class. This is the index definition used in \cite{OldIndex}, where $A=P-Q$ was even finite rank. The definition \eqref{oldind1} is hence more broadly applicable than \eqref{oldind2}, and is used in \cite{AvronSeiler} even for $A$ with some continuous spectrum.

In order to satisfy \eqref{VQPV}, we look for operators $X\in\AA$ satisfying the intertwining relation $XQ=PX$. Once we have such an operator we can produce further ones by multiplying with central elements. Moreover, $X^*X$ will be central, so we can form the polar isometry of any solution $X$ to get an intertwining isometry. The first choice that comes to mind \cite{AvronSeiler} is $X=B$, because $PB=BQ=-PQ$. However, $B=-\idty$ on $\HH_{11}$, which could be infinite dimensional. In particular, it is infinite dimensional for strictly local walks but may be finite dimensional or even empty for essentially local ones.
This will then also be true for the polar isometry, and in conflict with the requirement that $V-\idty$ should be compact. For the purpose of contracting to the identity we would like to avoid eigenvalues $-1$ altogether. One option is to flip the sign on $\HH_{11}$ by an additional factor $(\idty-2P)$, so the next natural candidate, apparently first proposed by  T.\;Kato \cite{Kato} in another context (see also the discussion in \cite{AvronSeiler}), is
\begin{equation}\label{V=X}
X=(\idty-2P)B=B(\idty-2Q)=\idty-P-Q+2PQ.
\end{equation}
$X$ intertwines (i.e., $XQ=PX=PQ$) and is $+\idty$ on the subspace $\HH_{00}\oplus\HH_{11}$. In fact, $X-\idty=AQ-PA$ is compact, when $A$ is.
Moreover,
$X^*X=XX^*=B^2=(X+X^*)/2$, which implies that $X$ is normal with spectrum on the circle $(\im z)^2+(\re z-1/2)^2=1/4$ (see \Qref{fig:Xspec}).
\begin{figure}
	%\documentclass{article}
%\usepackage{tikz}
%\usepackage{amsmath}
%\usetikzlibrary{arrows}
%\usetikzlibrary{decorations.pathreplacing}
%\begin{document}

	\begin{tikzpicture}[scale=2]
	
	\colorlet{colorp}{red}		
	\colorlet{colorgap}{blue!80!black}
	\colorlet{colorcont}{green!80!black}
	\def\tph{0.02}
	\newcommand{\points}[1]{\draw[colorp,fill,thick] ({cos(#1)},{sin(#1)}) circle (\tph);
		\draw[colorp,fill,thick] ({cos(#1)*cos(#1)},{cos(#1)*sin(#1)}) circle (\tph);\draw[colorp,fill,thick] ({cos(-#1)},{sin(-#1)}) circle (\tph);
		\draw[colorp,fill,thick] ({cos(-#1)*cos(-#1)},{cos(-#1)*sin(-#1)}) circle (\tph)}
	\newcommand{\lines}[1]{\draw[colorgap,dashed] (0,0) -- ({cos(#1)-\tph*cos(#1)},{sin(#1)-\tph*sin(#1)});
		\draw[colorgap,dashed] (0,0) -- ({cos(#1)-\tph*cos(#1)},{-sin(#1)+\tph*sin(#1)})}

	\draw[->] (-1.1,0) -- (1.15,0);
	\draw[->] (0,-1.1) -- (0,1.1);
	\draw (0.5,1 pt) -- (0.5,.-1pt);
	\node at (0.5,-4pt) {$\tfrac{1}{2}$};
	\node at (1.15,-0.15) {$\re$};
	\node at (0.2,1.1) {$\im$};
	\draw[thick] (0,0) circle (1);
	\draw[thick] (0.5,0) circle (0.5);
	\foreach \ang in {85, 55, 34}
	\lines{\ang};
	\foreach \ang in {85, 55, 34, 21, 13, 8, 4}
	\points{\ang};

	\draw[colorcont,fill,thick] (1,0) circle (\tph);

	\end{tikzpicture}
	
%\end{document}
	\caption{\label{fig:Xspec}
		Spectrum of $X=(\idty-2P)B$ (inner circle) and $\Vcan=(X^*X)^{-1/2}X$ (outer circle). Red points denote discrete spectrum and the green point at $1$ denotes the only accumulation point in both spectra. It is obvious, that, by construction, $\Vcan$ has no eigenvalues with negative real part.
	}
\end{figure}

This somewhat strange property  can be understood in terms of another symmetry, which will be needed below (\Qref{prop:swapH01}), so we describe it here. We introduce the operator
\begin{equation}\label{UX}
U=(\idty-2P)(\idty-2Q)=2X-\idty.
\end{equation}
Clearly, $U$ is unitary, so the second identity clarifies the spectral statement made above. $U$ implements the compact perturbation
\begin{equation}\label{Uperturb}
W_U=UW=(\idty-2P)\bigl(W(\idty-2P)W^*\bigr)W=(\idty-2P)W(\idty-2P).
\end{equation}
This is clearly not a decoupling, and $U$ is not intertwining.

This is not a coincidence. In fact, there is no \emph{unitary} element $V$ in the canonical algebra $\AA$ with the intertwining property: by definition, every operator in $\AA$ is reduced by the subspaces $\HH_{ij}$, but the intertwining condition requires $V\HH_{01}\subset\HH_{10}\oplus\HH_{11}$. Thus for every intertwiner $V\in\AA$ we must have $V\HH_{01}=0$, as we verified for the above $X$. The best we can do with a canonical approach is therefore to choose as our canonical decoupling operator $\Vcan$ the polar isometry of  $X$ from \eqref{V=X}, i.e.,
\begin{equation}\label{Vcanonical}
\Vcan=(X^*X)^{-1/2}X.
\end{equation}
Since $X$ is normal the order of these factors is irrelevant. Moreover, since the spectrum of $X$ is discrete with only $1$ as an accumulation point the eigenvalue $0$ is isolated, so on the complement of the null space $\HH_{01}\oplus\HH_{10}$ the operator $(X^*X)^{-1/2}$ is bounded, so $\Vcan-\idty$ is compact. Furthermore, $\Vcan$ has no eigenvalues with negative real part and in particular no $-1$-eigenvalues. This is best understood with the help of \Qref{fig:Xspec}. The eigenvalues of $X$ lie on a circle in the half-plane with positive real part. Multiplying $X$ with $(X^*X)^{-1/2}$ projects them onto the unit circle, keeping their real part positive.

We remark that $\Vcan$ is not actually in the universal C*-algebra $\AA$, because we used spectral information in the construction which is not true for arbitrary projection pairs $(P,Q)$. Given the additional information that $A$ is compact, however, we see that it is indeed an element of the C*-algebra generated by the concrete operators $P$ and $Q$.

Of course, for a proper decoupling we need a unitary operator satisfying \eqref{VQPV}. This will be of the form
\begin{eqnarray}\label{totalV}
V=\Vcan\oplus V_{01}\quad &\mbox{on}&\ \bigl(\HH_{00}\oplus\HH_{11}\oplus\HH_\perp\bigr)\oplus\bigl(\HH_{01}\oplus\HH_{10}\bigr)\\
&\mbox{where}&\strut\quad  V_{01}\HH_{01}=\HH_{10}\ \mbox{and\ }V_{01}\HH_{10}=\HH_{01}.
\end{eqnarray}
Here $\Vcan$ and $V_{01}$ are unitary on their respective supports.
The swapping condition for $V_{01}$ expresses the decoupling condition, because on $\HH_{01}\oplus\HH_{10}$ the projection onto the first summand is just $Q$ and $P$ projects onto the second.

\subsection{Decoupling under symmetry}
A decoupling $V$ is a special kind of compact perturbation, which therefore has to be admissible with respect to the symmetry representation $(\tph,\trv,\tch)$ on the space $\HH_V=\HH_\perp\oplus\HH_{01}\oplus\HH_{10}$ defined in \eqref{twiddlesym}. The projections $P$ and $Q=WPW^*$ then satisfy the relations
\begin{equation}\label{QPsymm}
\begin{array}{rlrl}
\tph Q&=Q\tph \quad, \   &\tph P&=P\tph\\
\trv Q&=P\trv,\    &\trv P&=Q\trv\\
\tch Q&=P\tch,\ &\tch P&=Q\tch,
\end{array}
\end{equation}
provided the respective symmetries are part of the type under consideration. This translates into the following symmetry properties of the subspaces:

\begin{prop}\label{prop:swapH01}\strut
	\begin{itemize}
		\item[(1)] The subspaces $\HH_{00},\HH_{11},\HH_\perp$, and $\HH_{01}\oplus\HH_{10}$ are each invariant under the symmetries $(\tph,\trv,\tch)$.
		\item[(2)] For $i,j=0,1$ and $i\neq j$, and any symmetry which is part of the type under consideration,
		\begin{equation}\label{swapH01}
		\tph\HH_{ij}=\HH_{ij},\quad  \trv\HH_{ij}=\HH_{ji}, \quad\mbox{and}\,\quad \tch\HH_{ij}=\HH_{ji},
		\end{equation}
		\item[(3)] On $\HH_{00}\oplus\HH_{11}\oplus\HH_\perp$ the canonical decoupling operator $\Vcan$ given by \eqref{Vcanonical} is admissible without eigenvalue $-1$.
		\item[(4)] The representation on the subspace  $\HH_{01}\oplus\HH_{10}$ is balanced.
	\end{itemize}
\end{prop}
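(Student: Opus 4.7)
The plan is to establish the four parts in turn, leveraging the commutation relations \eqref{QPsymm} throughout.

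Parts (1) and (2) will be direct algebraic verifications. For each symmetry $\sigma\in\{\tph,\trv,\tch\}$ and each subspace $\HH_{ij}$, I will check that the eigenvalue conditions defining the subspace are preserved or swapped appropriately. For example, for $\phi\in\HH_{11}$ one has $P\tch\phi=\tch Q\phi=\tch\phi$ and $Q\tch\phi=\tch P\phi=\tch\phi$, so $\tch\phi\in\HH_{11}$; the analogous computation for $\phi\in\HH_{01}$ gives $P\tch\phi=\tch Q\phi=\tch\phi$ and $Q\tch\phi=\tch P\phi=0$, i.e.\ $\tch\HH_{01}\subseteq\HH_{10}$, and unitarity of $\tch$ promotes the inclusion to an equality. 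The subspace $\HH_\perp$ is then invariant as the orthogonal complement of four invariant subspaces.

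For (3), I will exploit the fact that $\Vcan=(X^*X)^{-1/2}X$ is a Borel function of the polynomial $X=(\idty-2P)B$ in $P$ and $Q$. Since $\tph$ commutes with both $P$ and $Q$, $\tph X\tph^*=X$, hence $\tph\Vcan\tph^*=\Vcan$. Since $\trv$ and $\tch$ swap $P$ and $Q$, one finds $\trv X\trv^*=\tch X\tch^*=X^*$; combined with the normality of $X$, so that $X^*X=XX^*$ is preserved by the swap, this yields $\trv\Vcan\trv^*=\tch\Vcan\tch^*=\Vcan^*$. These are precisely the admissibility relations for $(\tph,\trv,\tch)$ on $\HH_{00}\oplus\HH_{11}\oplus\HH_\perp$. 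The absence of the eigenvalue $-1$ follows immediately from the spectral picture in \Qref{fig:Xspec}: the non-zero spectrum of $\Vcan$ lies on the right half of the unit circle.

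For (4), I will proceed case by case using \Qref{Tab:sym}. Types with $\igS=0$ are balanced automatically. For \symD the index is $d\bmod 2$, and $d=2\dim\HH_{01}$ because $\dim\HH_{01}=\dim\HH_{10}$ follows from $\ind(W)=0$ (which holds by \Qref{prop:bdf} and the essential-gap hypothesis). For the chiral types \symAIII, \symBDI, \symCII the index is $\tr\tch$, which vanishes by part~(2) since $\tch$ is block off-diagonal with respect to $\HH_{01}\oplus\HH_{10}$. The case \symDIII will be the subtle one: each 2-dim subspace $\mathrm{span}(e,\tch e)$ with a $\tph$-real vector $e\in\HH_{01}$ is an invariant 2-dim irrep of \symDIII with index $2\in 2\Ir_2$, so balancedness reduces to the parity statement that $\dim\HH_{01}$ is even. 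This parity is not visible from the local structure on $\HH_{01}$ alone, where $\tph$ furnishes only a real structure rather than a quaternionic one; it must be extracted from the global compatibility of the cell-wise balance of $\HH$ with the balancedness of $\HH_\perp$ certified in part~(3) and with the swap structure linking $\HH_{00}$ and $\HH_{11}$. This additivity bookkeeping across the full five-piece decomposition of $\HH$ is the main obstacle I anticipate in a complete write-up.
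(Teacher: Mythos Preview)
Your arguments for (1)--(3) are essentially the paper's: direct use of \eqref{QPsymm} for (1) and (2), and the intertwining $\trv X\trv^*=\tch X\tch^*=X^*$, $\tph X\tph^*=X$ together with normality of $X$ for (3).

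For (4) you take a genuinely different route, and it has a real gap. Your case-by-case argument disposes of the trivial-index types, \symD, and the chiral types cleanly, but for \symDIII\ you correctly diagnose that what is needed is the parity of $\dim\HH_{01}$, and then propose to recover it by ``additivity bookkeeping across the full five-piece decomposition of $\HH$''. This cannot work as stated: the pieces $\HH_{00}$ and $\HH_{11}$ are typically infinite-dimensional (certainly so for strictly local walks), so they carry no well-defined symmetry index, and the cell-wise balance of $\HH$ gives you nothing to add up against. The information you need is not encoded in the five-piece decomposition alone.

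The paper's argument for (4) avoids casework entirely and supplies the missing global input. Consider the unitary $U=(\idty-2P)(\idty-2Q)$ from \eqref{UX}, so that $W_U:=UW=(\idty-2P)W(\idty-2P)$. Since the symmetries act cellwise, $(\idty-2P)$ commutes with them, hence $W_U$ is admissible for the original symmetries and $U$ is admissible for $(\tph,\trv,\tch)$. The $-1$-eigenspace of $U$ is exactly $\HH_{01}\oplus\HH_{10}$, so the index of the representation there equals the relative index $\sixrel{W_U}:W=\tsix_-(U)$. Now invoke \Qref{thm:locpert}: this relative index is $\six_-(W_U)-\six_-(W)$. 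But $(\idty-2P)$ is a symmetry-commuting unitary mapping the $-1$-eigenspace of $W$ onto that of $W_U$, so these two indices coincide and the difference vanishes. This handles all symmetry types uniformly, \symDIII\ included.
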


\begin{proof}
	(1) and (2) follow immediately from \eqref{QPsymm}.
	
	For (3) combine \eqref{V=X} and \eqref{QPsymm} to find, for example, that $\trv X=(1-Q-P+2QP)\trv=X^*\trv$. The other admissibility conditions follow analogously.
	Then $X^*X$ commutes with all symmetries, and hence $\Vcan$ is admissible. It was already noted that $V+V^*$ is positive, so all eigenvalues have positive real part (compare \Qref{fig:Xspec}).
	
	(4)	Consider the perturbation $W_U=(\idty-2P)W(\idty-2P)=UW$ from \eqref{Uperturb}. Since the symmetries act sitewise, $P$ commutes with all symmetry operators. So clearly $W_U$ is again admissible. Hence, by
	\Qref{lem:twiddlesym}, $U$ is admissible for $(\tph,\trv,\tch)$. The $-1$-eigenspace of $U$ is precisely $\HH_{01}\oplus\HH_{10}$. Then by \Qref{thm:locpert} we obtain
	the index of the representation on this subspace as
	\begin{equation}\label{nullindexV}
	\tsix_-(U)=\six_-(W_U)-\six_-(W).
	\end{equation}
	Now the $-1$-eigenspaces of $W_U$ and $W$ are mapped to each other by $(\idty-2P)$, and thus carry unitarily equivalent symmetry representations. So the indices on the right hand side are equal and
	the index of $(\tph,\trv,\tch)$ on $\HH_{01}\oplus\HH_{10}$ is zero.
\end{proof}

It remains to establish the conditions for a decoupling on $\HH_{01}\oplus\HH_{10}$. Since we generally assumed an essential gap, \Qref{prop:bdf} implies that $\ind(W)=\dim\HH_{01}-\dim\HH_{10}=0$, so {\it some} decoupling exists. The same conclusion is arrived at even more simply via \eqref{swapH01} for all symmetry types containing either $\ch$ or $\rv$, which only excludes types \symA, \symD, and \symC. This leaves only one type which appears to require an additional condition for decoupling.

\begin{lem}\label{lem:H0110}
	Let $\HH_{01}$ and $\HH_{10}$ be Hilbert spaces of the same finite dimension, and let $(\tph,\trv,\tch)$ be a
	balanced symmetry representation on $\HH=\HH_{01}\oplus\HH_{10}$ of one of the types in \Qref{Tab:sym} such that the relations \eqref{swapH01} are satisfied.
	In the case of type \symAII{}, assume in addition that $\dim\HH_{01}$ is even.
	Then there is an admissible unitary operator $V$ with eigenvalues $\pm i$ such that $V\HH_{01}=\HH_{10}$ and $V\HH_{10}=\HH_{01}$.
\end{lem}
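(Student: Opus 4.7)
The plan is to parameterize candidate operators $V$ and reduce the problem to the existence of a single unitary map $u:\HH_{01}\to\HH_{10}$. An elementary computation shows that any unitary $V$ on $\HH_{01}\oplus\HH_{10}$ which satisfies $V^2=-\idty$ and $V\HH_{01}=\HH_{10}$ is necessarily of the form
$$V = \begin{pmatrix} 0 & -u^* \\ u & 0 \end{pmatrix}$$
for some unitary $u:\HH_{01}\to\HH_{10}$, and conversely every such $u$ yields such a $V$. Under this parameterization, unitarity, the spectral condition $V^2=-\idty$ (so that the spectrum is $\{\pm i\}$) and the swap condition are automatic, so the only remaining task is to choose $u$ so that the admissibility conditions $\tph V\tph^*=V$, $\trv V\trv^*=V^*$, $\tch V\tch^*=V^*$ hold (using only those that are part of the type).

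Translating these admissibility conditions into conditions on $u$ via the block decompositions from \eqref{swapH01}, I find the following. The $\tph$-condition becomes the requirement that $u$ intertwine the antiunitaries $\tph|_{\HH_{01}}$ and $\tph|_{\HH_{10}}$. The $\trv$- and $\tch$-conditions each become a ``twisted-intertwining'' identity relating $u$, $u^*$ and the off-diagonal blocks of $\trv$ or $\tch$, respectively. This reduces the lemma to producing such a $u$ in each of the ten symmetry types, which I would do by explicit case analysis.

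For types without any swap symmetry (\symA, \symC, \symD), only the $\tph$-intertwiner is needed, and one exists because antiunitary operators of equal finite dimension and equal square are unitarily equivalent. For types containing the linear chiral symmetry $\tch$ (\symAIII, \symBDI, \symCI, \symCII, \symDIII), I would construct $u$ from the linear unitary swap block of $\tch$, possibly composed with a further linear operator on one of the two summands; the remaining conditions coming from other symmetries then reduce, via the mutual commutation of $\tph,\trv,\tch$ from \Qref{lem:phases} and the sign relations $\tph^2,\trv^2,\tch^2=\pm\idty$ of the type, to algebraic identities that can be verified directly. For types with $\trv$ but no $\tch$ (\symAI, \symAII), the swap provided by $\trv$ is antilinear, so I would take $u$ as the composition of this antilinear swap with an antilinear unitary $\ell$ on $\HH_{01}$; the $\trv$-condition then collapses to the equation $\ell^2=\pm\idty$, with sign determined by that of $\trv^2$.

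The main obstacle lies in this last subcase. For type \symAII the required $\ell$ must satisfy $\ell^2=-\idty$ on $\HH_{01}$, and such an antilinear unitary exists if and only if $\dim\HH_{01}$ is even, which is exactly the extra hypothesis of the lemma. In all other cases the parity of $\dim\HH_{01}$ is either unconstrained by the construction or is forced to be even by the symmetry data itself: for \symC, \symCI, and \symCII the relation $\tph^2=-\idty$ already implies $\dim\HH_{01}$ is even, and for \symDIII the balance assumption together with \Qref{prop:swapH01}(4) forces $\dim(\HH_{01}\oplus\HH_{10})$ to be divisible by $4$.
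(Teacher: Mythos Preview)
Your approach is correct and closely parallel to the paper's, but organized differently. You parameterize $V$ by a single unitary $u:\HH_{01}\to\HH_{10}$ and then run a case analysis on the symmetry type to produce $u$. The paper instead first \emph{manufactures} a chiral symmetry $\tch$ for the five types that lack one (\symA, \symD, \symC, \symAI, \symAII), thereby reducing those cases to ones with $\tch$ present; for all types with $\tch^2=\idty$ it then takes the single formula $V=A\tch$ with $A=P-Q$ (which in your language is exactly $u=c_0$, the off-diagonal block of $\tch$), and only \symCI\ and \symDIII\ require a hand-built basis construction. Your treatment of \symAI/\symAII\ via $u=t_0\ell$ with $\ell^2=\pm\idty$ is essentially the same as the paper's introduction of $\tch$ with $\tch\chi_\mu^{01}=\trv\chi_\mu^{01}$, and you correctly isolate the \symAII\ parity obstruction. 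The paper's device of adding $\tch$ buys a uniform one-line construction for seven of the ten types; your parameterization is more explicit about what has to be checked but leaves the verifications for \symCI\ and \symDIII\ (where $\tch^2=-\idty$ and $u=c_0$ fails) at the level of ``can be verified directly,'' which is precisely where the paper also resorts to explicit bases. Both routes are sound; neither avoids the small amount of casework at the end.
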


\begin{proof}
	In order to reduce the number of case distinctions, we may introduce a chiral symmetry $\tch$ if there is none (types \symA, \symD,\symC, \symAI, and \symAII). Thereby we merely introduce an additional admissibility constraint on $V$. Such a construction does {\it not} make any claims about a higher symmetry of the underlying walk;  the additional symmetry is confined to the abstract context of this Lemma.
	The new $\tch$ has to be unitary and swap the subspaces $\HH_{01}$ and $\HH_{10}$, and we will choose $\tch^2=\idty$ when possible. In addition, the symmetry representation has to remain balanced to fit the assumptions of the Lemma. This is trivially fulfilled, if we can choose $\tch^2=\idty$, since this, together with the swapping relation for $\HH_{01}$ and $\HH_{10}$, implies $\tr\tch=0$. For \symA, this is the only constraint, for \symD, we just need to choose $\tch$ real with respect to $\tph$. For \symC, choose a basis of the form $\chi^{ij}_1, \tph\chi^{ij}_1, \chi^{ij}_2, \tph\chi^{ij}_2,\dots.\tph\chi^{ij}_n\in\HH_{ij}$ for $i\neq j=0,1$, where $\dim\HH_{ij}=2n$.
	Then set $\tch\chi^{ij}_\mu=\chi^{ji}_\mu$, and extend $\tch$  by the convention that the symmetries commute. Obviously, this gives $\tch^2=\idty$. Finally, for type \symAI{} and \symAII{} choose a basis $\chi^{01}_\mu$ of $\HH_{01}$, and set $\tch\chi^{01}_\mu=\trv\chi^{01}_\mu$. This leads to a symmetry with $\tch^2=\trv^2$, i.e., of type \symBDI{} or \symDIII{}. In the case of \symAII{}, the new symmetry representation of type \symDIII{} is balanced, when we assume $\dim\HH_{01}$ to be even. This is a crucial assumption, as we will see below.
	
	Now consider the cases with $\tch^2=\idty$, and let $A=P-Q$ be the operator from \Qref{sec:2proj}. That is, $A$ is the operator with $1$-eigenspace $\HH_{10}$ and $-1$-eigenspace $\HH_{01}$. It satisfies $\tph A=A\tph$, $\trv A=-A\trv$, and $\tch A=-A\tch$.
	Then we set $V=A\tch$. Because $\tch$ is Hermitian, we have $V^*=\tch A=-V$, so $V^2=-\idty$. Moreover, one easily checks the admissibility conditions.
	The only cases left are the ones with three symmetries and $\tch^2=-\idty$, i.e., \symCI{} and \symDIII, for which we will choose bases to construct $V$ by hand. In both cases $\dim \HH_{01}$ is even, however, this happens for different reasons: for \symCI{} it follows from $\tph^2=-\idty$, since $\tph$ leaves $\HH_{01}$ invariant. In the case of \symDIII{} it is guaranteed by the fact that the symmetry representation is balanced by \Qref{prop:swapH01}, since this implies the dimension of $\HH_{01}\oplus\HH_{10}$ to be an integer multiple of four (see \Qref{sec:indexgroups}).

	For \symCI{}, choose a basis $\{\chi^{01}_\mu\}$ for $\HH_{01}$ in the same way as we did for \symC{}. This also gives a basis for $\HH_{10}$, by setting $\chi^{10}_\mu=\trv\chi^{01}_\mu$.
	An admissible $V$ with $V^2=-\idty$ is then given by $V\chi^{01}_\mu=\chi^{10}_\mu$ and $V\chi^{10}_\mu=-\chi^{01}_\mu$.

	For \symDIII{}, choose an $\tph$ invariant basis of $\HH_{01}$ and set $\chi^{10}=\trv\chi^{01}$. We then define $V\chi^{01}_{2\mu-1}=\chi^{10}_{2\mu}$ and $V\chi^{01}_{2\mu}=-\chi^{10}_{2\mu-1}$, and extend $V$ to $\chi^{10}_{\nu}$ to satisfy $V^2=-\idty$.
	
	Let us finally comment on the problem for \symAII{} if we waive the additional assumption for the dimension of $\HH_{01}$. The problem occurs whenever $\trv$ with $\trv^2=-\idty$ is present:
	$V$, as we construct it here, is a purely off-diagonal matrix with respect to the decomposition $\HH_{01}\oplus\HH_{10}$, thus, $\rank V_i=\dim\HH_{01}$ for the non-zero matrix blocks $V_i$. Now pick a basis $\{\chi^{01}_\mu\}$ for $\HH_{01}$ and set $\chi^{10}_\mu=\trv\chi^{01}_\mu$. The admissibility condition for $V$ then implies $V_i=-V_i^T$ with respect to this basis, which is only possible in even dimensions.
\end{proof}

We can now assemble the decoupling constructions.

\begin{thm}[Gentle Decoupling Theorem]\label{thm:gentdecoup}
	Let $W$ be a walk satisfying the conditions in \Qref{sec:firstsetting}. For symmetry type \symAII{} assume, in addition that the $+1$-eigenspace of $P-WPW^*$ is even dimensional.
	Then there is a continuous path  $t\mapsto W_t$ of admissible unitaries such that $W_0=W$, and $W_1$ commutes with $P$.
\end{thm}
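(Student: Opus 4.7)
The plan is to produce the decoupled walk $W_1=VW$ as a compact admissible perturbation of $W$, and then invoke Lemma~\ref{lem:contract} to realize the homotopy. All of the hard work has been frontloaded into Proposition~\ref{prop:swapH01} and Lemma~\ref{lem:H0110}; what remains is to assemble the pieces and verify the hypothesis of the contraction lemma.

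First I would set $P=\Pg0$, $Q=WPW^*$, and decompose $\HH$ as in \eqref{Hpm}. Essential locality makes $A=P-Q$ compact, so the subspaces $\HH_{01},\HH_{10}$ are finite-dimensional. The essential-gap assumption, via Proposition~\ref{prop:bdf}, forces $\ind(W)=0$, hence $\dim\HH_{01}=\dim\HH_{10}$ by \eqref{oldind1}, so the obstruction to a swap vanishes dimensionally. On the bulk subspace $\HH_{00}\oplus\HH_{11}\oplus\HH_\perp$ I take $\Vcan$ from \eqref{Vcanonical}; Proposition~\ref{prop:swapH01}(3) records that this is a $\trh$-admissible compact perturbation of the identity, is intertwining by construction, and has no eigenvalue $-1$ (indeed, no eigenvalue with non-positive real part, cf.\ Figure~\ref{fig:Xspec}).

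Next I would handle the finite-dimensional swap subspace $\HH_{01}\oplus\HH_{10}$. Proposition~\ref{prop:swapH01}(2) and (4) say the induced representation of $\trh$ on it is balanced and satisfies the swap relations \eqref{swapH01}, so Lemma~\ref{lem:H0110} produces an admissible unitary $V_{01}$ with eigenvalues $\pm i$ which swaps $\HH_{01}$ and $\HH_{10}$. The one case requiring care is \symAII{}, for which Lemma~\ref{lem:H0110} needs $\dim\HH_{01}$ even; but $\HH_{10}$ is exactly the $+1$-eigenspace of $A=P-Q$, which is the hypothesis singled out in the theorem statement. I then set $V=\Vcan\oplus V_{01}$. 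Unitarity, admissibility for $\trh$ and compactness of $V-\idty$ are inherited summand-by-summand, and the intertwining relation $VQ=PV$ holds on the bulk by construction of $\Vcan$ and on $\HH_{01}\oplus\HH_{10}$ because $P|_{\HH_{01}\oplus\HH_{10}}$ and $Q|_{\HH_{01}\oplus\HH_{10}}$ are the projections onto the two summands, which $V_{01}$ exchanges. Consequently $W_1:=VW$ is unitary, admissible for the original symmetry type (via Lemma~\ref{lem:twiddlesym}), and commutes with $P$.

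Finally I would produce the path. Because $\Vcan$ has spectrum strictly in the right half-plane and $V_{01}$ has spectrum $\{\pm i\}$, the operator $V$ has no $-1$-eigenvalue at all, so $\tsix_-(V)=\sixrel VW:W=0$. Lemma~\ref{lem:contract} then supplies a norm-continuous path $t\mapsto V_t$ of $\trh$-admissible unitaries on $\HH_V$ from $\idty$ to $V$; extending each $V_t$ by the identity on $\HH_V^\perp$ and setting $W_t=V_tW$ produces the desired homotopy $W_0=W$, $W_1=VW$ of admissible unitaries. Admissibility of each $W_t$ (in particular the essential gap at $\pm1$) is automatic since every $W_t$ is a compact perturbation of $W$.

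The main obstacle is really concentrated in Lemma~\ref{lem:H0110}: the existence of a balanced, admissible swap $V_{01}$ with spectrum avoiding $-1$. For types containing $\ch$ or $\rv$ this is a short calculation, but for types \symA, \symD, \symC, \symAI, \symAII{} one must temporarily adjoin an artificial chiral symmetry, and in the \symAII{} case no such adjunction is possible without the extra parity hypothesis — which is exactly why that hypothesis appears in the theorem. Once Lemma~\ref{lem:H0110} is in place the rest of the argument is essentially bookkeeping.
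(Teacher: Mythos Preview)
Your proposal is correct and follows essentially the same route as the paper: assemble $V=\Vcan\oplus V_{01}$ from Proposition~\ref{prop:swapH01} and Lemma~\ref{lem:H0110}, check that $V-\idty$ is compact with no $-1$-eigenvalue, and invoke Lemma~\ref{lem:contract}. Your write-up is in fact more explicit than the paper's own short proof, and your handling of the \symAII{} parity condition (identifying $\HH_{10}$ with the $+1$-eigenspace of $P-WPW^*$ and using $\dim\HH_{01}=\dim\HH_{10}$) is correct.
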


\begin{proof}
	We define  $V=\Vcan\oplus V_{01}$ as in \eqref{totalV} with $\Vcan$ after \eqref{Vcanonical}, and $V_{01}$ from \Qref{lem:H0110}.
	On one hand, it satisfies the decoupling condition \eqref{VQPV}. On the other, it has no eigenvalue $-1$, and $V-\idty$ is compact. Hence, by \Qref{lem:contract} we can contract it to the identity, while keeping the admissibility condition.
\end{proof}

% completeness
\section{Completeness of invariants}\label{sec:complete}
For any topological classification in terms of invariants the question of completeness arises. It depends crucially on the respective setting, both the set of objects to be classified and the transformations that generate the equivalence relation. For example, when we find that two translation invariant walks have the same indices we can ask whether one can be deformed into the other while keeping locality, symmetry and gap. But it also makes sense to ask whether the connecting path can even be chosen to be translation invariant. This question will be  addressed in
\cite{UsOnTI}.
One completeness result was already shown for the relative index (\Qref{lem:contract}). In that case only  the compact perturbations of some fixed walk $W_0$ were considered, with norm continuous perturbations respecting symmetry.

In this section we will settle the main completeness issue for the symmetry indices. There are three natural scenarios for this:
\begin{itemize}
\item[(\rom1)] All walks in the sense of \Qref{sec:firstsetting}, with respect to gentle perturbations (\Qref{def:pertsorts}).
   In this case we know the independent invariants $\sixR,\sixL,\six_-$, and the dependent invariant $\six_+=\sixL+\sixR-\six_-$.
\item[(\rom2)] All walks, with respect to both gentle and compact perturbations. In this case we know the invariants $\sixR,\sixL$.
\item[(\rom3)] All unitary operators, which have an essential gap and are admissible for the symmetry, but do not necessarily satisfy any locality condition, with respect to gentle perturbations.
           This is basically the setting of \Qref{sec:groups}.  In this case we know the invariants $\six_+,\six_-$.
\end{itemize}

\noindent The aim of this section is to prove the following:

\begin{thm}\label{thm:complete}
In each of the three scenarios described above, and for all symmetry types from \Qref{Tab:sym} other than \symA, \symAI, and \symAII, the indices indicated are complete.
Moreover, all index combinations can be realized by joining two translation invariant, strictly local walks with a finite crossover region.
\end{thm}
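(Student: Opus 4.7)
My plan is to prove completeness in Scenario~1 by showing that every admissible walk with indices $(l,r,s)$ is gently homotopic to an explicit canonical walk $W(l,r,s)$, built by joining translation invariant strictly local bulks through a finite crossover region. This takes care of realizability and completeness simultaneously. Scenarios~2 and~3 then reduce to Scenario~1 by invoking the compact-perturbation theory of \Qref{sec:locpert} and a direct argument without decoupling, respectively.

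\textbf{Canonical walks and realizability.} For each admissible triple $(l,r,s)$ I would construct $W(l,r,s)$ in two steps. First, pick translation invariant strictly local bulks $W_L^\flat,W_R^\flat$ with $\sixR(W_L^\flat)=-l$ and $\sixR(W_R^\flat)=r$; by taking direct sums of the generating examples of \Qref{sec:examples} and their inverses, any such pair is realizable. Join them with an arbitrary finite interpolation: by \Qref{cor:bulkedgecor} this yields a walk with $\sixL=l$, $\sixR=r$, and a combined $\pm1$-eigenspace of total index $l+r$. Second, to tune $\six_-$ to the prescribed value $s$, append the crossover construction of \Qref{sec:twosettings}: add balanced auxiliary cells on each side whose local action is chosen to have $\pm1$-eigenvalues with chiral indices of specified signs, contributing the required offset to $\six_-$ while leaving $\sixL+\sixR$ invariant.

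\textbf{Reduction to canonical form.} Given an arbitrary admissible walk $W$ with indices $(l,r,s)$, first apply the Gentle Decoupling Theorem (\Qref{thm:gentdecoup}) to pass gently to $W_L'\oplus W_R'$; by \Qref{thm:sixProps}(4) and homotopy invariance $\six(W_L')=l$, $\six(W_R')=r$, and $\six_-(W_L')+\six_-(W_R')=s$. Decouple $W(l,r,s)$ analogously into $W_L^c\oplus W_R^c$. The first obstacle is that the two decouplings may distribute $\six_-$ differently between the halves. I resolve this by a gentle cross-cut deformation modeled precisely on the crossover example of \Qref{sec:twosettings}: a $-1$-eigenvector on one side paired with a $+1$-eigenvector of opposite chiral sign on the other spans a balanced two-dimensional subrepresentation, which can be rotated through admissible gapped unitaries to swap the two eigenvalues across the cut. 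Iterating and re-decoupling (using \Qref{thm:gentdecoup} again if needed) reduces to the case $\six_-(W_L')=\six_-(W_L^c)$, $\six_-(W_R')=\six_-(W_R^c)$, while keeping the three invariants $(l,r,s)$ of the full walk unchanged throughout.

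\textbf{Half-walk completeness and remaining scenarios.} The remaining core step is to show that two admissible half-walks on $\Pg a\HH$ with equal symmetry indices $\six_\pm$ are gently homotopic; I expect this to be the main technical obstacle, since unitarity and essential gap must be preserved throughout the deformation. The argument I envisage is inductive, modifying the half-walk cell by cell from the boundary $a$ outward, exploiting the balancedness of each per-cell representation so that local unitary adjustments in a finite window can be deformed to the canonical profile without creating or annihilating spurious $\pm1$-eigenvalues. This concludes Scenario~1. For Scenario~2, given $(\sixL,\sixR)$ equal in $W_1,W_2$, I would first apply \Qref{thm:locpert} and \Qref{lem:contract} to produce a compact perturbation of $W_1$ shifting $\six_-(W_1)$ to $\six_-(W_2)$, then invoke Scenario~1. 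For Scenario~3, in the absence of a locality condition no decoupling is needed: after a unitary conjugation aligning the finite-dimensional $\pm1$-eigenspaces of $W_1,W_2$ (possible up to balanced discrepancies, which can be contracted through gapped unitaries), the two walks agree on an essentially gapped complement carrying a balanced representation, on which the symmetry-admissible unitaries form a path-connected set.
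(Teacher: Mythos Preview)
Your overall architecture matches the paper's: decouple, match the half-walk index data, and reduce everything to a half-space completeness statement, which is really Scenario~(\rom3) on a half-line. The cross-cut adjustment of $\six_-$ between halves is also essentially the paper's Step~1, though the paper phrases it more cleanly: apply separate compact perturbations to $W_L'$ and $W_R'$ with relative indices $n_L,n_R$ summing to zero, so that by \Qref{lem:contract} the combined perturbation is gentle. This keeps the walk decoupled throughout and avoids your re-decoupling loop.

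The genuine gap is your half-walk completeness argument. A cell-by-cell induction from the boundary outward is not a workable plan: for merely essentially local walks there is no finite interaction length, so a modification localized in one cell is not localized as an operator; and even in the strictly local case there is no clear inductive step that brings a finite window into ``canonical profile'' while preserving unitarity, admissibility, and the essential gap simultaneously. Balancedness of the per-cell representation controls nothing about the global spectrum of $W$. The paper's route is entirely different and deliberately abandons locality once the reduction to (\rom3) is made: flatten the spectrum to $\{\pm1,\pm i\}$ via the continuous functional calculus $W\mapsto p_t(W)$ with real-symmetric $p_t$ (this preserves symmetry, gap, and essential locality automatically); strip balanced subrepresentations from the $\pm1$-eigenspaces; observe that the remaining algebraic relations have a unique irrep on the $\pm i$-part, so two walks with equal $\six_\pm$ become unitarily equivalent via a symmetry-commuting $Z$; finally prove that the group of symmetry-commuting unitaries on a space containing infinitely many balanced summands is norm path-connected. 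Your Scenario~(\rom3) sketch has exactly the same gap in compressed form: the assertion that ``the symmetry-admissible unitaries form a path-connected set'' on the gapped complement \emph{is} the theorem there, and aligning eigenspaces by a symmetry-commuting conjugation already presupposes that such a conjugation can be connected to $\idty$.
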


Let us first comment on the types \symA, \symAI, and \symAII, which we exclude here. These have the characteristic property that they do not contain a symmetry swapping the sign of the Hamiltonian or the imaginary part of the spectral values of a walk. Eigenvalues at $\pm1$ are therefore not protected, and the index group $\igS$ is zero. While in the other cases with trivial index group (\symC, \symCI) the Theorem makes the fairly trivial but correct statement that all such walks can be deformed into each other, for the \symA-types this is actually false. This is shown by the example $W=i\idty$, $W'=-i\idty$, which cannot be deformed into each other while respecting the essential gap. Let $n_+$ be the dimension of the eigenspace of $W$ for $\im z>0$, and $n_-$ for $\im z\leq0$. Then by either continuous or compact perturbations we can make any finite change in these cardinals. Then since $n_++n_-=\infty$, we have exactly three classes, characterized by $n_-<\infty$, $n_+<\infty$ and $n_-=n_+=\infty$. This is the description of all classes in the given three scenarios.

\begin{proof}[Proof of \Qref{thm:complete}:]\ The rest of this section is devoted to proving the Theorem.
We have to show that if $W'$ and $W$ have the same indices they can be linked by a path satisfying the respective conditions of each scenario. We will use this freedom to first simplify both walks.

\noindent{\it Step 1:}\/  In the settings  (\rom1) and (\rom2) a decoupling is helpful. Recall from \Qref{sec:twosettings} that for any decoupling $W_L\oplus W_R$ of a given walk $W$ we can consider the matrix of indices $\six_\pm(W_{L,R})$. What matrices can occur? Equivalently, when the walk is already given as $W=W_L\oplus W_R$, what changes can we make? To this end we introduce compact, but usually non-gentle perturbations
$W'_R$ and $W'_L$ with given relative indices $n_L=\sixrel W'_L:{W_L}$ and $n_R=\sixrel W'_R:{W_R}$, such that
\begin{equation}\label{inxmatplus}
   \sixmat{W'}=\sixmat W+ \left(\begin{array}{cc} -n_L&-n_R\\n_L&n_R\end{array}\right).
\end{equation}
In order to achieve any desired value of $n_L$ and $n_R$, for symmetry types with non-zero index groups, we need to identify a subrepresentation in each half-chain, on which to do the modification. This typically needs a high multiplicity for an irrep with a particular index. Fortunately, there is always an infinite supply of such irreps, because each half-chain is an infinite sum of balanced representations. Therefore, any choice $(n_L,n_R)$ is possible.

Now in scenario (\rom2) this perturbation is itself allowed, so $W'$ is as good a decoupling as $W$. In scenario (\rom1) $W'=W'_L\oplus W'_R$ is equivalent to $W=W_L\oplus W_R$ only if the overall perturbation is gentle, and then  \Qref{lem:contract} has the necessary and sufficient condition
$\sixrel W':W=n_L+n_R=0$. In either case we can prescribe any index matrix consistent with the indices of a given walk and find a decoupling with these indices.

For our task of connecting some walks $W',W$ with equal indices this means that we can choose both in decoupled form so that the indices of the respective left and right half-space walks still coincide.
This reduces the task to connecting half-space walks. Any connection in the sense of scenario (\rom3) will do, because decoupled walks are automatically essentially local, and thus the paths constructed in this way will also satisfy the locality condition. In this sense the locality constraint has disappeared from the problem, leaving only the symmetry and the essential gap condition, i.e., scenario (\rom3).

\noindent{\it Step 2:}\/  Assuming now setting (\rom3) we deform both walks further to simplify the spectrum.  We would need this procedure only in scenario (\rom3), but we state it a bit more generally.

\begin{lem}\label{lem:hcstep2}
In each of the settings {\rm(\rom1), (\rom2), and (\rom3)}, as well as in the purely translation invariant setting, every $W$ can be transformed within that setting to another one, for which
\begin{itemize}
\item[(1)] the spectrum is contained in $\{\pm1,\pm i\}$
\item[(2)] the eigenspaces at $\pm1$ contain no balanced subrepresentations, and are finite dimensional
\item[(3)] the combined eigenspace at $\pm i$ is infinite dimensional.
\end{itemize}
\end{lem}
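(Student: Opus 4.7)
I would split the spectrum of $W$ into three pieces --- small arcs around $+1$ and $-1$, and their circular complement --- and deform each piece separately. By the essential-gap hypothesis, for sufficiently small $\delta>0$ the Riesz spectral projections $Q_\pm$ of $W$ onto $|z\mp1|<\delta$ are of finite rank. The complementary projection $\Pi=\idty-Q_+-Q_-$ commutes with $W$, and $\sigma(\Pi W\Pi)$ is a compact subset of $\Tw\setminus\{\pm1\}$; by the conjugation symmetry of the spectrum forced by $\ph$ or $\ch$, it decomposes into an upper and a lower arc, swapped by any spectrum-reversing symmetry. Each of the three spectral subspaces is invariant or pairwise swapped under the present symmetries, so they can be deformed independently.

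\emph{Step A (functional calculus on $\Pi\HH$).} I would choose a jointly continuous family $g_t\colon\sigma(\Pi W\Pi)\to\Tw$ with $g_0=\mathrm{id}$, $g_t(\bar\lambda)=\overline{g_t(\lambda)}$, and $g_1$ equal to $+i$ on the upper arc and $-i$ on the lower one (e.g.\ the shortest-arc interpolation on each arc). Setting $W^A_t=Q_+WQ_++Q_-WQ_-+g_t(\Pi W\Pi)$ gives a norm continuous path that keeps the essential gap at $\pm1$ (the spectrum on $\Pi\HH$ moves along $\Tw\setminus\{\pm1\}$ towards $\pm i$) and is symmetry admissible, because $g_t(\bar\lambda)=\overline{g_t(\lambda)}$ is exactly the condition for $g_t(W)$ to satisfy the $\ph$- and $\ch$-admissibility relations, while $\rv$-admissibility is automatic for continuous functional calculus on a $\rv$-admissible $W$. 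Essential locality is preserved because the Calkin image of $W^A_t$ is $g_t([W])$, which commutes with $[\Pg a]$ whenever $[W]$ does. In the translation-invariant setting the same deformation is performed fibrewise in quasi-momentum; admissibility there means strict gap, so $Q_\pm=0$ and the construction terminates already after Step A.

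\emph{Step B (finite-dimensional surgery on $Q_\pm\HH$).} Inside the finite-dimensional space $Q_\pm\HH$ I would first rotate each conjugate pair of non-real eigenvalues of $W|_{Q_\pm}$ off of $\pm 1$ along short symmetric arcs onto $\pm i$. After this the remaining $\pm 1$-eigenspace carries a finite-dimensional symmetry representation $\rho^\pm$, which I decompose as $\rho^\pm=\rho^\pm_{\mathrm{ess}}\oplus\beta^\pm$ with $\beta^\pm$ a balanced direct summand of maximal dimension; this is possible because the irrep list of \Qref{sec:irreps} explicitly identifies which direct sums are balanced. On $\beta^\pm$, balancedness provides an admissible gapped unitary $U^\pm$ which I may take with spectrum $\{+i,-i\}$, and an admissible Hamiltonian $A^\pm$ defined as $-i\log(\mp U^\pm)$ using the branch cut along $(-\infty,0]$, so that $\pm\exp(iA^\pm)=U^\pm$. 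The path $V_s=\pm\exp(isA^\pm)$ on $\beta^\pm$ is then a norm continuous family of admissible unitaries from $\pm\idty|_{\beta^\pm}$ to $U^\pm$. These are finite-rank modifications, so essential locality and the essential gap are preserved trivially, and at the endpoint the $\pm 1$-eigenspaces equal exactly $\rho^\pm_{\mathrm{ess}}$, verifying (1)--(3).

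\emph{The hard part} is constructing the admissible logarithm $A^\pm$ and checking that $\pm\exp(isA^\pm)$ remains admissible for every $s\in[0,1]$ across all ten symmetry types of \Qref{Tab:sym}. The key computation is that admissibility of a Hamiltonian ($\sigma H\sigma^*=\pm H$ in the appropriate sign convention) combined with (anti-)unitarity of $\sigma$ translates directly, via the exponential, into $\sigma\exp(iH)\sigma^*$ being either $\exp(iH)$ or $\exp(iH)^*$ as required for $W$-admissibility. Decomposing $\beta^\pm$ into pairs $(\rho,\rho')$ of inverse irreps as in the proof of \Qref{prop:six} makes $A^\pm$ explicit in block-off-diagonal form, reducing the verification to the finite bookkeeping already performed there. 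Concatenating Steps A and B yields the required deformation within each setting: gentle in (\rom1), (\rom3), and the TI setting, and a compact (hence also gentle) perturbation in (\rom2).
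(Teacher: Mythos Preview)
Your proof is correct and follows the paper's strategy, with two differences in packaging worth noting. First, the paper applies a single functional-calculus family $p_t$ to all of $W$ at once, requiring $p_t(\pm1)=\pm1$ and $|\Im\,p_t(z)|\geq|\Im\,z|$; this simultaneously flattens the bulk spectrum and pushes the non-real discrete eigenvalues near $\pm1$ out to $\pm i$, so your separate Riesz splitting into $Q_\pm$ and the ``first rotate each conjugate pair'' sub-step of Step~B become unnecessary. Second, for removing balanced summands from the $\pm1$-eigenspaces the paper simply shifts the eigenvalues of a gapped admissible unitary on $\beta^\pm$ along the circle, whereas you build an explicit exponential path; your version is a concrete realisation of the same idea (note a sign slip: with $A^\pm=-i\log(\mp U^\pm)$ one lands at $-U^\pm$, not $U^\pm$; you want $A^\pm=-i\log(\pm U^\pm)$), and the admissibility check you flag as ``the hard part'' is in fact routine once one observes that a real continuous function of an admissible unitary, or $\exp(isH)$ for an admissible Hamiltonian $H$, is again admissible. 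The genuinely nontrivial point---that essential locality survives continuous functional calculus because $\calk{p_t(W)}=p_t\bigl(\calk W\bigr)$ still commutes with $\calk P$---you have identified correctly.
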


\begin{proof}
(1) The main idea is to use the continuous functional calculus. The admissibility conditions for each symmetry (\Qref{sec:abstypes}) remain true when $W$ is replaced by a Laurent polynomial $p(W)$ in $W$ with real coefficients. By the Weierstra\ss\ Theorem, we conclude that the conditions are also preserved when $p$ is any continuous function on the unit circle such that $\overline{p(z)}=p(\overline{z})$. $p(W)$ is then defined in the functional calculus. Moreover, when $p$ maps the unit circle to itself, leaves $\pm1$ fixed, and satisfies $\abs{\im\,p(z)}\geq\abs{\im\,z}$ we get an admissible unitary $p(W)$ with gap at least as large as that of $W$. Clearly, we can find a norm continuous path $p_t$ connecting $p_0(z)=z$, and some function $p_1$ with $p_1(z)={\rm sign}(\im\,z)i$, whenever $\abs{\im\,z}>\veps$. Here $\veps>0$ is chosen sufficiently small so that no part of the spectrum other than the isolated eigenvalues $\pm1$ lies in the set with $\abs{\im\,z}\leq\veps$. Hence $t\mapsto p_t(W)$ continuously connects $W$ with a unitary $p_1(W)$ with the spectrum indicated.
It is an elementary property of the functional calculus that an operator commuting with $W$ will also commute with $p_t(W)$. Hence in the purely translational invariant setting the deformation will preserve translation invariance. Now suppose $W$ is essentially local, so that the commutator $[P,W]$ with the half-space projector $P$ is compact. This is the same as saying that in the Calkin algebra their images $\calk P$ and $\calk W$ commute. But then
$\calk{p_t(W)}=p_t\bigl(\calk W\bigr)$ so this commutation property is also conserved.

(2) If there is a balanced subrepresentation in one of the (finite dimensional) eigenspaces at $-1$ (resp.~$+1$), we can find a gapped unitary in that representation. All these are continuously connected to $-\idty$ (resp.~$+\idty$) on that eigenspace, by shifting the respective eigenvalues without touching the corresponding eigenspaces.

(3) This is a direct consequence from (1), (2) and the fact, that the essential spectrum of $W$ is non-empty.
\end{proof}

Note that \Qref{lem:hcstep2} (3) is true for each eigenspace at $\pm i$ individually, if we restrict our consideration to symmetry types that either contain $\ph$ or $\ch$, since these symmetries map the eigenspaces onto each other. So for the purpose of proving \Qref{thm:complete}, we have this slightly stronger statement.\\

\noindent{\it Step 3:}\/ Consider now a walk $W$  satisfying the conclusion of \Qref{lem:hcstep2}. Together with the symmetry operators it satisfies a rather simple set of algebraic rules: the relations among the symmetries, the admissibility conditions and $W^4=W^*W=WW^*=\idty$.  It is easy to determine all irreducible representations of these rules. First, there are the irreps with $W=\pm 1$, combined with an irrep of the symmetry, as listed in \Qref{Tab:sym}. Then there is a single irrep for the imaginary eigenspaces of $W$. Indeed, fix any eigenvector $W\phi=i\phi$, and choose it $\rv$-real, if $\rv^2=\idty$. Then act with all available symmetry operators on $\phi$, noting that this gives further eigenvectors of $W$. Thus one gets an invariant subspace, on which the action of all operators is fixed. Note that here we used the exclusion of the \symA-types, since otherwise we would have had to consider $W\phi=-i\phi$ separately.

It follows that the system of symmetries together with a $W$ is completely characterized up to unitary equivalence by the indices $\six_\pm(W)$. Indeed, as is evident from \Qref{Tab:sym}, for a completely unbalanced representation the index determines the representation up to unitary equivalence, and for the unique unbalanced representation we have infinite multiplicity. So considering two walks $W$ and $W'$ with the same index data we can find
a unitary operator $Z$ such that $ZWZ^*=W'$, and $Z\sigma =\sigma Z$ for any of the symmetry operators $\sigma$.

Suppose now that we can find a continuous function $[0,1]\ni t\mapsto Z_t$ with $Z_0=\idty$, $Z_1=Z$ so that $Z_t$ commutes with the symmetries for all $t$. Then  $t\mapsto W_t=Z_tWZ_t^*$ is a continuous path connecting $W$ to $W'$. The admissibility condition is satisfied for all $t$, and so is the essential gap condition, because all $W_t$ have the same spectrum.

\noindent{\it Step 4:}\/ The final step is to show that any symmetry-commuting unitary can be contracted to the identity. This is actually not true in finite dimension, and one can set up a kind of index theory along the lines of \Qref{sec:groups} to describe the connected components. The following Lemma shows that the infinite dimensional case is simpler. This is the only case we need, and hence the following concludes the proof of \Qref{thm:complete}.

\begin{lem}\label{lem:scconnect}
Let $\HH$ be a Hilbert space with a symmetry representation, which contains a direct sum of infinitely many balanced subrepresentations. Then the set of symmetry-commuting unitary operators is path connected in the norm topology.
\end{lem}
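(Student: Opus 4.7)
I would prove the lemma by adapting the classical Kuiper--Eilenberg swindle argument for the contractibility of the norm-topology unitary group to the present equivariant setting. Let $\AA\subset B(\HH)$ denote the norm-closed $\ast$-subalgebra of operators that commute with every symmetry operator of the representation; the symmetry-commuting unitaries are precisely the unitary group $U(\AA)$, and the task is to show that $U(\AA)$ is norm-path-connected.

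The first step is to set up workspace. Since the symmetry type has only finitely many irreps (Table~\ref{Tab:sym}), a pigeonhole argument on the hypothesis lets me arrange a decomposition $\HH=\HH'\oplus\bigoplus_{n\in\Nl}\KK_n$ in which all the $\KK_n$ are mutually isomorphic (as symmetry representations) to a fixed balanced subrepresentation $\KK$, via symmetry-equivariant unitary isomorphisms $J_n\colon\KK\to\KK_n$. For every pair $n\neq m$ and every $\theta\in\Rl$, the identifications then give a ``real block rotation'' $R_\theta^{(n,m)}\in U(\AA)$ that acts on $\KK_n\oplus\KK_m$ as a two-by-two real rotation between $J_n$- and $J_m$-paired vectors and as the identity elsewhere. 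Because the rotation matrix has real entries and the pairings are equivariant, $R_\theta^{(n,m)}$ commutes with all symmetries, including the antiunitary ones irrespective of the sign of their square.

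The second step is the contraction of a given $Z\in U(\AA)$. If the spectrum of $Z$ avoids $-1$, the principal-branch logarithm $\log Z$ is bounded and, by the spectral-mapping identity $\sigma(\log Z)\sigma^*=\log Z$ for every symmetry $\sigma$ (which follows from $\sigma Z\sigma^*=Z$ and the fact that antiunitaries intertwine the spectral projections at $e^{i\theta}$ and $e^{-i\theta}$), it lies in $\AA$, and then $t\mapsto\exp(t\log Z)$ is a continuous path in $U(\AA)$ from $\idty$ to $Z$. When $-1\in\mathrm{spec}(Z)$, in particular when $\mathrm{spec}(Z)$ fills the whole circle, I would open the gap by the Eilenberg--Mazur swindle built on the $R_\theta^{(n,m)}$: using the infinite supply of balanced copies, one shows that $Z$ is homotopic, through a path in $U(\AA)$, to a ``stabilised'' operator of the form $Z\oplus Z^{-1}\oplus Z\oplus Z^{-1}\oplus\dots$ distributed across the $\KK_n$. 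The pairs are then contracted to the identity block-wise via the Whitehead-lemma path
\[
t\;\longmapsto\;R_t^{(n,m)}\,(Z|_{\KK_n}\oplus\idty_{\KK_m})\,R_{-t}^{(n,m)}\,(\idty_{\KK_n}\oplus Z^{-1}|_{\KK_m}),
\]
which starts at $\mathrm{diag}(Z,Z^{-1})$ and ends at $\idty$, and which stays in $U(\AA)$ at every $t$ because $R_t^{(n,m)}$ does.

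The main technical obstacle is the bookkeeping required to keep every intermediate operator inside $U(\AA)$, in particular to choose the $J_n$ and to verify the Eilenberg re-bracketing equivariantly when antiunitary symmetries squaring to $-\idty$ are present (the ``quaternionic'' types in Table~\ref{Tab:sym}). The balancedness of $\KK$ is exactly what allows all these choices: it guarantees that $\KK$ carries a symmetry-admissible structure rich enough to support the equivariant real rotations $R_\theta^{(n,m)}$ no matter which signs occur for $\ph^2$, $\rv^2$, $\ch^2$. In essence, the lemma is the equivariant analog of Kuiper's theorem (and its real and quaternionic versions) on the contractibility of the norm-topology unitary group of an infinite-dimensional Hilbert space, with the infinite direct sum of balanced subrepresentations providing the ``stable infinity'' that makes the swindle go through.
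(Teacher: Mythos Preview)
Your logarithm argument for the case $-1\notin\mathrm{spec}(Z)$ is correct and essentially matches the paper's first move, which contracts the spectrum of $Z$ to $\{\pm1\}$ via the symmetry-respecting functional calculus $Z_t=\int E(d\alpha)\exp(i(1-t)\alpha)$. The real $2{\times}2$ rotations $R_\theta^{(n,m)}$ you introduce are exactly the building block the paper uses as well.

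The gap is in your swindle step. You write that $Z$ is homotopic to ``$Z\oplus Z^{-1}\oplus Z\oplus\cdots$ distributed across the $\KK_n$,'' and your Whitehead path uses $Z|_{\KK_n}$. But $Z$ need not preserve any $\KK_n$, and the Eilenberg--Mazur swindle in the form you invoke requires an equivariant isomorphism $\HH\cong\HH^{\oplus\infty}$, not merely the presence of infinitely many (possibly small, finite-dimensional) balanced summands $\KK_n$ inside $\HH$. These are different hypotheses, and nothing in your outline bridges them: you never explain how to make $Z$ respect the decomposition $\HH=\HH'\oplus\bigoplus_n\KK_n$, nor how to cope with the piece of $Z$ living on $\HH'$.

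The paper avoids this entirely. After contracting to $Z=\idty-2P_-$, it works irrep by irrep inside the (finite- or infinite-dimensional) $-1$-eigenspace $P_-\HH$: whenever an irrep occurs there with even or infinite multiplicity, pair copies and rotate them to $+1$ via your $R_\theta$; when the multiplicity is finite and odd, first borrow one copy from the $+1$-eigenspace (which contains every irrep with infinite multiplicity, since each non-zero balanced subrepresentation contains every irrep of the symmetry type) and then pair. This is elementary and uses only the weak hypothesis actually available. If you want to salvage the Kuiper route, the honest way is to observe that every irrep occurs in $\HH$ with infinite multiplicity, identify the commutant $\AA$ with a finite product of $B(\HH_i)$'s carrying real, complex, or quaternionic structure, and invoke the corresponding Kuiper theorem on each factor; but that is a different argument from the swindle you sketched.
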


\begin{proof}
(1) Consider the spectral resolution of a symmetry-commuting $Z$, which we can write as
\begin{equation}\label{S-spectral}
  Z=-P_- + \int_{-\pi}^\pi\!\! E(d\alpha)\ e^{i\alpha},
\end{equation}
where $P_-$ is the projection onto the $-1$-eigenspace of $Z$ with a spectral measure $E$ on the complement of $P_-\HH$, which is normalized to $\idty-P_-$. The commutation with the symmetries means that $\ch$ commutes with each spectral projection, $\ph E(A)\ph^*=E(-A)$ for every measurable subset $A\subset(-\pi,\pi)$, and a similar condition for $\rv$. The family $Z_t=\int E(d\alpha)\exp(i(1-t)\alpha)$ for $t\in[0,1]$ is then a continuous symmetry-commuting contraction of $Z$ to $\idty-2P_-$. Hence we can assume $Z$ to be of this form

(2) Consider a Hilbert subspace $\HH_2\subset P_-\HH$, which can be decomposed into two subspaces on which the symmetries act in the same way, i.e., $\HH_2\cong\HH_1\oplus\HH_1$ with symmetry representation $\rho_2\cong\rho_1\oplus\rho_1$. Then with respect to this direct sum we define
\begin{equation}\label{Scontract}
  Z_t=\left(\begin{array}{cc} \cos(\pi t)\idty&-\sin(\pi t)\idty\\\sin(\pi t)\idty&\cos(\pi t)\idty\end{array}\right),
\end{equation}
where $\idty$ denotes the identity in $\HH_1$. This commutes with the symmetries, and agrees with $Z$ for $t=1$ (since $\HH_2\subset P_-\HH$) and with $\idty_\HH$ for $t=0$.

(3) Now consider some irrep of the symmetry. If it is contained in $P_-\HH$ with either infinite multiplicity or finite even multiplicity, we can use a process as in (2) to transfer the respective subrepresentation to the $+1$-eigenspace of $Z$. In the case of a finite odd multiplicity, observe that the $+1$-eigenspace must contain the representation with infinite multiplicity. Therefore, we can reverse the process (2) to make the multiplicity in $P_-\HH$ infinite. Then we apply the argument in the previous sentence.
\end{proof}

To conclude, let us give a short summary of the rather lengthy proof: in the first step, we reduced the scenarios ($\rom1$) and ($\rom2$) to ($\rom3$) via decoupling. Since any decoupled operator is essentially local, this allows us to ignore the locality condition.
In the next step we transformed the spectrum of the operators under consideration to $\{\pm1,\pm i\}$ without leaving scenario ($\rom3$).
We then finished the proof by showing that the class of operators obtained in this way is simply connected.
\end{proof}

% finite systems
\section{Finite systems}\label{sec:finite}

In this section, we discuss implications of our theory for spatially finite systems. Of course, real physical systems are finite, so a theory requiring the system to be infinite is strictly speaking empirically vacuous. On the other hand, the infinite system may be a convenient idealization without which the physical property under consideration could not be sharply defined. Prominent examples are the theory of phase transitions in Statistical Mechanics, and the characterization of propagation behaviour in terms of spectral properties by the RAGE Theorem \cite{Last}. In these cases the infinite system can be approached through a sequence of increasing finite ones, along which ``more of the same'' is added, which is an implicit appeal to translation invariance. Now we have emphasized that our theory does not require any translation invariance, so one may well wonder about the connection to a possible finite version.

To begin with, it is clear that without further structure our theory says very little about finite systems: the index $\six(W)$ on a finite set of cells vanishes identically, and hence does not even depend on $W$. The indices $\sixR(W)$ and $\sixL(W)$, which are defined also for essentially unitaries, are zero, because every $W$ can be contracted to the essentially unitary operator $0$. Only $\six_-(W)$ give some homotopy information. Even the translation invariant theory trivializes: when quasi-momentum becomes discretized, the winding number of a curve parametrized by quasi-momentum, or a Berry phase make no sense.

On the other hand, predictions of the theory like the topologically protected eigenvalues between distinct phases can be easily tested numerically, that is, in a finite system. There is a characteristic modification, however: the eigenvalues will now not appear exactly at $\pm1$, but very close to these values. One still sees some residual topological stability, as the phenomenon is independent on how the crossover between the different phases is constructed. Of course, it cannot be completely independent, since the ``crossover region'' could grow to be comparable to the size of the system. What counts in the end is that the crossover region is well padded with bulk regions. The {\dff bulk} systems must be described in a way that it makes sense to add more and more of the same. In the limit, the infinite system can then also be assigned indices in our theory. There are probably many ways to set up a suitable notion of bulk systems for which this vague description makes sense, i.e., conclusions about finite systems can be made. We will therefore introduce now the basic tool for such conclusions, and will come back to the suitable notions of bulk afterwards.

The mechanism for rigorous conclusions about boundary eigenvalues is described in the following Lemma, which is an adapted version of a Lemma due to Temple and Kato \cite{Simon1985}. While the original result gives a lower bound on the number of eigenvalues in an interval around an approximate eigenvalue and corresponding orthonormal approximate eigenvectors, we need a similar statement for only approximately orthogonal approximate eigenvectors.

\begin{lem}[Temple-Kato]\label{lem:TempleKato}
Let $U$ be a normal operator, $\theta\in\Cx$, and $\{\phi_\ell\}_{\ell=1}^K$ a set of vectors satisfying, for $\ell,k=1,\ldots,K$,
\begin{enumerate}
  \item $\abs{\braket{\phi_\ell}{\phi_k}-\delta_{\ell k}}\leq \veps_1<\frac1K$ and
  \item $\norm{\left(U-\theta\right)\phi_\ell}\leq \veps_2$.
\end{enumerate}
Then the spectral projection of $U$ for the disk around $\theta$ with radius $r$ has dimension at least $K$, provided
\begin{equation}\label{templeradius}
   r> \frac{K\veps_2}{\sqrt{1-K\veps_1}}.
\end{equation}
\end{lem}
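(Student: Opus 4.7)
My plan is to show that the spectral projection $P_r$ of $U$ onto the closed disk $D_r(\theta)=\{z\in\Cx:|z-\theta|\le r\}$ has rank at least $K$ by exhibiting $K$ linearly independent vectors in its range, namely $\alpha_\ell:=P_r\phi_\ell$ for $\ell=1,\dots,K$. First I would use normality of $U$ to split $\phi_\ell=\alpha_\ell+\beta_\ell$ with $\beta_\ell=(\idty-P_r)\phi_\ell$. Since $P_r$ commutes with $U-\theta$, the images $(U-\theta)\alpha_\ell$ and $(U-\theta)\beta_\ell$ lie in the orthogonal spectral subspaces $P_r\HH$ and $(\idty-P_r)\HH$, respectively, so $\norm{(U-\theta)\beta_\ell}^2+\norm{(U-\theta)\alpha_\ell}^2=\norm{(U-\theta)\phi_\ell}^2\le\veps_2^2$. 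On the complement, the functional calculus of the normal operator $U-\theta$ yields $\norm{(U-\theta)\beta_\ell}\ge r\norm{\beta_\ell}$, and combining the two bounds gives $\norm{\beta_\ell}\le\veps_2/r$, and in particular $|\braket{\beta_\ell}{\beta_k}|\le\veps_2^2/r^2$.

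Next I would analyse the Gram matrix $G$ of the $\alpha_\ell$. Using $\alpha_\ell\perp\beta_k$ for all $\ell,k$, one has $\braket{\alpha_\ell}{\alpha_k}=\braket{\phi_\ell}{\phi_k}-\braket{\beta_\ell}{\beta_k}$, so $|G_{\ell k}-\delta_{\ell k}|\le\veps_1+\veps_2^2/r^2$. A Schur or Frobenius estimate then bounds the operator norm of the $K\times K$ perturbation by $\norm{G-\idty}\le K(\veps_1+\veps_2^2/r^2)$. Under the hypothesis $r>K\veps_2/\sqrt{1-K\veps_1}$ one gets $r^2>K^2\veps_2^2/(1-K\veps_1)\ge K\veps_2^2/(1-K\veps_1)$, hence $K\veps_1+K\veps_2^2/r^2<1$, so $G$ is strictly positive. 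Therefore $\{\alpha_1,\dots,\alpha_K\}\subset P_r\HH$ is linearly independent and $\dim P_r\HH\ge K$, which is the desired conclusion.

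The only real subtlety is the orthogonal decomposition in the first step, which depends crucially on normality of $U$: it is what guarantees that $P_r$ commutes with $U-\theta$ and with $(U-\theta)^*(U-\theta)$, allowing the Pythagorean identity $\norm{(U-\theta)\phi_\ell}^2=\norm{(U-\theta)\alpha_\ell}^2+\norm{(U-\theta)\beta_\ell}^2$. Everything else is elementary linear algebra. It is worth noting that the stated hypothesis is somewhat stronger than the proof actually needs: the bound $r>\sqrt{K}\veps_2/\sqrt{1-K\veps_1}$ already suffices to make $G$ positive definite, so the formulation in the lemma leaves some slack, presumably for cosmetic convenience in the subsequent application.
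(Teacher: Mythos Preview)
Your proof is correct, but it follows a genuinely different route from the paper's. The paper works entirely in the span $\KK$ of the original vectors $\phi_\ell$: it first bounds the Gram matrix of the $\phi_\ell$ from below by $(1-K\veps_1)\idty$, then uses the dual basis $\widetilde\phi_\ell$ to show that \emph{every} $\psi\in\KK$ satisfies $\norm{(U-\theta)\psi}\le K\veps_2(1-K\veps_1)^{-1/2}\norm\psi$, and finally argues by contradiction (a $\psi\in\KK$ orthogonal to the spectral subspace would violate this bound). You instead project the $\phi_\ell$ onto the spectral subspace, control the tails $\beta_\ell$ via the functional calculus, and then study the Gram matrix of the projected vectors $\alpha_\ell$ directly.

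Your approach is arguably more direct and, as you observed, yields the sharper radius $r>\sqrt K\,\veps_2/\sqrt{1-K\veps_1}$; the paper's extra factor of $\sqrt K$ comes from applying the triangle inequality to the sum $\sum_\ell(U-\theta)\phi_\ell\braket{\widetilde\phi_\ell}{\psi}$. The paper's argument, on the other hand, stays closer to the classical Temple--Kato philosophy of bounding $(U-\theta)$ uniformly on a $K$-dimensional trial space and is perhaps slightly more transparent about where normality enters (only in the final spectral-measure estimate). Both are valid; your version would in fact give a marginally stronger lemma.
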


\begin{proof}\def\dbas#1{\widetilde{\phi}_{#1}}
We first show that the set $\{\phi_\ell\}$ is linearly independent which is equivalent to the Gram matrix $G_{k\ell}= \braket{\phi_\ell}{\phi_k}$ being non-singular. To get a lower bound on this positive semidefinite matrix, let $\eta\in\Cx^K$ be a unit vector. Then using the estimates $G_{\ell\ell}\geq 1-\veps_1$ and $\abs{G_{k_\ell}}\leq\veps_1$ for $k\neq\ell$ we find
\begin{eqnarray}
  \braket\eta{G\eta}&=& \sum_\ell G_{\ell\ell}\abs{\eta_\ell}^2+ 2\sum_{k<\ell}\re\bigl( \overline{\eta_k}G_{k\ell}\eta_\ell \bigr)\nonumber\\
      &\geq&(1-\veps_1)-2\veps_1\sum_{k<\ell}\abs{\eta_k}\,\abs{\eta_\ell}
          =(1-\veps_1)-\veps_1\Bigl(\bigl(\sum_{k}\abs{\eta_k}\bigl)^2-\sum_k\abs{\eta_k}^2\Bigr)\nonumber\\
      &\geq& (1-K\veps_1),
\end{eqnarray}
where in the last step we used that the unit vector $\eta$ minimizing this expression has $\abs{\eta_\ell}=1/\sqrt K$. Hence $G$ is invertible with $\norm{G^{-1}}\leq (1-K\veps_1)^{-1}$.

Now consider the span of the $\phi_\ell$, $\KK$, and the same dual basis $\dbas\ell\in\KK$, which is defined by $\braket{\dbas\ell}{\phi_k}=\delta_{\ell k}$. One needs to compute the coefficients for expanding an arbitrary vector
$\psi$ in the $\phi_\ell$, summarized as $P_\KK=\sum_\ell\ketbra{\phi_\ell}{\dbas\ell}$. One readily verifies that $\dbas\ell=\sum_k(G^{-1})_{\ell k}\phi_k$ such that the Gram matrix of the dual basis is the inverse $\braket{\dbas\ell}{\dbas k}= (G^{-1})_{\ell k}$. In particular, $\norm{\dbas\ell}^2\leq\norm{G^{-1}}\leq (1-K\veps_1)^{-1}$. By this we can show that all $\psi\in\KK$ are nearly $\theta$-eigenvalues:
\begin{equation}\label{neartemple}
 \norm{(U-\theta)\psi}=\bigl\Vert\sum_\ell(U-\theta)\ket{\phi_\ell}\brAket{\dbas\ell}\psi \bigr\Vert
                     \leq \sum_\ell\norm{(U-\theta)\phi_\ell} \norm{\dbas\ell}\norm\psi
                     \leq K \veps_2 (1-K\veps_1)^{-1/2}\norm\psi
\end{equation}

Finally, suppose the eigenspace of $U$ for the disk of radius $r$ had dimension $<K$. Then we could find a vector $\psi\in\KK$ orthogonal to it. For such a vector we find with the spectral resolution $E$ of $U$:
\begin{equation}\label{notintemple}
  \norm{(U-\theta)\psi}^2=\int\braket\psi{E(dz)\psi}\abs{z-\theta}^2\geq r^2\int\braket\psi{E(dz)\psi}=r^2\norm{\psi}^2,
\end{equation}
where the inequality follows, because by assumption the spectral measure $\braket\psi{E(dz)\psi}$ vanishes on the disk $\abs{z-\theta}\leq r$. Combining \eqref{notintemple} with \eqref{neartemple} we get a contradiction, when $r$ is chosen as stated in the Lemma.
\end{proof}

In order to see how to apply this Lemma, suppose that we have two distinct bulk systems A and B with some crossover C, embedded as ACB inside a large but finite system. We compare this with a larger system with more of the bulk A and more of the bulk B added, and in the limit with a system with infinite bulks of type A$'$ and B$'$, and still the same crossover region A$'$CB$'$. Suppose that in the infinite system our theory predicts $K$ protected eigenvalues at $\theta=+1$. Then the corresponding orthonormal eigenfunctions $\psi_\ell$ will decay away from the crossover region. We truncate these to the original region ACB, adjust normalization and call these functions $\phi_\ell=P_{\rm ACB}\psi_\ell$, $\ell=1,\ldots k$. Now everything depends on the decay of the infinite volume eigenfunctions: if this is strong enough, the truncated functions remain almost orthonormal so the $\veps_1$ of the Lemma is small. By the same token (and because $U$ of the Lemma, in our case the walk $W$, is bounded), they will still be almost eigenfunctions, so $\veps_2$ is small. By the Lemma we can therefore conclude that there will be at least $K$ eigenvalues in a disk of a small radius $r$ around $1$. This will be true of {\it any} system containing the $ACB$ piece, and the eigenfunctions will be localized near $C$, just as they are in the infinite system.

\begin{figure}[t]
\begin{tikzpicture}
		\node[name=a] {\includegraphics{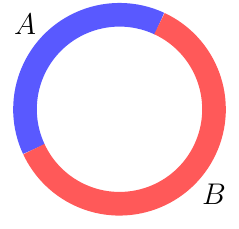}};
		\node[right of=a,node distance=6.5cm] (b) {\includegraphics{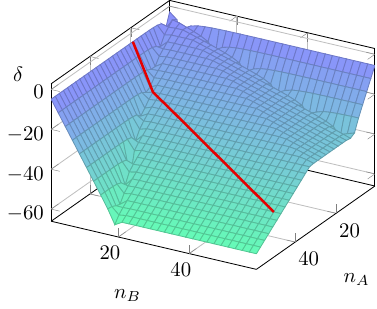}};
\end{tikzpicture}
	\caption{\label{fig:phase-circle}		
		Deviation of boundary eigenvalues from $\pm 1$ for a walk with two different bulk-phases of finite length on the circle.
		\emph{Left:} geometry of the setting. We consider the \emph{split-step-walk} $W=S_\downarrow R(\theta_2)S_\uparrow R(\theta_1)$ from \Qref{sec:splitss} on a circle with different parameters for regions $A$ and $B$. The parameters are $A:(\theta_1,\theta_2)=(-\pi/4,2\pi/16)$ with $\sixR(W_A)=-1$, $B:(\theta_1,\theta_2)=(\pi/4,3\pi/16)$ with $\sixR(W_B)=1$. The lengths of the regions $A$ and $B$ are denoted by $n_A$ and $n_B$, respectively.
		\emph{Right:} Plot of the deviation, measured by $\delta=\log\bigl(1-\max_i|\re(\lambda_i)|\bigr)$, where $\lambda_i$ are the eigenvalues of the combined walk $W_{AB}$, for different $n_A,n_B$. Red: A line parameterized as $(n_A,n_B)=(19+r,1+2r)$, $r=1,2,\ldots$. 
		}
\end{figure}

Note that we have left the notion of ``bulk'' rather vague here. Let us take this to mean ``translation invariant system'' for the moment. For the argument via \Qref{lem:TempleKato} to work we then need
\begin{enumerate}
\item A gapped system, so the infinite volume theory applies
\item Strict locality, or at least sufficient decay of the jump amplitudes in $W$ with distance, so that truncated eigenfunctions still give good approximate eigenfunctions.
\item Decay of the solutions of $W\psi=\pm\psi$, when read as a difference equation for $\psi:\Ir\to\HH_1$. Note that this is a linear difference equation, which is generally solved by exponential functions
      $\psi(x)=\exp(\pm\lambda x)\psi_0$. The smallest $\abs\lambda$ appearing here (for either eigenvalue) is called the inverse bulk correlation length $\lambda_0$.
\end{enumerate}
In a half-infinite bulk region only exponentially decaying eigenfunctions can occur, because they are the solution of a finite order difference equation. Hence the truncation to a finite length $L$ will introduce an error $\sim\exp(-\lambda_0L)$.  Hence if the size of all bulk pieces in a finite system is several correlation lengths of the respective infinite bulk system, the predictions of the infinite volume theory apply to every crossover region in the system. An example of this phenomenon is given in \Qref{fig:phase-circle}. It shows that the predicted eigenvalues are exponentially close to $\pm1$ over 60 orders of magnitude. Moreover, the decay rates depend on the adjoining phases, so that for a linear growth of both phases one gets a crossover behaviour.

It is clear that similar arguments apply to some less restrictive interpretations of ``bulk''. For example, we can allow disorder in the form of random local perturbations of a translation invariant system, which are norm small compared to the gap of the homogeneous system. The exponential decay of solutions is enhanced in this case by Anderson localization \cite{dynloc}. The critical part of the argument is the essential gap condition. In \Qref{sec:ssArbitrary} we considered arbitrarily varying angles. The boundaries for the allowed intervals (\Qref{fig:ssrectangles}) need to be in one and the same phase, for otherwise infinitely many approximate phase boundaries would be allowed, thus closing the essential gap. The same happens (with probability 1) for a disordered system, whenever the support of the probability measure for the angles $\theta_i(x)$ intersects two different phase regions.

% conclusions
\section{Conclusions}

We gave a complete homotopy characterization of essentially gapped one-dimensional quantum walks with discrete symmetries. In doing this we were careful to develop the most general setting in which these results live naturally. In particular, we made no assumptions of translation invariance, and found a locality condition, which is arguably the weakest sensible one, also in the translation invariant case.  The classifying index groups of the symmetry types were determined by an elementary group theoretical construction not using K-theory. The main new feature of the walk case compared to the Hamiltonian case is the distinction between gentle and non-gentle local (and, generally, compact) perturbations. This leads to the appearance of one additional invariant.

% outlook
\section{Outlook}\label{sec:OverAndOut}
The following are possible directions for continuing the research reported here.
\begin{itemize}
\item {\it Higher lattice dimension}\\
  In the Hamiltonian case the heuristic literature also suggests a fairly detailed picture. The translation invariant case again leads to a classification of vector bundles over the Brillouin zone, and a now famous periodic table of index groups \cite{kitaevPeriodic}. At the interface of distinct bulk phases one now expects edge modes. However, a rigorous setting which handles arbitrary translation invariant walks of general symmetry types does not seem to exist, certainly not in the unitary case. Even the analog of \Qref{sec:old-index} seems to be open. A good direction might be the K-theory for non-commutative C*-algebras as in \cite{SchulzNonTechnicalOverview,Roe}.
\item {\it Interacting systems/QCAs}\\
  Walks correspond to a one-particle theory or, upon second quantization, to a theory of many non-interacting particles. So what happens when we turn on an interaction? Similarly, we may look at a situation where every cell may hold a particle, so the overall Hilbert space is a tensor product rather than a direct sum of the individual cells, i.e., we have quantum cellular automaton (QCA). The analog of the index theory described here in  \Qref{sec:old-index} has been established also for the QCA case \cite{OldIndex}, with positive rational rather than integer values of the index.
\item {\it Schur-function approach}\\
  In the course of this work we initially relied  strongly on the possibility to reduce the determination of eigenvalues in terms of a finite dimensional problem provided by the theory of (matrix valued) Schur functions \cite{QDapproach}. This method still has some detailed statements to offer, which we will explore in a future publication.
\end{itemize}

% acknowledgements
\section*{Acknowledgements}
C. Cedzich, T. Geib, C. Stahl and R. F. Werner acknowledge support from the ERC grant DQSIM, the DFG SFB 1227 DQmat, and the European project SIQS.

The work of L. Vel\'azquez is partially supported by the research project MTM2014-53963-P from the Ministry of Science and Innovation of Spain and the European Regional Development Fund (ERDF), and by Project E-64 of Diputaci\'on General de Arag\'on (Spain).

A. H. Werner thanks the Humboldt Foundation for its support with a Feodor Lynen Fellowship and the VILLUM FONDEN via the QMATH Centre of Excellence (Grant No. 10059).

\bibliography{tphbib}

%\todoin{comment out the package todonotes?}
%\listoftodos

\end{document}